\documentclass[runningheads]{llncs}
\usepackage[T1]{fontenc}
\usepackage[utf8]{inputenc}
\usepackage{graphicx}
\usepackage[bibliography=common]{apxproof}

\usepackage{amssymb}
\usepackage{graphicx}
\usepackage{subcaption}
\usepackage{hyperref}
\usepackage{amsmath}
\usepackage{comment}
\usepackage{breqn}
\usepackage{cleveref} 
\crefname{algocf}{algorithm}{algorithms}
\Crefname{algocf}{Algorithm}{Algorithms}
\usepackage[ruled, linesnumbered]{algorithm2e}
\usepackage{booktabs, longtable, multicol, adjustbox}
\usepackage{tabularx}
\usepackage{todonotes}

\usepackage{geometry}
\usepackage{soul, xcolor}	
\newcommand{\dk}[1]{\textcolor{violet}{#1}}
\newcommand{\dkb}[1]{\dk{(DKB says: #1)}}

\newcommand{\er}[1]{\textcolor{blue}{#1}}
\newcommand{\erel}[1]{\er{(Erel says: #1)}}

\newcommand{\kbp}{$k\text{BP }$}

\newtheoremrep{theorem}{Theorem}[section]
\newtheoremrep{lemma}{Lemma}[section]
\newtheoremrep{corollary}[section]{Corollary}

\theoremstyle{definition}
\newtheoremrep{definition}{Definition}[section]
\newtheoremrep{example}{Example}[section]
\newtheoremrep{remark}{Remark}[section]

\begin{document}

    \begin{toappendix}
        \renewcommand{\thesection}{\Alph{section}}
        \renewcommand{\thesubsection}{\Alph{section}.\arabic{subsection}}
        \renewcommand{\thesubsubsection}{\Alph{section}.\arabic{subsection}.\arabic{subsubsection}}
    \end{toappendix}

\title{Time and Supply Fairness in Electricity Distribution using \texorpdfstring{$k$}{k}-times bin packing \thanks{This is an extended version of our paper presented at SAGT-2024 (\url{https://link.springer.com/chapter/10.1007/978-3-031-71033-9_27}). 
This version corrects a bug present in the accepted SAGT paper that affected \Cref{subsec: results-egal-connection-time}.
In addition, it adds the following over the conference version of the paper 
We give a lower bound on $k$ that depends on the number of households (see \Cref{subsec: lower bound on k} );
In \Cref{sec: Egalitarian allocation of supply}, we study another version of the egalitarian allocation, that is, the egalitarian allocation of watts problem. In this direction, we present impossibility results that state that there does not exist a finite $k$ that depends on the number of households. In addition, we develop four heuristic algorithms to solve the egalitarian allocation of watts problem.
}
}

%
%
\author{Dinesh Kumar Baghel\inst{1,2}
	\orcidID{0000-0001-8518-7908} 
	\and
	Alex Ravsky\inst{3}
	\and
	Erel Segal-Halevi\inst{1}
	\orcidID{0000-0002-7497-5834}
}
%
%
\institute{Ariel University, Ariel 40700, Israel \\
	\email{\{dinkubag21, erelsgl\}@gmail.com}\\
	\and
    UPES, Dehradun, Uttarakhand, India
    \and
	Pidstryhach Institute for Applied Problems of Mechanics and Mathematics of National Academy of Sciences of Ukraine, Lviv, Ukraine \\
	\email{alexander.ravsky@uni-wuerzburg.de}
    }
\maketitle             

\begin{abstract}
Given items of different sizes and a fixed bin capacity, the bin-packing problem is to pack these items into a minimum number of bins such that the sum of item
sizes in a bin does not exceed the capacity. 
We define a new variant called \emph{$k$-times bin-packing ($k$BP)}, where the goal is to pack the items such that each item appears exactly $k$ times, in $k$ different bins. 
We generalize some existing approximation algorithms for bin-packing to solve $k$BP, and analyze their performance ratio.
	
The study of $k$BP is motivated by the problem of \emph{fair electricity distribution}. In many developing countries, the total electricity demand is higher than the supply capacity. 
Our goal is to allocate the available supply among the households according to the egalitarian principle, which aims to maximize the smallest amount of time a household is connected to electricity.
    
We prove that every electricity division problem can be solved by $k$-times bin-packing for some finite $k$, that depends only on the number of households.
We also show that $k$-times bin-packing can be used in practice to distribute the electricity in a fair and efficient way. 
Particularly, we implement generalizations of the First-Fit and First-Fit Decreasing bin-packing algorithms to solve $k$BP, and apply the generalizations to real electricity demand data. We show that our generalizations outperform existing heuristic solutions to the same problem in terms of the egalitarian allocation of connection time.  
    
We then study another variant of the egalitarian allocation problem, in which the goal is to maximize the smallest amount of watts allocated to a household. For this variant, we prove an impossibility result: there does not exist such a $k$ that depends only on the number of agents.
This impossibility result motivates us to develop four different heuristic algorithms to solve the egalitarian allocation of watts problem. We evaluate the heuristics using the sum of the minimum watts allocated to any household in each hour, providing a fairness metric that reflects the lowest watt allocation in all hours.
A higher total minimum of watts indicates a more equitable distribution.
Thus we establish new benchmarks for fair allocation of watts.

	
\keywords{Approximation algorithms  \and bin-packing \and First-Fit \and First-Fit Decreasing \and Next-Fit \and fair division \and Karmarkar-Karp algorithms \and Fernandez de la Vega-Lueker algorithm \and electricity distribution \and utilitarian metric \and egalitarian metric \and utility difference.}
\end{abstract}

\section{Introduction} \label{sec:introduction}
This work is motivated by the problem of \emph{fair electricity distribution}. 
In developing countries, the demand for electricity often surpasses the available supply \cite{Kaygusuz2012}.  
Such countries have to come up with a fair and efficient method of allocating of the available electricity among the households.

Formally, we consider a power-station that produces a fixed supply $S$ of electricity. 
The station should provide electricity to $n$ households. The demands of the households in a given period are given by a (multi)set $D$. Typically, $\sum_i D[i] > S$ (where $D[i]$ is the electricity demand of a household $i$), 
so it is not possible to connect all households simultaneously.

\paragraph{Egalitarian allocation of connection time:}
In the basic variant of the problem, the goal is to ensure that each household is connected the same amount of time, and that this amount is as large as possible.
We assume that an agent gains utility only if the requested demand is fulfilled; otherwise it is zero. Practically it can be understood as follows: Suppose at some time, a household $i$ is running some activity that requires $D[i]$ kilowatt of electricity to operate; in the absence of that amount, the activity will not function. 
Therefore, an allocation where demands are fractionally fulfilled is not relevant. 
We also assume that an agent's utility does not increase for receiving electricity more than his/her demand, as large batteries for storing a substantial amount of electricity are still expensive and not very common in developing countries.

While it is not necessary that agents consume electricity equal to their demand. In addition, some agents may consume more electricity than their demand and some agents consume less. It may lead to the cancellation of under and excess electricity consumption. 
We also assume that any uncertainty in demand is met by emergency power by the grid. Such assumptions have been made in \cite{oluwasuji_solving_2020}.
These assumptions are necessary to cover uncertainties in the demand consumption by some agents.

A simple approach to this problem is to partition the households into some $q$ subsets such that the sum of demands in each subset is at most $S$, and then connect the agents in each subset for a fraction $1/q$ of the time. To maximize the amount of time each agent is connected, we have to minimize $q$. This problem is equivalent to the classic problem of \emph{bin-packing}. In this problem, we are given some $n$ items, of sizes given by a multiset of positive numbers $D$, and a positive number $S$ representing the capacity of a bin. The goal is to pack items in $D$ in the smallest possible number of bins, so that the sum of the item sizes in each bin is at most $S$. The problem is NP-complete \cite{10.5555/574848}, but it has many efficient approximation algorithms.

However, even an optimal solution to the bin-packing problem may provide a suboptimal solution to the electricity division problem. As an example, suppose that we have three households $x,y,z$ with demands $2,1,1$, respectively, and an electricity supply $S = 3$. Then, the optimal bin-packing results in $2$ bins, for instance, $\{x,y\}$ and $\{z\}$. This means that each agent would be connected $1/2$ of the time. However, it is possible to connect each agents for $2/3$ of the time, by connecting each pair $\{x,y\}, \{x,z\}, \{y,z\}$ for $1/3$ of the time,
as each agent appears in $2$ different subsets.
More generally, suppose that we construct $q$ subsets of agents, such that each agent appears in exactly $k$ different subsets. Then we can connect each subset for $1/q$ of the time, and each agent will be connected for $k/q$ of the time.

\subsection{The \texorpdfstring{$k$}{k}-times bin-packing problem} 
\label{sec:intro subsec:kbp}
To study this problem more abstractly, we define the \emph{$k$-times bin-packing problem (or \kbp)}. 
The input to \kbp{} is a set of $n$ items of sizes given by a multiset $D$,
a positive number $S$ representing the capacity of a bin, and an integer $k\geq 1$.
The goal is to pack items in $D$ in the smallest possible number of bins, such that the sum of the item sizes in each bin is at most $S$, and each item appears in  $k$ different bins, where each item occurs at most once in a bin.
In the above example, $k=2$. 
It is easy to see that, in the above example, $2$-times bin-packing yields the optimal solution to the electricity division problem.

Our first main contribution \textbf{(\Cref{section:existence of finite k for kBP})} is to prove ,that for every electricity division problem, there exists some finite $k$ for which the optimal solution to the \kbp{} problem yields the optimal solution to the electricity division problem. Moreover, we give an explicit upper bound on $k$, as a function of the number of households.

We note that \kbp{} may have other applications beyond the electricity division. For example, it could be used to create a backup of files on different file servers \cite{Jansen_1998}.
We would like to store $k$ different copies of each file, with at most one copy of the same file on the same server. This can be solved by solving \kbp{} on the files as items, and the server disk space as the bin capacity.

Motivated by these applications, we would like to find ways to efficiently solve \kbp{}. However, it is well-known that \kbp{} is NP-hard even for $k=1$. We therefore look for efficient approximation algorithms for \kbp{}. 

\subsection{Using existing bin-packing algorithms for \texorpdfstring{$k$}{k}BP} \label{using existing bin packing algorithms for kBP}
Several existing algorithms for bin-packing can be naturally extended to $k$BP. However, it is not clear whether the extension will have a good approximation ratio.

As an example, consider the simple algorithm called \emph{First-Fit (FF)}: process the items in an arbitrary order; pack each item into the first bin it fits into; if it does not fit into any existing bin, open a new bin for it. 
In the example $D = [10,20,11], S=31$, the $FF$ would pack two bins: $\{10,20\}$ and $\{11\}$. This is clearly optimal. 
The extension of $FF$ to $k$BP would process the items as follows: for each item $x_r$ in the list (in order), suppose that the $b$ bins have been used thus far.  Let $j$ be the lowest index ($1 \leq j\leq b$) such that (a) the bin $j$ can accommodate $x_r$ and (b) the bin $j$ does not contain any copy of $x_r$, should such $j$ exist; otherwise open a new bin with index $j=b+1$.  Place $x_r$ in the bin $j$.

There are two ways to process the input. One way 
is by processing each item $k$ times in sequence. In the above example, with $k=2$, $FF$ will process the items in order $[10^1, 10^2, 20^1, 20^2, 11^1, 11^2]$, where the superscript specifies the instance to which an item belongs.
This results in four bins: \sloppy $\{10^1, 20^1\}, \{10^2, 20^2\}, \{11^1\}, \{11^2\}$, which simply repeats $k$ times the solution obtained from $FF$ on $D$. However, the optimal solution here is 3 bins: $\{10^1,20^1\}, \{11^1,10^2\}, \{20^2,11^2\}$.

Another way 
is to process the whole sequence $D$, $k$ times. In the above example, $FF$ will process the sequence $D_2 = DD = [10^1,20^1,11^1, 10^2, 20^2, 11^2]$ . 
Applying the $FFk$ algorithm to this input instance will result in three bins $\{10^1,20^1\}, \{11^1,10^2\}, \{20^2,11^2\}$, which is optimal.
Thus, while the extension of $FF$ to $k$BP is simple, it is not trivial, and it is vital to study the approximation ratio of such algorithms in this case. 

As another example, consider the approximation schemes of 
de la Vega and Lueker \cite{fernandez_de_la_vega_bin_1981} and Karmarkar and Karp \cite{karmarkar-efficient-1982}.
These algorithms use a linear program that counts the number of bins of each different \emph{configuration} in the packing (see {\Cref{section:EAA:subsection:general}} for the definitions) 
One way in which these algorithms can be extended, without modifying the linear program, is to give $D_k$  as input. But then a configuration might have more than one copy of an item in $D$, which violates the $k$BP constraint. 
Another approach is to modify the constraint in the configuration linear program, to check that there are $k$ copies of each item in the solution, while keeping the same configurations as for the input $D$. Doing so will respect the $k$BP constraint.
Again, while the extension of the algorithm is straightforward, it is not clear what the approximation ratio would be; this is the main task of the present paper.

The most trivial way to extend existing algorithms is to run an existing bin-packing algorithm, and duplicate the output $k$ times. However, this will not let us enjoy the benefits of $k$BP for electricity division (in the above example, this method will yield $4$ bins, so each agent will be connected for $2/4=1/2$ of the time).
Therefore, we present more elaborate extensions, that attain better performance. 
The algorithms we extend can be classified into two classes:

\paragraph{1. Fast constant-factor approximation algorithms \textbf{(\Cref{section:approx_algo})}.} 
Examples are First-Fit ($FF$) and First-Fit-Decreasing ($FFD$). 
For bin-packing, these algorithms find a packing with at most $1.7 \cdot OPT(D)$ and  $\frac{11}{9} \cdot {OPT}(D) + \frac{6}{9}$  bins respectively \cite{Dsa2013,dosa_tight_2007,dosa_tight_2013}
We adapt these algorithms by running them on an instance made of $k$ copies, $DD\ldots D$ ($k$ copies of $D$),
which we denote by $D_k$.
We show that, for $k>1$, the extension of $FF$ to $k$BP (which we call $FFk$) finds a packing with at most $\left(1.5+\frac{1}{5k}\right) \cdot OPT(D_k) + 3\cdot k$ bins.
For any fixed $k>1$, the asymptotic approximation ratio of $FFk$ for large instances (when $OPT(D_k)\to \infty$)
	is $(1.5+\frac{1}{5k})$, which is better than that of $FF$, and improves towards $1.5$ when $k$ increases.

We also prove that the lower bound for $FFDk$ (the extension of $FFD$ to $k$BP) is $\frac{7}{6} \cdot {OPT}(D_k) + 1$, and conjecture by showing on simulated data that $FFDk$ solves $k$BP with at most $\frac{11}{9} \cdot OPT(D_k) + \frac{6}{9}$ bins which gives us an asymptotic approximation ratio of at most $11/9$. 

We also show that the extension of $NF$ (next-fit algorithm) to $k$BP (we call this extension $NFk$) has the asymptotic ratio of 2.

\paragraph{2. Polynomial-time approximation schemes \textbf{(\Cref{section:ptas})}.}
Examples are the algorithms by Fernandez de la Vega and Lueker \cite{fernandez_de_la_vega_bin_1981} and Karmarkar and Karp \cite{karmarkar-efficient-1982}.
We show that the algorithm by Fernandez de la Vega and Lueker can be extended to solve $k$BP using at most $(1 + 2\cdot \epsilon)OPT(D_k) + k$ bins for any fixed $\epsilon \in (0,1/2)$. 
For every $\epsilon >0$, Algorithm 1 of Karmarkar and Karp  \cite{karmarkar-efficient-1982} solves $k$BP using at most  $ (1 + 2 \cdot k \cdot \epsilon)OPT(D_k) + \frac{1}{2 \cdot \epsilon^2} + (2 \cdot k+1)$ bins, and runs in time
$O(n(D_k) \cdot \log {n(D_k)} + T(\frac{1}{\epsilon^2}, n(D_k))$, where $n(D_k)$ is the number of items in $D_k$, and  $T$ is a polynomially-bounded function.
Algorithm 2 of Karmarkar and Karp \cite{karmarkar-efficient-1982} can be generalized to solve $k$BP using at most  $OPT(D_k) + O(k \cdot \log^2 {OPT(D)})$ bins, and runs in time
$O(T(\frac{n(D)}{2},n(D_k)) + n(D_k) \cdot \log{n(D_k)})$.

\subsection{Egalitarian allocation of watts (\Cref{sec: Egalitarian allocation of supply}) }
\label{subsec: Egalitarian allocation of supply: }

At a given point in time, the equal supply allocation problem concerns the allocation of electricity as equal as possible (in kW) for agents that are not connected to electricity all the time, as these are the agents that do not get their stipulated demand. 

\begin{example}
The input set $D$ contains three agents $x,y,z$ with demands $1,2, \text{ and }3$ respectively and supply $S=5$. 
    An equal allocation of time would result in the following possible packing of items in $D$ in $3$ bins: $\{1,2\}, \{2,3\}, \{3,1\}$. The allocation of supply in this case would be: $0.667, 1.33, \text{ and }2$ watts respectively. The solution is fair w.r.t. the equal allocation of time, but unfair w.r.t. the equal allocation of supply. 
    It is unfair to agents with low demand. In this example, the solution is unfair to the agent $x$ with demand $1$.
    A better solution in this example could have $5$ bins in total: 
    $\{1,2\}, \{1,2\}, \{1,2\}, \{1,3\}, \{1,3\}$.
    It will yield the allocation of supply $1,1.2, \text{ and } 1.2$ for the agents $x,y, \text{ and }z$ respectively.
    This solution is much better w.r.t to the equal allocation of supply.
\end{example}

Now, we describe our solution approaches for equal supply allocation problem that distributes the electricity (in kW) as equal as possible:
\begin{itemize}
	\item[--] We derive an instance from the given input instance. In this derived instance, the sum of the item sizes of all the occurrences of items is as equal as possible.
	For example, consider the input instance
	$D = \{x=1,y=2,z=3\}$ and supply $S=5$.
	Let
	$D' = \{1,2,3,1,2,3,1,2,1,1,1\}$
	be the instance derived from $D$. The sum of item sizes of all the occurrences of items $x,y,z$ is $6,6,6$ respectively.
	\item[--] Now, if we use any algorithm that connects the agents equally in terms of time, it will result in an equal allocation of supply for the instance derived in the above step. For the above example, a possible solution could result in $6$ bins:
	$\{1,2\}, \{3,1\}, \{2,3\}, \{1,2\}, \{1\}, \{1\}, \{1\}$. Connecting each bin $1/6$ of the time will result in $1,1,1$ kW of supply for agents $x,y,z$ respectively.
\end{itemize}

However, the above approach suffers if there are some items with very small item sizes compared to the largest item size. Imagine the situation where the smallest item in the instance is, say, $0.05$ and the maximum item size in the instance is, say, $14$. In this case, there are $280$ copies of this smallest item w.r.t a single copy of the maximum item size. The final derived instance will be a large instance in terms of the number of items. Moreover, in this case, the equal supply allocation is limited by the small item size i.e. agent with max demand $14$ will not get electricity more than $0.05$(kW).

However, it is possible to improve the solution: we connect the agent with demand $x=1$ all the time. Now, the remaining agents are packed in the bin whose capacity is $S' = 5 - 1 = 4$ and the structure of each such bin is $\{1,\}$. After packing the remaining items, a possible final solution could be $\{1,2\}, \{1, 3\}$. Since the agent $x=1$ is connected all the time, connecting each bin $1/2$ of the time will result in the allocation of supply w.r.t the remaining agents as $1, 1.5$ which better than the previously obtained solution. Overall, the allocation vector of supply $(1,1,1.5)$ is preferred over $(1,1,1)$.

More generally, suppose we find a set of agents that are connected all the time, then for the remaining agents and the remaining supply we can determine a better allocation vector of supply. 
In \Cref{sec: Egalitarian allocation of supply}, we discuss four heuristic approaches to solve the egalitarian allocation of Watts problem.

\paragraph{Simulation experiments \textbf{( 
\Cref{appendix: approximation ratio of FFk and FFDk algorithms} and 
\Cref{subsec: Experiment: Heuristic algorithms for egalitarian allocation of watts}
)}}
We complement our theoretical analysis of the approximation ratios of $FFk$ and $FFDk$ bin-packing algorithms with
simulation experiments, which validate our conjectures in {\Cref{section:approx-algo:subsection:approx_algo:ffk}} and {\Cref{section:approx-algo:subsection:approx_algo:ffdk}} in \Cref{appendix: approximation ratio of FFk and FFDk algorithms}.
Further, more results that shows the effect of different values of $k$ and uncertainty in the agent's demand has on the number of connection hours, comfort and electricity supplied have been discussed in \Cref{APPENDIX: electricity-distribution-results}. 

The main objective of heuristics algorithms (HA1, HA2, HA3, and HA4) is to distribute electricity as equally as possible. 
The computational results of these heuristics algorithms in terms of electricity allocated in watts have been discussed in \Cref{subsec: Experiment: Heuristic algorithms for egalitarian allocation of watts}.
More results that compare these heuristics in terms of hours of connection to supply, and comfort delivered have been discussed in \Cref{subsec: Experiment: Heuristic algorithms for egalitarian allocation of supply}. 

\paragraph{Electricity distribution \textbf{( 
\Cref{sec: Experimental Results})}}
The fair electricity division problem was introduced by Oluwasuji, Malik, Zhang and Ramchurn
\cite{oluwasuji_algorithms_2018,oluwasuji_solving_2020} under the name of ``fair load-shedding''.
They presented several heuristic as well as ILP-based algorithms, and tested them on a dataset of $367$ households from Nigeria. 
We implement the $FFk$ and $FFDk$ algorithms for finding approximate solutions to $k$BP, and use the solutions to determine a fair electricity allocation. We test the performance of our allocations on the same dataset of Oluwasuji, Malik, Zhang and Ramchurn \cite{oluwasuji_algorithms_2018}.
We compare our results on the same metrics used by Oluwasuji, Malik, Zhang and Ramchurn \cite{oluwasuji_algorithms_2018}. These metrics are utilitarian and egalitarian social welfare and the maximum utility difference between agents. We compare our results in terms of hours of connection to supply on average, utility delivered to an agent on average, and electricity supplied on average, along with their standard deviation. We find that our results surmount their results in terms of the egalitarian allocation of connection time to the electricity 
$FFk$ and $FFDk$ run in time that is nearly linear in the number of agents. We conclude that using $k$BP can provide a practical, fair and efficient solution to the electricity division problem where the objective is to connect each agent as much as possible.

In the same \Cref{sec: Experimental Results}, we present computational results of heuristic algorithms developed for egalitarian allocation of watts. We find that heuristic algorithm HA1 with $FFDk$ outperforms all other heuristics in supplying more electricity (in terms of watts) to the worst-off agent.

In \textbf{\Cref{section:conclusion}}, we conclude with a summary and directions for future work. We defer most of the technical proofs and algorithms to the Appendix.

\section{Related Literature} \label{section:related-literature}
\paragraph{First-Fit.}

We have already defined the working of $FF$ in \Cref{using existing bin packing algorithms for kBP}.
Denote by $FF$ the number of bins used by the First-Fit algorithm, and by $OPT$ the number of bins in an optimal solution for a multiset $D$. An upper bound of $FF \leq 1.7OPT+3$ was first proved by Ullman in 1971 \cite{laboratory_performance_1971}. The additive term was first improved to 2 by Garey, Graham and Ullman \cite{10.1145/800152.804907} in 1972. In 1976, Garey, Graham, Johnson and Yao \cite{GAREY1976257}  improved the bound further to $FF \leq \lceil 1.7OPT \rceil$, equivalent to $FF  \leq 1.7OPT + 0.9$ due to the integrality of $FF$ and $OPT$. This additive term was further lowered to $FF \leq 1.7OPT + 0.7$ by Xia and Tan \cite{xia_tighter_2010}. Finally, in 2013 Dosa and Sgall \cite{Dsa2013}
settled this open problem and proved that $FF \leq \lfloor 1.7OPT \rfloor$, which is tight.

\paragraph{First-Fit Decreasing.} Algorithm \emph{First-Fit Decreasing} ($FFD$) first sorts the items in non-increasing order, and then implements $FF$ on them. 
In 1973, in his doctoral thesis \cite{Johnson1973}, D. S. Johnson proved that $ FFD\ \leq \frac{11}{9}OPT+4$. Unfortunately, his proof spanned more than 100 pages. In 1985, Baker \cite{baker_new_1985} simplified their proof and improved the additive term to $3$. In 1991 Minyi \cite{Minyi} further simplified the proof and showed that the additive term is $1$. Then, in 1997,  Li and Yue \cite{Li_Yue_1997} narrowed the additive constant to $7/9$ without formal proof. Finally, in 2007 Dosa \cite{dosa_tight_2007} proved that the additive constant is $6/9$. They also gave an example which achieves this bound. 

\paragraph{Next-fit.} The algorithm next-fit works as follows: It keeps the current bin (initially empty) to pack the current item. If the current item does not pack into the currently open bin then it closes the current bin and opens a new bin to pack the current item. Johnson in his doctoral thesis \cite{Johnson1973} proved that the asymptotic performance ratio of next-fit is 2.

\paragraph{Efficient approximation schemes.} In 1981, Fernandez de la Vega and Lueker \cite{fernandez_de_la_vega_bin_1981} presented a polynomial time approximation scheme to solve bin-packing. Their algorithm accepts as input an $\epsilon> 0$ and produces a packing of the items in $D$ of size at most $\left(1+\epsilon\right)OPT+1$. Their running time is polynomial in the size of $D$ and depends on $1/\epsilon$. They invented the \emph{adaptive rounding} method to reduce the problem size. In adaptive rounding, they initially organize the items into groups and then round them up to the maximum value in the group. This results in a problem with a small number of different item sizes, which can be solved optimally using the linear configuration program. Later, Karmarkar and Karp \cite{karmarkar-efficient-1982} devised several PTAS for the bin-packing problem. One of the Karmarkar-Karp algorithms solves bin-packing using at most $OPT + O({\log^2\,{ OPT}})$ bins. Other Karmarkar--Karp algorithms have different additive approximation guarantees, and they all run in polynomial time. This additive approximation was further improved to $O(\log\,{OPT} \cdot \log\,{\log\,{OPT}})$ by Rothvoss \cite{Rothvoss_2013}. They used a ``glueing" technique wherein they glued small items to get a single big item. In 2017, Hoberg and Rothvoss \cite{Hoberg_Rothvoss_2017}  further improved the additive approximation to a logarithmic term $O(\log\,{OPT})$.

\paragraph{Bin-packing with constraints.}
Jansen \cite{Jansen_1998} has proposed a $\mathrm{FPTAS}$ for the generalization of the bin-packing problem called \textit{bin-packing with conflicts}. 
The input instance for their algorithm is the conflict graph. Its vertices are the items and any two items are adjacent provided they cannot be packed into the same bin. In particular, $k$BP can be considered as the bin-packing with conflicts, where the conflict graph $D_k$ is a disjoint union of copies of a complete graph $K_k$. Their bin-packing problem with conflicts is restricted to $q-$inductive graphs. In a $q-$inductive graph the vertices are ordered from $1, \ldots, n$. Each vertex in the graph has at most $q$ adjacent lower numbered vertices. 
Since the degree of each vertex of $D_k$ equals $k-1$, $D_k$ is a $k-1$-inductive graph. 
In their method first they obtain an instance of large items from the given input instance. Let this instance be $J_k$. They apply the linear grouping method of Fernandez de la Vega and Lueker \cite{fernandez_de_la_vega_bin_1981} to obtain a constant number of different item sizes. Next they apply the Karmarkar and Karp algorithm \cite{karmarkar-efficient-1982} to obtain an approximate packing of the large items. The bins in this approximate packing may have conflicts, so they use the procedure called $\mathrm{COLOR}$ which places each conflicted item into a new bin. In the worst case it may happen that all the items in each bin have conflict and hence each one of them is packed into a separate bin. Finally, after removing the conflicts, they pack the small items into the existing bins, respecting conflicts among items. In doing so, new bins are opened if necessary. 

For the input instance $I$ that consists of a $q-$inductive graph their algorithm packs the items using at most 
$(1 + 2 \cdot \epsilon) OPT(I) + \frac{2 \cdot q + 1 }{\epsilon^2} + 3\cdot q + 4$
bins. In case of \kbp, $I$ equals $D_k$ and $q$ is $k-1$.

Their algorithm solves the $k$BP using at most $(1 + 2 \cdot \epsilon) OPT(D_k) + \frac{2 \cdot k -1 }{\epsilon^2} + 3\cdot k + 1$ bins. 

In this paper we focus on a special kind of conflicts, and for this special case, we present a better approximation ratio: our extension to Algorithm 1 and 2 of Karmarkar-Karp algorithms solves the $k$BP using at most $(1 + 2 \cdot \epsilon) OPT(D_k) + \frac{1 }{2 \cdot \epsilon^2} + 2\cdot k + 1$ and $OPT(D_k) + O(k \cdot \log^2 OPT(D))$ bins respectively. 

Gendreau, Laporte and Semet \cite{Gendreau_Laporte_Semet_2004} propose six heuristics named H1 to H6 for bin-packing with item-conflicts, where the input instance is 
represented by a general conflict graph.
The heuristic H1 is a variant of $FFD$ which  incorporates the conflicts, whereas H6 is a combination of a maximum-clique procedure and $FFD$. They show that H6 is better than H1 for conflict graphs with high density, whereas H1 performs marginally better for low density conflict graphs (where density is defined as the ratio of the number of edges to the number of possible edges).
\kbp{} can be represented by duplicating each item $k$ times, and constructing a conflict graph in which there are edges between each two copies of the same item. The density of this graph is $(k-1)/(kn-1)$, which becomes smaller for large $n$. This suggests that H1 is a better fit for \kbp{}. 

But if we use their method to solve 
\kbp{}, then the vertices of the conflict graphs are ordered as blocks corresponding to the items, in such a way that their sizes are non-decreasing. We have already seen in \Cref{using existing bin packing algorithms for kBP}
that when we change such item order we can obtain a better packing.

Ekici {\cite{Ekici_2021}} has given a heuristic solution based on linear programming for the bin packing problem with item-conflicts (BPPC) and item fragmentation (BPPIF {\cite{Casazza_Ceselli_2014}}{\cite{Casazza_Ceselli_2016}}). In their problem an item can be fragmented, and also an item can have size larger than the bin capacity. They have also represented the problem by a general conflict graph. In contrast to their problem, our problem have item sizes at most the bin capacity and also we do not allow an item to be fragmented. Hence, their solutions are not directly applicable in our case. Moreover, they do not consider the number of fragments an item can have in their solution.

Gupta and Ho {\cite{Gupta_Ho_1999}}, have developed the MBS (minimum bin slack) procedure that determines the total possible subset of items that can fill the bin and that leave the minimum space remaining in the bin. Their heuristic algorithm finds an optimal solution where the total sum of all items is at most twice the bin capacity. The complexity of their procedure is exponential in input size.
Fleszar and Hindi {\cite{Fleszar_Hindi_2002}}, suggested heuristic algorithms based on MBS to save computation. These improvements first pack the large item in a bin leaving small space for other items to pack, therefore, resulting in a small exponential-time search for the remaining items to fill the bin. In electricity distribution problem, the supply is very large as compared to the demand of the households. Therefore, adoption of such heuristic will be computationally expensive. 

Recently, Doron-Arad, Kulik and Shachnai \cite{Doron-Arad_Kulik_Shachnai_2022} have solved in polynomial time a more general variant of bin-packing, with partition matroid constraints. Their algorithm packs the items in $OPT + O\left(\frac{OPT}{(\ln {\ln OPT})^{1/17}}\right)$ bins. Their algorithm can be used to solve the $k$BP:
for each item in $D$, define a category that contains $k$ items with the same size.
Then, solve the bin-packing with the constraint that each bin can contain at most one item from each category (it is a special case of a partition-matroid constraint).
However, in the present paper we focus on the special case of $k$BP. This allows us to attain a better running-time (with $FFk$ and $FFDk$), and a better approximation ratio (with the de la Vega--Lueker and Karmarkar--Karp algorithms).

\paragraph{Cake cutting and electricity division} The electricity division problem can also be modeled as a classic resource allocation problem known as `cake cutting'. The problem was first proposed by Steinhaus {\cite{Society2016}}. A number of cake-cutting protocols have been discussed in {\cite{brams_taylor_1996,Webb_1998}}. In cake cutting, a cake is a metaphor for the resource. Like previous approaches, a time interval can be treated as a resource. A cake-cutting protocol then allocates this divisible resource among agents who have different valuation functions (or preferences) according to some fairness criteria. The solution to this problem differs from the classic cake-cutting problem in the sense that at any point in time, $t$, the sum of the demands of all the agents, should respect the supply constraint, and several agents may share the same piece.

\paragraph{Other solutions to fair load shedding.}
Load-shedding strategy at the bus level has been discussed in \cite{Shi2015}. They have dealt with the issue of cascading failure, which may occur in line contingencies. So when a line contingency happens, first they do load-shedding to avoid cascading failure. This load-shedding may reduce the loads beyond what is necessary (let's call this unnecessary load) to prevent cascading failure. Then, they recover as much unnecessary load as possible in the second step while maintaining the system's stability.
Shi and Liu \cite{Shi2015} presented a distributed algorithm for compensating agents for load-shedding based on the proportional-fairness criterion. In contrast, we present a centralized algorithm for computing load-shedding that attains a high egalitarian welfare.

In a smart grid, total distributed electricity combines utility supply and distributed renewable sources (DERs). \cite{Jainetal2022} proposes a fair, starvation-free mechanism allocating energy among microgrids so each receives a portion of its demand at lower cost. The method uses two phases: the first proportionally distributes DER energy; the second uses a Vickrey auction for remaining DER energy. If microgrid demand persists, and DER energy is fully allocated, any unmet need is addressed by purchasing from the utility grid. However, for households, when demand exceeds supply, fractional allotments may hinder critical activities.
Bus-level load shedding can be inequitable. An equity-aware model using the grid Gini coefficient ensures fairer load curtailment across buses, but may increase total load shed and cost \cite{Fangetal2024}. Each time an agent connects, recalculating the Gini coefficient is computationally expensive for large groups. 

Gerding et. al. \cite{Gerding2011} proposed a model-free mechanism for coordinating electric vehicle charging that ensures truthfulness by leaving some units unallocated, which return to the grid but are wasted when the purpose is to distribute electricity to households.
In \cite{Stein2012}, the mechanism pre-commits to fulfill each selected agent’s exact demand, assumes no additive value for extra resource, and allows allocations below the reported maximum, in contrast to our mandatory full-demand model.

Buermann et. al. \cite{Buermann2020} studied variable-energy allocation over time with agents’ linear satiable valuations, allowing partial fulfillment, while our model uses piecewise constant utility so agents gain only if their full demand is met; partial allocation is worthless.

Mechanisms in which a load is shifted from peak hours to non-peak hours has been studied in \cite{Ali2021,akasiadis2017mechanism} 
However, in this research, we assumed that an agent's utility for fractional allocation of their demand is $0$, and also shifting an agent's demand to other time-intervals may result in inhibiting an agent from performing some critical activities.

\section{Definitions and Notation} 
\label{section:model-assumptions}
\subsection{Electricity division problem}
The input to the Electricity Division problem consists of:
\begin{itemize}
    \item A number $S>0$ denoting the total amount of available supply (e.g. in kW);
    \item A number $n$ of households, and a list $D = D[1],\ldots,D[n]$ of positive numbers, where $D[i]$ represents the demand of households $i$ (in kW);
    \item An interval $[0,T]$ representing the time in which electricity should be supplied to the households.
\end{itemize}
 
 The desired output consists of:
 \begin{itemize}
     \item A partition $\mathcal{I}$ of the interval $[0,T]$ into sub-intervals, $I_1,\ldots,I_p$;
     \item For each interval $l\in [p]$, a set $A_l \subseteq [n]$ denoting the set of agents that are connected to electricity during interval $l$, such that
    $\sum_{i \in A_l} D[i] \leq S$ (the total demand is at most the total supply).
 \end{itemize}

In terms of connection time to electricity,
the utility of agent $i$ equals the total time  agent $i$ is connected:
$u_i^t(\mathcal{I}) = \sum_{l: i\in A_l}|I_l|$
(we will consider other utility functions in 
\Cref{sec: Experimental Results}
).

The optimization objective is 
\begin{align*}
\max_{\mathcal{I}} \min_{i\in [n]} u_i^t(\mathcal{I}),   
\end{align*}
where the maximum is  over all partitions that satisfy the demand constraints. 
This max-min value is called the \emph{egalitarian connection-time} of the given instance.

In terms of allocation of electricity supply, utility of an agent $i$ is defined as:
$u_i^s(\mathcal{I}) = \sum_{l: i\in A_l} D[i] \cdot |I_l|$.
The optimization objective is:
\begin{align*}
    \max_{\mathcal{I}} \min_{i\in [n]} u_i^s(\mathcal{I})
\end{align*}


\subsection{$k$-times bin packing}
\label{sec:model subsec:kbp}
We denote the bin capacity 
by $S>0$ and the multiset of $n$ items by $D$. 

Let $n(D)$ and $m(D)$ denote the number of items and the number of different item sizes in $D$, respectively. We denote these sizes by $c[1],\dots, c[m(D)]$. Moreover, for each natural $i\le m(D)$ let $n[i]$ be the number of items of size $c[i]$. The \emph{size} of a bin is defined as the sum of all the item sizes in that bin. Given a multiset $B$ of items, we denote by $V(B)$ its \emph{size}, defined asthe sum of the sizes of all items of $B$.

We denote $k$ copies  of $D$ by $D_k := DD\ldots D$. We denote the number of bins used to pack the items in $D_k$ by the optimal and the considered algorithm by $OPT(D_k)$ and $A(D_k)$, respectively, where $A$ is the algorithm used.

Note that each item in $D_k$ is present at most once in each bin, so it is present in exactly $k$ distinct bins. Consider the example in \Cref{sec:introduction}. There are three items $x,y,z$ with demand $2,1,1$ respectively. Let $k=2$ and $S=3$. 
Then, $\{x,y\}, \{y,z\}, \{z,x\}$ is a valid bin-packing. Note that each item is present twice overall, but at most once in each bin.
In contrast, the bin-packing $\{x,y\}, \{y,z,z\}, \{x\}$ is not valid, because there are two copies of $z$ in the same bin.

In the next section, we will show that for every electricity division problem, a finite value of $k$ can be found such that solving the \kbp problem optimally will also produce the optimal solution for the electricity division problem.

\section{On optimal \texorpdfstring{$k$}{k} for \texorpdfstring{$k$}{k}-times bin-packing} \label{section:existence of finite k for kBP}
In this section we prove that, for every electricity division instance, there exists an  integer $k$ such that $k$BP yields the optimal electricity division. Moreover, we give an upper bound on $k$ as a function of the number of agents.

\subsection{Upper bound}
We start with some useful definitions and lemmas from linear algebra.


Let $X$ be a nonempty set. 
We denote by $\mathbb{R}^X$ the linear space of all functions from $X$ to the real numbers $\mathbb{R}$. So elements of $\mathbb{R}^X$ have the form $(w_\alpha)_{\alpha \in X}$, where for each element $\alpha\in X$, $w_\alpha$ is the corresponding real number.

For each nonempty subset $Y$ of $X$, let $\pi_Y : \mathbb{R}^X \rightarrow \mathbb{R}^Y$ be the natural projection, which maps each element $(w_\alpha)_{\alpha \in X} \in \mathbb{R}^X$ to the element 
$(w_\alpha)_{\alpha \in Y} \in \mathbb{R}^Y$.

Let $W\subseteq \mathbb{R}^X$ be a linearly independent set,
and $Y\subseteq X$ a subset of the indices.
In general, the projection  $\pi_Y(W)$ might not be linearly independent. For example, if $X = \{1,2,3,4,5\}$ and $W = \{ [1,2,0,0,0], [2,4,1,0,0], [3,6,0,0,1]\}$
and $Y = \{1,2\}$, then $\pi_Y(W) = \{[1,2],[2,4],[3,6]\}$, which is linearly dependent.

We shall need the following lemmas.
\begin{lemma}
	\label{lem:Y}
	Let $W\subseteq \mathbb{R}^X$ be a nonempty finite linearly independent set. Then there exists a subset $Y$ of $X$ with $|Y|=|W|$ such that the set $\pi_Y(W)$ is linearly independent.
\end{lemma}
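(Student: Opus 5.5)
This is the standard fact that row rank equals column rank. Write $W=\{w^1,\dots,w^m\}$ with $m=|W|$, and note that $X$ may be infinite. I would treat $W$ as an $m\times X$ "matrix" whose $j$-th row is $w^j$, and look for $m$ columns on which the rows stay independent. For a set $Y$ of indices, saying that $\pi_Y(W)$ is linearly independent with $|\pi_Y(W)|=m$ is the same as saying the $m\times m$ submatrix $(w^j_\alpha)_{j\le m,\alpha\in Y}$ is nonsingular. It is also worth noting that independence of $\pi_Y(W)$ forces the projections $\pi_Y(w^j)$ to be pairwise distinct, so the count $|\pi_Y(W)|=|W|$ comes for free.

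First I reduce to a finite index set. Consider the linear map $\Phi:\mathbb{R}^W\to\mathbb{R}^X$, $(\lambda_j)\mapsto\sum_j\lambda_j w^j$, which is injective because $W$ is independent. For each $\alpha\in X$ let $\phi_\alpha\in(\mathbb{R}^m)^*$ be the functional $\lambda\mapsto\sum_j\lambda_j w^j_\alpha$. Injectivity of $\Phi$ says exactly that no nonzero $\lambda$ is killed by every $\phi_\alpha$. So the functionals $\{\phi_\alpha:\alpha\in X\}$ have trivial common kernel, which means they span the $m$-dimensional dual space $(\mathbb{R}^m)^*$.

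Since these functionals span a space of dimension $m$, I can pick a basis among them, $\phi_{\alpha_1},\dots,\phi_{\alpha_m}$. One way is a greedy choice: keep adding an $\alpha$ whose functional lies outside the current span; this stops after exactly $m$ steps. Set $Y=\{\alpha_1,\dots,\alpha_m\}$. The indices are distinct because the chosen functionals are distinct, so $|Y|=m=|W|$.

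Now I check that $\pi_Y(W)$ is independent. Suppose $\sum_j\lambda_j\pi_Y(w^j)=0$. Then $\phi_{\alpha_i}(\lambda)=0$ for every $i$. Because the $\phi_{\alpha_i}$ form a basis of the dual space, this forces $\lambda=0$. This also shows the projected vectors are distinct, so $\pi_Y(W)$ has exactly $m$ elements.

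The only mildly delicate step is the duality claim: that trivial common kernel implies the functionals span the dual. If the span $U$ were a proper subspace, I would pick a nonzero $\lambda$ annihilated by all of $U$. Such a $\lambda$ exists because the annihilator of a subspace of dimension $r$ has dimension $m-r>0$. That $\lambda$ would contradict injectivity of $\Phi$. The rest is routine.
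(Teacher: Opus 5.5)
Your proof is correct, but it takes a different route from the paper's.

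The paper argues by induction on $q=|W|$. It removes one vector $w'$ and applies the induction hypothesis to $W'=W\setminus\{w'\}$ to get $Y'$ with $|Y'|=q-1$. Then $\pi_{Y'}(W')$ is a basis of $\mathbb{R}^{Y'}$, so $\pi_{Y'}(w')=\sum_{v\in W'}\lambda_v\pi_{Y'}(v)$ for unique coefficients $\lambda_v$. Independence of $W$ then supplies a coordinate $\alpha\notin Y'$ where $w'$ and $\sum_{v\in W'}\lambda_v v$ differ. Adding $\alpha$ to $Y'$ restores independence, which the paper verifies directly from the uniqueness of those coefficients.

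You instead fix all of $W$ at once and pass to the coordinate functionals $\phi_\alpha$ on the coefficient space $\mathbb{R}^m$. Independence of $W$ means these functionals have trivial common kernel, hence they span $(\mathbb{R}^m)^*$. Any basis $\phi_{\alpha_1},\dots,\phi_{\alpha_m}$ extracted from them then gives $Y$. This is the standard ``row rank equals column rank'' argument.

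What each route buys:
\begin{itemize}
\item Your version is shorter and more conceptual. It needs only the annihilator dimension count and the fact that a spanning set contains a basis.
\item Your greedy selection also shows that any coordinates whose functionals are already independent can be extended to a valid $Y$.
\item The paper's version avoids dual spaces entirely and proceeds by elementary coordinate-by-coordinate bookkeeping.
\item Unrolled, the paper's induction yields nested index sets $Y_1\subset\cdots\subset Y_q$ that work for an increasing chain of subsets of $W$.
\end{itemize}
Neither argument needs $X$ to be finite, since each picks only finitely many coordinates.
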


In the above example, as $|W|=3$, the lemma says that there exists a subset $Y$ containing $3$ indices, such that the projection of $W$ on these indices is still linearly independent. Indeed, in this case we can take $Y = \{1,3,5\}$, as $\pi_Y(W) = \{[1,0,0],[2,1,0],[3,0,1]\}$, which is linearly independent.\begin{proof}
	Let $q=|W|$. We prove the required claim by induction on $q$. The base case is $q = 1$. 
	Then $W$ consists of a single nonzero vector $w$.
	Therefore there exists $\alpha\in X$ such that the $\alpha$th entry of $w$ is non-zero. Put $Y=\{\alpha\}$. Then the vector $\pi_Y(w)$ is non-zero and so the set $\{\pi_Y(w)\}$ is linearly independent. 
	
	Now suppose that the required claim holds for $q-1$. Pick any vector $w'\in W$ and put $W'=W\setminus\{w'\}$. 
	By the induction assumption,
	there exists a subset $Y'$ of $X$ with $|Y'|=q-1$ such that the set $\pi_{Y'}(W')$ is linearly independent. 
	As $|W'|=q-1$, adding a single vector $\pi_{Y'}(w')$ to $\pi_{Y'} (W')$ makes it linearly dependent.
	Therefore there exists a unique vector of coefficients $(\lambda_v)_{v\in W'}$ such that $\pi_{Y'}(w')=\sum_{v\in W'} \lambda_v \pi_{Y'} (v)$. Since the set $W$ is linearly independent, 
	$\sum_{v\in W'} \lambda_v v \neq w'$,
	so there exists $\alpha\in X\setminus Y'$ such that $w'_\alpha\ne\sum_{v\in W'} \lambda_v v_\alpha$. Put $Y=Y'\cup\{\alpha\}$. 
	
	We claim that the set $\pi_Y(W)$ is linearly independent. Indeed, suppose for a contradiction that there exist coefficients $(\lambda'_v)_{v\in W}$ which are not all zeroes such that $\sum_{v\in W} \lambda'_v\pi_Y(v)=0$. Since the set $\pi_{Y'}(W')$ is linearly independent, the set $\pi_{Y}(W')$ is linearly independent too,
	so $\lambda'_{w'}\ne 0$. Then $\pi_Y(w')=\sum _{v\in W'} (-\lambda'_v/\lambda'_{w'})\pi_Y(v)$, and so 
	$\pi_{Y'}(w')=\sum _{v\in W'} (-\lambda'_v/\lambda'_{w'})\pi_{Y'}(v)$. The uniqueness of $(\lambda_v)_{v\in W'}$ ensures that $-\lambda'_v/\lambda'_{w'}=\lambda_v$ for each 
	$v\in W'$. But $w'_\alpha\ne \sum_{v\in W'} \lambda_v v_\alpha=\sum _{v\in W'} (-\lambda'_v/\lambda'_{w'})v_\alpha$, a contradiction. 
	
	Thus the required claim holds for $q$.
        \qed
\end{proof}

Let $X$ be a nonempty set and let $Z \subset \{0,1\}^X$ be a nonempty finite linearly-dependent set of nonzero vectors.
This means that there exist real coefficients $(x_w)_{w\in Z}$, not all zeros, such that $\sum_{w\in Z}x_w\cdot w = 0$.
The following lemma shows that we can choose these coefficients to be integers with a bounded magnitude.

We recall the following facts from the OEIS \cite{IntegerSequenceEncyclopedia}.
\begin{sloppypar}
For each natural $n$, let $a(n)$ be the maximal determinant of a matrix of order $n$ whose entries are $0$ or $1$. The values of $a(n)$ are known up to { $n=21$: $ \allowbreak 1, \allowbreak 1, \allowbreak 2, \allowbreak 3, \allowbreak 5, \allowbreak 9, \allowbreak 32, \allowbreak 56, \allowbreak 144, \allowbreak 320, \allowbreak 1458, \allowbreak 3645, \allowbreak 9477, \allowbreak 25515, \allowbreak 131072, \allowbreak 327680, \allowbreak 1114112, \allowbreak 3411968, \allowbreak 19531250, \allowbreak 56640625, \allowbreak 195312500,\dots$ }. Hadamard proved that $a(n) \le 2^{-n}(n+1)^{\frac{n+1}{2}}$, with equality iff a Hadamard matrix of order $n+1$ exists. It is believed that the latter holds iff $n+1 = 1, 2$ or a multiple of $4$.
\cite{clements1965sequence} provide a lower bound
$a(n) > 2^{-n}(\frac{3}{4}(n+1))^{\frac{n+1}{2}}$.

Here are two examples of $6\times 6$ matrices that attain the upper bound $a(6)=9$:%
\footnote{
The first is from OEIS;
the second is from Dietrich Burde in
\url{https://math.stackexchange.com/a/4965827}.
}
\begingroup
\setlength{\arraycolsep}{5pt}
\begin{align}
\label{eq:mat9}
\begin{bmatrix} 
  1 & 0 & 0 & 1 & 1 & 0
  \\
  0 & 0 & 1 & 1 & 1 & 1
  \\
  1 & 1 & 1 & 0 & 0 & 1
  \\
  0 & 1 & 0 & 1 & 0 & 1
  \\
  0 & 1 & 0 & 0 & 1 & 1
  \\
  0 & 1 & 1 & 1 & 1 & 0
\end{bmatrix}   
&&
\begin{bmatrix} 
  1 & 0 & 1 & 0 & 0 & 0
  \\
  1 & 1 & 0 & 1 & 0 & 0
  \\
  0 & 1 & 1 & 0 & 1 & 0
  \\
  0 & 0 & 1 & 1 & 0 & 1
  \\
  1 & 0 & 0 & 1 & 1 & 0
  \\
  1 & 1 & 0 & 0 & 1 & 1
\end{bmatrix}   
\end{align}
\endgroup

\end{sloppypar}

\begin{lemma}
	\label{integer coefficeints lemma}
	Let $Z \subset \{0,1\}^X$ be a nonempty finite linearly dependent set of nonzero vectors and
	 $p = |Z| - 1 $. Then there exist integers $(\Delta_w)_{w \in Z}$ which are not all zeros such that $|\Delta_w| \leq a(p)$ for each $w \in Z$ and $\sum_{w \in Z} \Delta_w \cdot w = 0$.
	That is if \emph{some} nontrivial linear combination of $Z$ equals $0$, then there exists such a linear combination in which the coefficients are all integers, and are all bounded by $a(p)$.
\end{lemma}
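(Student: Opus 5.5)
Reduce to a minimal dependent subset and solve for the kernel vector by Cramer's rule, with Lemma~\ref{lem:Y} supplying a square nonsingular minor. First pass to a minimal linearly dependent subset $Z'\subseteq Z$: every proper subset of $Z'$ is linearly independent. Coefficients found for $Z'$ extend to $Z$ by putting $\Delta_w=0$ for $w\in Z\setminus Z'$, so it suffices to treat $Z'$. Let $q=|Z'|$. Since $Z$ consists of nonzero vectors, $q\ge 2$, and $q-1\le p$.

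Fix $w_0\in Z'$ and set $W=Z'\setminus\{w_0\}$. This is a linearly independent set of $q-1$ vectors, and by minimality $w_0$ lies in the span of $W$. By Lemma~\ref{lem:Y} there is $Y\subseteq X$ with $|Y|=q-1$ such that $\pi_Y(W)$ is linearly independent. Hence the $(q-1)\times(q-1)$ matrix $M$ whose columns are the vectors $\pi_Y(v)$, $v\in W$, is nonsingular, and $M$ has $0/1$ entries. Write $w_0=\sum_{v\in W}\lambda_v v$. Projecting gives $M\lambda=\pi_Y(w_0)$, and this system has a unique solution. Conversely, the unique solution of the projected system is the true coefficient vector, because the full relation holds and $\lambda$ is determined by its projection.

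By Cramer's rule, $\lambda_v=\det M_v/\det M$, where $M_v$ is $M$ with column $v$ replaced by $\pi_Y(w_0)$. Define $\Delta_{w_0}=-\det M$ and $\Delta_v=\det M_v$ for $v\in W$. These are integers, since all matrices involved have integer entries. They are not all zero, since $\det M\neq 0$. They satisfy
\[
\sum_{w\in Z'}\Delta_w w=\det M\Bigl(\sum_{v\in W}\lambda_v v-w_0\Bigr)=0 .
\]

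It remains to bound each coefficient. Both $M$ and every $M_v$ are $0/1$ matrices of order $q-1$, so each coefficient has absolute value at most $a(q-1)$: apply the definition of $a$ to the matrix, or to the matrix with two columns swapped for a negative determinant. Since $a$ is nondecreasing, $a(q-1)\le a(p)$. Monotonicity follows by bordering: if $A$ is a $0/1$ matrix of order $n$, the block matrix with a $1$ in the corner and $A$ in the lower right has the same determinant and order $n+1$.

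The main subtlety is arranging that Cramer's rule is applied to square $0/1$ matrices of order at most $p$. This is exactly what the combination of minimality and Lemma~\ref{lem:Y} provides: the dimension of the square system is pinned at $q-1\le p$, independent of $|X|$.
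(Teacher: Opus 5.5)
Your proof is correct and follows essentially the paper's argument. The only difference is where the square system comes from: the paper takes a maximal independent subset $W\subseteq Z$ plus one extra vector $z\in Z\setminus W$, where you take a minimal dependent subset. From there both use Lemma~\ref{lem:Y} to get a nonsingular square $0/1$ minor of order at most $p$ and apply Cramer's rule. Your explicit checks that $a$ is nondecreasing (by bordering) and that $|\det|\le a(\cdot)$ also holds for negative determinants (by a column swap) fill in two points the paper uses without comment.
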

\begin{proof}
Let $W$ be a maximal linearly independent subset of $Z$ (note that $|W|\leq |Z|-1=p$). By \Cref{lem:Y},  there exists a subset $Y$ of $X$ with $|Y|= |W|$ such that the set $\pi_Y(W) \subset \mathbb{R}^Y$ is linearly independent.
	
Pick any vector $z \in Z \setminus W$. By the maximality of $W$, the set $W\cup\{z\}$ is linearly dependent, and so the set $\pi_Y(W)\cup\{\pi_Y(z)\}$ is linearly dependent too.
Therefore, the system $\sum_{v\ \in W} x_v \cdot \pi_Y(v)=\pi_Y(z)$ has a solution $\mathbf{x}$. Note that this is a square system, with $|Y|$ equations in $|W|=|Y|$ unknowns.

The solution $\mathbf{x}$ is unique, because $\pi_Y(W)$ is linearly independent.
Therefore, $\mathbf{x}$ can be computed by Cramer's rule.
It gives $x_v= \Delta_v/\Delta_z$ for each $ v \in W$, where $\Delta_z$ is the determinant of the matrix with columns $\pi_Y(v)$, and $\Delta_v$ 
is the determinant of the same matrix where column $v$ is replaced with $\pi_Y(z)$. 
All these matrices are square matrices with binary entries and size at most $|Y|\leq p$, so their determinants are integers with absolute value at most $a(p)$.

Since the set $W\cup \{z\}$ is linearly dependent and the system $ \sum_{v \in W} x_v \cdot \pi_Y(v)=\pi_Y(z)$ has a unique solution, the system $ \sum_{v \in W} x_v\cdot v=z$ has (the same) unique solution. 
Multiplying by $\Delta_z$ gives
$\sum_{v \in W} \Delta_v\cdot v - \Delta_z\cdot z = 0$
Putting $ \Delta_v=0$ for each $v\in Z \setminus (W\cup\{z\})$ yields the desired linear combination.
\qed
\end{proof}

We are now ready to prove the main theorem of this subsection.

Given an input set of items $D$, let $OPT(D_k)$ denote the optimal number of bins in $k$-times bin-packing of the items in $D$.

\begin{theorem} 
\label{theorem: finite-k-exists}
For any electricity division instance with 
demand-vector $D$  and supply $S$, 
there exists an integer $k \leq a(n)$ 
such that $\frac{k}{OPT(D_k)}$ is the egalitarian connection-time per agent.
This time can be attained by solving \kbp{} on $D$ and allocating a fraction $\frac{1}{OPT(D_k)}$ of the time to each bin in the optimal solution.
\end{theorem}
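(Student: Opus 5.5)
The plan is to write the egalitarian connection-time problem as a linear program over \emph{configurations} (feasible subsets of agents) and take a basic optimal solution. Its support columns are linearly independent $0/1$ vectors, so Cramer's rule (as in the proof of \Cref{integer coefficeints lemma}) gives rational values with a common denominator bounded by $a(n)$. Scaling by that denominator then turns the fractional schedule into a $k$-times bin packing.

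\textbf{Step 1 (LP formulation).} Normalize $T=1$. Let $\mathcal{C}$ be the finite family of nonempty sets $C\subseteq[n]$ with $\sum_{i\in C}D[i]\le S$, and let $\chi_C\in\{0,1\}^{[n]}$ be its incidence vector. Any output $(I_l,A_l)$ induces times $t_C\ge 0$ with $\sum_C t_C\le 1$ and $u_i^t=\sum_{C\ni i}t_C$, and conversely. We may assume every agent receives exactly the optimum $u^*$. Indeed, if agent $i$ gets more, shorten its connection by splitting an interval and removing $i$ from one part. This keeps the set feasible because demands are positive, and if the set becomes empty we simply leave that time unused. Singletons are feasible (each $D[i]\le S$, as in the model), so $u^*>0$. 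Put $y_C=t_C/u^*$. Then $y$ is feasible for
\[
\min \sum_{C\in\mathcal{C}} y_C \quad\text{s.t.}\quad \sum_{C\in\mathcal C} y_C\,\chi_C=\mathbf{1},\ \ y\ge 0 .
\]
Conversely, any feasible $y$ with value $v$ yields connection time $1/v$ for every agent, by scheduling $C$ for time $y_C/v$. Hence the optimal value of this LP is exactly $1/u^*$.

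\textbf{Step 2 (basic solution and bounded denominator).} The LP is feasible (singletons) and bounded below by $0$, so it has an optimal basic solution $y^*$. Its support $W=\{\chi_C: y^*_C>0\}$ is linearly independent, hence $|W|\le n$. By \Cref{lem:Y} there is $Y\subseteq[n]$ with $|Y|=|W|$ such that $\pi_Y(W)$ is linearly independent. The square system $\sum_{C} y_C\,\pi_Y(\chi_C)=\pi_Y(\mathbf 1)$ therefore has the unique solution $y^*$, and by Cramer's rule $y^*_C=\Delta_C/\Delta$. Here $\Delta$ and all $\Delta_C$ are determinants of $0/1$ matrices of order $|W|\le n$.

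Set $k=|\Delta|$; then $1\le k\le a(|W|)\le a(n)$. The monotonicity $a(m)\le a(n)$ for $m\le n$ holds by bordering a matrix with an identity block. Every $k y^*_C$ is a nonnegative integer.

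\textbf{Step 3 (from the LP to $k$BP).} Take $k y^*_C$ copies of each bin $C$. Each bin respects the capacity, contains each item at most once, and each item $i$ lies in exactly $\sum_{C\ni i}k y^*_C=k$ bins. So this is a valid $k$BP packing with $q=k/u^*$ bins, which gives $OPT(D_k)\le k/u^*$, i.e.\ $k/OPT(D_k)\ge u^*$.

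Conversely, any $k$BP packing with $q'$ bins, each connected for $1/q'$ of the time, gives every agent time $k/q'$. Taking $q'=OPT(D_k)$ shows $k/OPT(D_k)\le u^*$. Hence $k/OPT(D_k)=u^*$, and allocating $1/OPT(D_k)$ of the time to each bin of an optimal $k$BP solution attains it.

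\textbf{Main obstacle.} The delicate point is Step 1's reduction to the equality-constrained LP. Exact equality $\sum_{C\ni i}y_C=1$ is what makes the scaled solution a genuine $k$BP, with each item appearing exactly $k$ times rather than at least $k$. This relies on trimming agents from feasible sets, which is harmless because removing an agent never violates the supply constraint. The remaining steps are standard basic-solution reasoning combined with \Cref{lem:Y} and Cramer's rule.
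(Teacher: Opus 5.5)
Your proof is correct. It shares the paper's endgame (\Cref{lem:Y} plus Cramer's rule, with denominators bounded by $a(n)$) but reaches the linearly independent support by a different route.

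\textbf{The paper's route.} It argues geometrically. It defines $r_{\max}$ as the point where the ray $r\cdot e$ leaves $\mathrm{CH}(W)$ and uses Carath\'eodory's theorem to get a simplex. It then passes to the face of minimal dimension containing $r_{\max}e$, which has at most $n$ vertices because the point lies on the boundary. A separate perturbation argument shows these vertices are linearly independent. Finally it applies \Cref{integer coefficeints lemma} to $W''\cup\{e\}$ and uses uniqueness of coefficients to show that $-\Delta_w/\Delta_e$ is positive.

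\textbf{Your route.} You replace that whole first half with the fundamental theorem of linear programming. A basic optimal solution of the equality-constrained covering LP automatically has a linearly independent support of at most $n$ columns. Cramer's rule on the projected square system then gives integer values $k y^*_C$ directly, and their positivity is immediate from $y^*_C>0$.

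\textbf{What your version makes explicit.} Several points the paper leaves implicit are handled in your argument:
\begin{itemize}
\item The trimming argument. It justifies requiring the exact equality $\sum_C y_C\chi_C=\mathbf 1$. The paper only asserts that $\mathrm{CH}(W)$ ``naturally represents'' the schedules, which tacitly relies on down-closedness.
\item The converse inequality $k/OPT(D_k)\le u^*$.
\item The monotonicity $a(m)\le a(n)$.
\item Attainment of the optimum, obtained from LP theory rather than compactness.
\item The degenerate case $\sum_i D[i]\le S$. There $W''=\{e\}$, so the paper's set $W''\cup\{e\}$ is not linearly dependent and its lemma does not literally apply; your argument covers this case uniformly.
\end{itemize}

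\textbf{What the paper's route offers.} It gives a geometric picture: the optimal schedule sits on the face of the configuration polytope hit by the diagonal ray. It also isolates a reusable lemma about small integer dependencies among $0/1$ vectors.
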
 \label{existence of finite k}
\begin{proof}
	The set $W = \{ (w_1, \ldots, w_n) \in \{  0,1 \}^n : \sum_{i=1}^{n} d_i w_i \leq S \}$ naturally represents all admissible ways to pack the agents into bins (all feasible ``configurations''). The convex hull $\mathrm{CH}(W)$ of $W$ naturally represents the set of all possible schedules, where a schedule is represented by the fraction of time allocated to each configuration.
	
	Denote $e=(1, \ldots, 1)\in\mathbb{R}^n$. Let $r_{\max}$ be the egalitarian connection time, defined by $r_{\max} = \mathrm{sup} \{ r \in [0,1] : r\cdot e \in \mathrm{CH}(W) \}$. Since the set $\mathrm{CH}(W)$ is compact, $r_{\max}$ is attained, that is $r_{\max}\cdot e \in \mathrm{CH}(W)$. The electricity division problem has a solution, so $r_{\max} > 0$. 
	
	Since $r_{\max}\cdot e \in \mathrm{CH}(W)$, by Caratheodory's theorem \cite{enwiki:1216835967_CaratheodoryThm}, there exists a simplex $\mathrm{CH}(W')$ with the set $W'\subset W$ of vertices such that $r_{\max}\cdot e \in \mathrm{CH}(W')$.	
	
	Consider the following illustrating example. Let the supply $S$ is $25$ and there are three agents $x$, $y$, and $z$ with the demands $d_x = 11$, $d_y = 12$, and $d_z = 13$, respectively. It is possible to connect each agent per $r_{\max} = 2/3$ of the time by connecting each of $\{x,y\}, \{y,z\}, \{x,z\}$ per $1/3$ of the time. This can be visually represented as shown in \Cref{fig:two-third-example}.
	\begin{figure}[ht]
		\centering
		\includegraphics[height=0.32\textheight, width=0.32\linewidth, keepaspectratio]{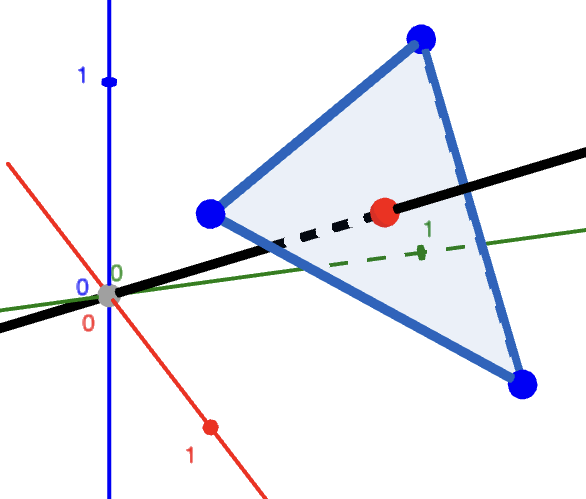}
            \caption{The three blue points are (1,1,0), (1,0,1) and (0,1,1). The black line that originates from the origin intersects with the triangle (convex hull of the three blue points) at red point (2/3,2/3,2/3).}
		\label{fig:two-third-example}
	\end{figure}
	The three blue points $(1,1,0)$, $(1,0,1)$, and $(0,1,1)$ represent elements of $W'$, and the white triangle with these vertices represents $\mathrm{CH}(W')$. The black ray originating from the origin $(0,0,0)$ is $e\cdot \{ r: r\geq 0 \}$. The red point $(2/3, 2/3, 2/3)$   represents the intersection of the ray with $\mathrm{CH}(W')$ and corresponds to $r_{\max} = 2/3$.

	Let $W''$ be the set of the vertices of the face of $\mathrm{CH}(W')$ of minimal dimension containing $r_{\max}\cdot e$. Clearly, $r_{\max} \cdot e$ is a boundary point of $\mathrm{CH}(W')$,  so $|W''| \leq n$. In other words, there exists an optimal electricity division schedule with at most $n$ different configurations. In the above example $W'' = W'$ and  $|W''| = 3 = n$.
	
	Let $(\lambda_w)_{w \in W''}$ be the barycentric coordinates of $r_{\max}\cdot e$, such that $\lambda_w > 0$ for each $w \in W''$ and $\sum_{w \in W''}\lambda_w = 1$, and
	\begin{align}
	\label{eq:rmaxe}
	r_{\max}\cdot e = \sum_{w \in W''} \lambda_w w.    
	\end{align}
	In the above example $\lambda_w = 1/3$ for each $w \in W''$.
	
	We claim that the set $W''$ is linearly independent. Indeed, suppose for a contradiction that there exist disjoint nonempty subsets $W''_1$ and $W''_2$ of $W''$ and positive numbers $(\mu_w)_{w \in W''_1 \cup W''_2}$ such that $ \sum_{w \in W''_1} \mu_w w = \sum_{w \in W''_2} \mu_w w $.
	Assume w.l.o.g. that $ \sum_{w \in W''_1} \mu_w \leq \sum_{w \in W''_2} \mu_w $. Let $ \nu = \min_{w \in W''_2} {\lambda_w / \mu_w}
	$. Then,
	\[
	r_{\max}\cdot e = \sum_{w \in W''} {\lambda_w w} = 
	\sum_{w \in W''} {\lambda_w w}  + \nu \left(\sum_{w \in W''_1} {\mu_w w} - \sum_{w \in W''_2} {\mu_w w}\right) = \sum_{w \in W''} {\lambda'_w w},
	\]
	where
	\begin{equation}
		\lambda'_w  = 
		\begin{cases}
			\lambda_w + \nu \mu_w	& \text{ if } w \in W''_1 \\	
			\lambda_w - \nu \mu_w	& \text{ if } w \in W''_2 \\
			\lambda_w 	& \text{ if } w \in W'' \setminus (W''_1 \cup W''_2)
		\end{cases} 
	\end{equation}
	Then, $ s := \sum_{w \in W''} {\lambda'_w} = \sum_{w \in W''} {\lambda_w}  + \nu (\sum_{w \in W''_1} {\mu_w} - \sum_{w \in W''_2} {\mu_w}) \leq 1, \lambda'_w \geq 0 $ for each $ w \in W'' $, and $ \lambda'_w = 0$ for some $ w \in W''_2$. Then $ s^{-1} \cdot r_{\max} \cdot e \in \mathrm{CH} (W'' \setminus \{w\})$, that contradicts the minimality of $W''$.

Let $Z:= W'' \cup \{e\}$.
As $W''$ contains a multiple of $e$, the set $Z$ is linearly dependent. 
\Cref{integer coefficeints lemma} can be applied to it with $p = |W''|\leq n$.
The lemma implies that there exist integers $(\Delta_v)_{v \in Z}$ which are not all zeroes such that $ |\Delta_v| \leq a(p)\leq a(n)$
 for each $v \in Z$ and $ \sum_{v \in Z} \Delta_v \cdot v = 0$. The set $W''$ is linearly independent, so $ \Delta_e \neq 0$ and 
\begin{align}
\label{eq:esum}
e = \sum_{w \in W''} {(-\Delta_w / \Delta_e)w}.     
\end{align}
But by \eqref{eq:rmaxe},
$e = \sum_{w \in W''} {(\lambda_w/r_{\max})}\cdot w $. As $W''$ is linearly independent, the coefficients in the expression for $e$ are unique, so we must have $-\Delta_w / \Delta_e = \lambda_w / r_{\max} > 0 $ for each $w \in W''$. 
Therefore, $-\Delta_w / \Delta_e = |\Delta_w| / |\Delta_e|$.

Construct a packing in which each configuration $w\in W''$ appears $|\Delta_w|$ times.
Then the vector 
$\sum_{w \in W''} {|\Delta_w|\cdot w}$ represents the number of times each item appears in the packing; but by \eqref{eq:esum}, this sum equals $|\Delta_e|\cdot e$. Therefore, it is a $k$-times bin-packing with $k := |\Delta_e|\leq a(n)$.

The total number of bins in the packing is $\sum_w |\Delta_w| = 
\sum_w |\Delta_e|\cdot \lambda_w/r_{\max}
=k/r_{\max}$.
Therefore, connecting each bin for an equal amount of time yields an allocation in which each agent is connected for a fraction $k/(k/r_{\max}) = r_{\max}$ of the time, as required.
\qed
\end{proof}

\subsection{Lower bound}
\label{subsec: lower bound on k}
For any electricity division instance $D$,
let $K(D)$ denote the smallest $k$ such that 
the egalitarian connection time equals $\frac{k}{OPT(D_k)}$.
For any $n$, let $K(n)$ denote the maximum $K(D)$ over all instances with $n$ households.
\Cref{theorem: finite-k-exists} implies $K(n)\leq a(n)$, which provides an exponential upper bound on $K(n)$.
This raises the question of whether there is a matching lower bound.
Currently, we only have two very loose lower bounds.

\begin{proposition}
\label{prop:lower:n-1}
For any $n\geq 2$, there is a lower bound $K(n)\geq n-1$
\end{proposition}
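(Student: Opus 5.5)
The plan is to exhibit, for every $n\geq 2$, a single instance with $n$ households whose egalitarian connection time cannot be written as $\frac{k}{OPT(D_k)}$ for any $k<n-1$. I will use the most symmetric instance available: all demands equal, $D=(1,1,\ldots,1)$ with $n$ entries, and supply $S=n-1$. The configurations are then exactly the subsets of at most $n-1$ agents.

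\textbf{Step 1: compute the egalitarian connection time.} I will show it equals $\frac{n-1}{n}$.
For the upper bound, note that at every moment at most $n-1$ agents are connected. Hence $\sum_i u^t_i\leq (n-1)T$, and so $\min_i u^t_i\leq \frac{n-1}{n}T$.
For the lower bound, take the $n$ configurations $[n]\setminus\{j\}$ for $j=1,\ldots,n$ and give each of them $1/n$ of the time. Every agent belongs to exactly $n-1$ of these configurations, so it is connected for a fraction $\frac{n-1}{n}$ of the time. Read as a packing, this is an $(n-1)$-times bin-packing with $n$ bins, so $k=n-1$ attains the optimum and $K(D)\leq n-1$.

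\textbf{Step 2: show that no $k<n-1$ works.} Fix $1\leq k<n-1$. In any $k$BP of $D_k$ the total item size is $kn$, and each bin has size at most $n-1$. Hence
\[
OPT(D_k)\geq \left\lceil \frac{kn}{n-1}\right\rceil .
\]
Now $kn=k(n-1)+k$, so $kn\equiv k \pmod{n-1}$. Since $0<k<n-1$, the number $n-1$ does not divide $kn$, and therefore $OPT(D_k)>\frac{kn}{n-1}$. This gives
\[
\frac{k}{OPT(D_k)}<\frac{n-1}{n},
\]
which is strictly less than the egalitarian connection time from Step 1. Thus $K(D)=n-1$, and consequently $K(n)\geq n-1$.

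The case $n=2$ is trivial, since $K(2)\geq 1$ always holds; the argument above also covers it. I do not expect a real obstacle here. The only points needing care are the counting upper bound in Step 1 and the divisibility argument in Step 2, which relies on $\gcd(n,n-1)=1$ through the congruence $kn\equiv k \pmod{n-1}$.
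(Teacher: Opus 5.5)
Your proposal is correct and follows the paper's proof: the same instance ($n$ unit demands, $S=n-1$), the same $n$-bin $(n-1)$-times packing, and the same volume bound $OPT(D_k)\geq kn/(n-1)$, which cannot be attained for $k<n-1$ because $\gcd(n,n-1)=1$. Your counting argument (at most $n-1$ agents connected at any moment) is a slightly more self-contained justification of $r_{\max}=\frac{n-1}{n}$ than the paper's remark that ``all bins are full.''
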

\begin{proof}
Consider an instance with $n$ items of size $1$, and let $S:=n-1$. 
Then we can construct $n$ bins, each of which contains a different subset of $n-1$ items. 
This packing is optimal, as all bins are full; it is an $(n-1)$-times bin-packing, as each item appears in exactly $n-1$ different bins.
Therefore, $r_{\max} = (n-1)/n$.
We show that any $k<n-1$ does not yield an optimal packing. For any $k$, the sum of all item sizes \dk{in $D_k$} is $k n$, so any $k$-times bin-packing must contain at least $k n / S = k\frac{n}{n-1}$ bins. But $n$ and $n-1$ are coprime, so for any $k<n-1$, the expression $k\frac{n}{n-1}$ is not an integer, which means that the packing must have non-full bins. \qed
\end{proof}

\begin{proposition}
\label{prop:lower:9}
$K(6) = 9$.
\end{proposition}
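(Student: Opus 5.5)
The upper bound $K(6)\le 9$ is immediate from \Cref{theorem: finite-k-exists}, since $a(6)=9$. The work is the matching lower bound: I will exhibit an instance $D$ with $n=6$ households for which every $k<9$ fails to attain the egalitarian connection time. I will build it from one of the two $6\times 6$ binary matrices in \eqref{eq:mat9} of determinant $9$, writing $M$ for the chosen matrix. The rows of $M$ are meant to be the configurations of a unique optimal schedule.

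\textbf{Step 1 (instance).} Let $S:=1$ and define the demand vector $d$ as the unique solution of $Md=e$. Then each row of $M$, read as a set of households, has total demand exactly $S$. By Cramer's rule the entries of $d$ are rationals with denominator dividing $9$. I will check directly that $d>0$; if this fails for the first matrix, I will try the second matrix or the transpose. Then I will check, by running over all $2^6$ subsets, that no subset of households other than the six rows of $M$ has total demand exactly $1$.

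\textbf{Step 2 (optimal value and uniqueness via a dual certificate).} Take any schedule, with connection times $u_i$ and configurations $w$ used for fractions $\mu_w$. Then
\[
\sum_i d_i u_i=\sum_w \mu_w\,(d\cdot w)\le S=1,
\]
so $\min_i u_i\le 1/\sum_i d_i$. This bound is attained by the schedule that gives row $w$ of $M$ the weight $\lambda_w$, where $\lambda := r\,(M^{T})^{-1}e$ and $r$ is chosen so that $\sum_w\lambda_w=1$. This requires $\lambda>0$, which I will verify numerically.

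Equality in the bound forces two things. First, every configuration with $\mu_w>0$ must be full, so by Step 1 it is a row of $M$. Second, all $u_i$ must be equal. Since the rows of $M$ are linearly independent ($\det M=9\neq 0$), the weights of any optimal schedule are uniquely determined and equal $\lambda$.

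\textbf{Step 3 (minimal $k$).} An optimal $k$BP solution with $OPT(D_k)=k/r_{\max}$ bins corresponds to an optimal schedule in which each row $w$ occurs an integer number of times $c_w$, with $c_w=k\lambda_w/r_{\max}$. Write $\lambda/r_{\max}=(M^{T})^{-1}e$. Its entries are the cofactor ratios $\Delta_w/\det M$, so they have denominator dividing $9$. I will compute this vector explicitly and show that, in lowest terms, some entry has denominator exactly $9$, for example an entry equal to $j/9$ with $\gcd(j,9)=1$. Then $k\lambda_w/r_{\max}\in\mathbb{Z}$ forces $9\mid k$, so $K(D)\ge 9$. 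Combined with the upper bound, this gives $K(6)=9$.

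\textbf{Main obstacle.} The main risk is the concrete computation. We need the chosen matrix to satisfy all three conditions at once: $d>0$, $\lambda>0$, and no spurious full subsets, together with the denominator-$9$ condition. If none of the candidate matrices satisfies all of them, I will search among other determinant-$9$ binary matrices, or permute rows and columns. Determinant $\pm9$ is invariant under these operations, while positivity and the subset-sum condition may change.
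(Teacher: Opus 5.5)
Your argument is correct. With the rightmost matrix of \eqref{eq:mat9}, every check you defer goes through:
\begin{itemize}
\item $9M^{-1}e=(4,2,5,3,2,1)$, so all demands are positive;
\item $9(M^{T})^{-1}e=(1,2,3,5,2,4)$, so all weights are positive, and the entry $1/9$ forces $9\mid k$;
\item the only subsets of $\{4,2,5,3,2',1\}$ with sum $9$ are the six rows.
\end{itemize}
Your fallback is genuinely needed. For the leftmost matrix, $M^{-1}e=\tfrac13(1,1,0,1,1,1)$ gives a zero demand. For its transpose, the weight vector has a zero entry instead.

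The paper builds the same instance and uses the same volume bound for optimality, but finishes differently. Since $V(D_k)=17k$ and $S=9$, we get $OPT(D_k)\ge 17k/9$, with equality only if all bins are full. Because $\gcd(17,9)=1$, this bound is not an integer for $k<9$, hence $k/OPT(D_k)<9/17=r_{\max}$. Your own Step 3 already contains this shortcut: $OPT(D_k)=k/r_{\max}=17k/9$ must be an integer, which gives $9\mid k$ at once. So for this instance, the enumeration of full subsets in Step 1 and the uniqueness part of Step 2 are unnecessary.

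What your route buys is structural information (the optimal schedule is unique) and a different kind of sufficient condition. Suppose the rows of $M$ are the only full configurations. Then it is enough that one entry of the integer vector $\det M\cdot(M^{T})^{-1}e$ be coprime to $\det M$. This does not require the grand sum $g$ to be coprime to $\det M$, as the paper's conjectured generalization does. The paper's route is shorter and needs no case check, only that $V(D)$ and $S$ be coprime.
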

\begin{proof}
As $K(6)\leq a(6)=9$, it is sufficient to prove the lower bound $K(6)\geq 9$.

Let $A$ be the rightmost $6\times 6$ matrix in \eqref{eq:mat9}. Let $S := \det(A) =9$.

Let $B := \det(A)\cdot A^{-1}$, and note that $B$ is an integer matrix.

Let the vector of demands be $D := B\cdot e = [4,2,5,3,2,1]$.
By construction, Each row in $A$ corresponds to a configuration with sum exactly $9$. These configurations are: $4+5, 4+2+3, 2+5+2', 5+3+1, 4+3+2', 4+2+2'+1$ (where $2'$ denotes the second household with demand $2$).

Construct a packing by $\mathbf{x} := B^T \cdot e = [1,2,3,5,2,4]$.
Every element $x_i$ represents the number of times that configuration $i$ appears in the packing. In particular, there are ---
\begin{itemize}
\item $1$ bin with $4+5$;
\item $2$ bins with $4+2+3$;
\item $3$ bins with $2+5+2'$;
\item $5$ bins with $5+3+1$;
\item $2$ bins with $4+3+2'$;
\item $4$ bins with $4+2+2'+1$.
\end{itemize}
By construction, each item appears exactly $9$ times, so it is a valid $9$-times bin-packing. It has $17$ bins, and it is optimal since all bins are full. Therefore, $r_{\max} = 9/17$.

We now show that any $k<9$ does not yield an optimal packing. For any $k$, the sum of all bins in $k$BP is $17 k$, so any $k$BP must contain at least $17 k / 9$ bins. 
But $17$ and $9$ are coprime, Therefore, for any $k<9$, the expression $17 k / 9$ is not an integer, which means that the packing must have non-full bins. \qed
\end{proof}
It is open whether \Cref{prop:lower:9} can be generalized.
The following is our conjectured generalization of \Cref{prop:lower:9}.

\begin{conjecture}
\label{prop:coprime}
Let $A$ be an $n\times n$ binary matrix, 
let $B := \det(A)\cdot A^{-1}$,
and let $g$ be the sum of all elements in $B$ (the ``grand sum'' of $B$).
If $g$ and $\det(A)$ are coprime, then $K(n)\geq \det(A)$.
\end{conjecture}
Note that \Cref{prop:lower:9} is a special case with $\det(A)=9$ and $g=17$.
\section{Fast Approximation Algorithms} \label{section:approx_algo}
The results of the previous section are not immediately applicable to fair electricity division, as $k$BP is known to be an NP-hard problem. However, they do hint that good approximation algorithms for $k$BP can provide good approximation for electricity division.
Therefore, in this section, we study several fast approximation algorithms for $k$BP.

\subsection{\texorpdfstring{$FFk$}{FFk} --- First-Fit for \texorpdfstring{$k$}{k}BP} \label{section:approx-algo:subsection:approx_algo:ffk}
The $k$-times version of the First-Fit bin-packing algorithm packs each item of $D_k$ in order into the first bin where it fits and does not violate the constraint that each item should appear in a bin at most once. If the item to pack does not fit into any currently open bin, $FFk$ opens a new bin and packs the item into it. For example: consider $D = \{10,20,11\}, k=2, S=31$. $FFk$ will result the bin-packing $\{10,20\}, \{11,10\}, \{20,11\}$. 
It is known that the asymptotic approximation ratio of $FF$ is $1.7$ \cite{Dsa2013}.
Below, we prove that, for any fixed $k>1$, the asymptotic approximation ratio of $FFk$ for large instances (when 
\dk{$n \to \infty$}
) is better, and it improves when $k$ increases.

\begin{toappendix}
    \section{Fast Approximation Algorithms}
    \label{appendix: ffk-algorithm: ffk-general-case-proof}
\end{toappendix}

\begin{theoremrep} \label{ffk:theorem3/2}
	For every input $D$ and $k\geq 1$, $FFk(D_k) \leq \left(1.5+\frac{1}{5k}\right) \cdot OPT(D_k) + 3\cdot k$.
\end{theoremrep}
\begin{inlineproof}
	At a very high level, the proof works as follows:
	\begin{itemize}
		\item We define a \emph{weight} for each item, which depends on the item size, and may also depend on the instance to which the item belongs.
		\item We prove that the average weight of each bin in an optimal packing is at most some real number $Z$, so the total weight of all items is at most $Z\cdot OPT(D_k)$.
		\item We prove that the average weight of a bin in the $FFk$ packing (except some $3 k$ bins that we will exclude from the analysis) is at least some real number $Y$, so the total weight of all items is at least $Y\cdot (FFk(D_k) - 3 \cdot k)$.
		\item Since the total weight of all items is fixed, we get $FFk(D_k) - 3\cdot k \leq (Z/Y)\cdot OPT(D_k)$, which gives an asymptotic approximation ratio of $Z/Y$.
	\end{itemize}
	
	\paragraph{\textbf{Basic weighting scheme}}
	
	The basic weighting scheme we use follows \cite{Dsa2013}. The weight of any item of size $v$ is defined as
	\begin{align*}
		w(v) := v/S + r(v),
	\end{align*}
	where $r$ is a \emph{reward function}, computed as follows:
	\[
	r(v) := 
	\begin{cases}
		0 \quad & \text{if } v/S \leq \frac{1}{6}, \\
		\frac{1}{2}(v/S - \frac{1}{6}) \quad & \text{if } v/S \in (\frac{1}{6}, \frac{1}{3}), \\
		1/12 \quad & \text{if } v/S \in [\frac{1}{3}, \frac{1}{2}], \\
		4/12 \quad & \text{if } v/S > \frac{1}{2}.
	\end{cases}
	\]

	For $k>1$, we use a modified weighting scheme, which  gives different weights to items that belong to different instances. This allows us to get a better asymptotic ratio. We describe the modified weighting scheme below.
	
	\paragraph{\textbf{Associating bins with instances}}
	
	Recall that $FFk$ processes one instance of $D$ completely, and then starts to process the next instance. 
	We associate each bin in the $FFk$ packing with the instance in which it was opened. So for every $j \in\{1,\ldots,k\}$, the \emph{bins of instance $j$} are all the bins, whose first allocated item comes from the $j$-th instance of $D$.
	Note that bins of instance $j$ do not contain items of  instances $1,\ldots,j-1$, but may contain items of any instance $j,\ldots,k$. 
	
	For the analysis, we also need to associate some bins in the optimal packing with specific instances.  We consider some fixed optimal packing.
	For each bin $B$ in that packing:
	\begin{itemize}
		\item If $B$ contains an item $x$ with $V(x)>S/2$, and $x$ belongs to instance $j$, 
		then we associate bin $B$ with instance $j$. Clearly, there can be at most one item of size larger than $S/2$ in any feasible bin, so there is no ambiguity.
		Moreover, we ensure that all other items in $B$ belong to instance $j$ too: if some other item $x'\in B\setminus \{x\}$ belongs to a different instance $j'$, then we replace $x'$ with its copy from instance $j$. Note that the other copy of $x'$ must be located in a bin different than $B$, due to the restrictions of $k$BP. Since both copies of $x'$ have the same size, the size of all bins remains the same.
		\item If $B$ contains no item of size larger than $S/2$, then we do not associate $B$ with any instance. 
	\end{itemize}
	
	\paragraph{\textbf{Partitioning $FFk$ bins into groups}}
	For the analysis, we partition the bins of each instance $j\in[k]$ in the $FFk$ packing into five groups.
	
	\textbf{Group 1}.
	Bins with a single item of instance $j$, whose size is at most $S/2$. 
	There is at most one such bin.
	This is because, if there is one such bin $B$ of instance $j$,
	it means that all later items of instance $j$ do not fit into $B$, so their size must be greater than $S/2$. Therefore, any later bin $B'$ of instance $j$ must have size larger than $S/2$.
	
	\textbf{Group 2}.
	Bins with two or more items of instance $j$, whose total size is at most $2S/3$.
	There is at most one such bin.
	This is because, if there is one such bin $B$ of instance $j$,
	it means that all later items of instance $j$ do not fit into $B$, so their size must be greater than $S/3$. Therefore, in any later bin $B'$ with two or more items of instance $j$, their total size is larger than $2S/3$.
	
	\textbf{Group 3}.
	Bins with a single item of instance $j$, whose size is larger than $S/2$. 
	
	\textbf{Group 4}.
	Bins with two or more items of instance $j$, whose total size is at least $10 S/12$.
	
	\textbf{Group 5}.
	Bins with two or more items of instance $j$, whose total size is in $(2 S/3, 10 S/12)$.
	
	In the upcoming analysis, we will exclude from each instance, the at most one bin of group 1, at most one bin of group 2, and at most one bin of group 5. All in all, we will exclude at most $3k$ bins of the $FFk$ packing from the analysis.
	
	\paragraph{\textbf{Analysis using the basic weighting scheme}}
    As a warm up
    we prove that, with the basic weighting scheme, the total weight of each optimal bin $B$ is at most $17/12$. Since the total size of $B$ is at most $S$, we have $w(B)\leq 1+r(B)$, so it is sufficient to prove that the reward $r(B)\leq 5/12$. Indeed:
	\begin{itemize}
		\item If $B$ contains an item larger than $S/2$, then 
		this item gives $B$ a reward of $4/12$;
		the remaining room in $B$ is smaller than $S/2$. 
  This can accommodate either a single item of size at least $S/3$ and some items smaller than $S/6$,
  or two items of size between $S/6$ and $S/3$; in both cases, the total reward is at most $1/12$. 
            \item If $B$ does not contain an item larger than $S/2$, then there are at most $5$ items larger than $S/6$, so at most $5$ items with a positive reward. The reward of each item of size at most $S/2$ is at most $1/12$.
	\end{itemize}
 In both cases, the total reward is at most $5/12$.	
 
	We now prove that, with the basic weighting scheme, the average weight of each $FFk$ bin is at least $10/12$. 
	In fact, we prove a stronger claim: we prove that, for each instance $j$, the average weight of the items \emph{of instance $j$ only} in bins of instance $j$ (ignoring items of instances $j+1,\ldots k$ if any) is at least $10/12$.
	
	We use the above partition of the bins into 5 groups, excluding at most one bin of group 1 and at most one bin of group 2.
	
	The bins of group 3 (bins with a single item of instance $j$, which is larger than $S/2$) have a reward of $4/12$, so their weight is at least $1/2+4/12 = 10/12$.
	
	The bins of group 4 (bins with two or more items of instance $j$, which have a total size larger than $10 S/12$) already have weight at least $10/12$, regardless of their reward.
	
	We now consider the bins of group 5 (bins with two or more items of instance $j$, which have a total size in $(2S/3, 10S/12)$).
	Denote the bins in group 5 of instance $j$, in the order they are opened, by $B_1,\ldots,B_Q$.
	Consider a pair of consecutive bins, for instance, $B_t, B_{t+1}$.
	Let $x := $ the free space in bin $B_t$ during instance $j$.
	Note that $x\in (S/6, S/3)$. Let $c_1,c_2$ be some two items of instance $j$ which are packed into $B_{t+1}$.
	
	For each $c_i \in \{c_1,c_2\}$, as $FFk$ did not pack $c_i$ into $B_t$ during the processing of instance $j$, there are two options:
	either $c_i$ is too large (its size is larger than $x$), or $B_t$ already contains other copies of $c_i$ from other instances. But the second option cannot happen, because $B_t$ belongs to instance $j$, so it contains no items of instances $1,\ldots,j-1$; and while $c_i$ of instance $j$ was processed by $FFk$, items of instance $j+1,\ldots,k$ were not processed yet. Therefore, necessarily $V(c_i)>x$.
	
	Since $S/6<x<S/3$, the reward of each of $c_1,c_2$ is at least $(x/S-1/6)/2$, and the reward of both of them is at least $x/S-1/6$. Therefore, during instance $j$,
	\begin{align*}
		V(B_t)/S + r(B_{t+1}) \geq (S-x)/S + (x/S-1/6)
		= 1-1/6 = 10/12.
	\end{align*}
	In the sequence $B_1,\ldots,B_Q$, there are $Q-1$ consecutive pairs. Using the above inequality, we get that the total weight of these bins is at least $10/12\cdot (Q-1)$, which is equivalent to excluding one bin (in addition to the two bins excluded in groups 1 and 2).
	
	Summing up the weight of all non-excluded bins gives at least $(10/12)\cdot (FFk(D_k)-3k)$. Meanwhile, the total weight of optimal bins is at most $(17/12)\cdot OPT(D_k)$. This yields $FFk(D_k)\leq (17/10)\cdot OPT(D_k) + 3 k$, which corresponds to the known asymptotic approximation ratio of $1.7$ for $k=1$.
	
	We now present an improved approximation ratio for $k>1$. 
	To do this, we give different weights to items of different copies of $D$. 
	We do it in a way that the average weight of the $FFk$ bins (except the $3k$ excluded bins) will remain at least $10/12$. 
    So the total weight of all items is at least $(10/12)(FFk(D_k) -3k) \leq V(D)$.
	Meanwhile, the average weight of the optimal bins in some $k-1$ instances (as well as the unassociated bins) will decrease to $15/12$, whereas the average weight of the optimal bins in a single instance will remain $17/12$.
    So the total weight of all items is at most $ V(D) \leq \frac{(17/12) + (15/12)\cdot(k-1)}{k} \cdot OPT(D_k)$. Therefore, 
    \[
    FFk(D_k) \leq (1.5 + \frac{1}{5k})\cdot OPT(D_k) + 3k.
    \]
	This would lead to an asymptotic ratio of at most $1.5+\frac{1}{5k}$.
	
	\paragraph{\textbf{Warm-up: $k=2$}}
    To explain the main idea of our analysis,
	we will analyze the case $k=2$, and prove an asymptotic approximation ratio of $1.5+\frac{1}{2\cdot 5} = 1.6$.
	We give a detailed proof of the general case of any $k \geq 1$ in \Cref{appendix: ffk-algorithm: ffk-general-case-proof}. 
	
	We call a bin in the $FFk$ packing \emph{underfull} if it has only one item, and this item is larger than $S/2$ and smaller than $2S/3$. Note that all underfull bins belong to group 3. 
	We consider two cases.
	
	\paragraph{Case 1} No $FFk$ bin of instance $1$ is underfull (in instance $2$ there may or may not be  underfull bins). 
	In this case, we modify the weight of items in instance $1$ only: we reduce the reward of each item of size $v>S/2$ in instance $1$ from $4/12$ to $2/12$. 
	\begin{itemize}
		\item For any bin $B$ in the optimal packing, if $B$
		does not contain an item larger than $S/2$, then its maximum possible reward is still $3/12$ as with the basic weights. 
		If $B$ contains an item larger than $S/2$, then this item gives $B$ a reward of $2/12$ if it belongs to instance $1$, or $4/12$ if it belongs to instance $2$. The remaining room in $B$ is smaller than $S/2$, and the maximum total reward of items that can fit into this space is $1/12$ 
(as the remaining space can accommodate either one item of size at least $S/3$ and some items of size smaller than $S/6$, or two items of sizes in $(S/6,S/3)$). 
Overall, the reward of $B$ is at most $5/12$ if it contains an item larger than $S/2$ from instance $2$, and at most $3/12$ otherwise. As at most half the bins in the optimal packing contain an item larger than $S/2$ from instance $2$, at most half these bins have reward $5/12$, while the remaining bins have reward at most $3/12$, so the average reward is at most $4/12$.
		Adding the size of at most $1$ per bin leads to an average weight of at most $16/12$.
		\item In the $FFk$ packing, we note that the change in the reward affects only the bins of group $3$ (bins with a single item and $V(B)>S/2$). These bins now have a reward of at least $2/12$. Since none of them is underfull by assumption, their weight is at least $2/3+2/12 = 10/12$.
		The bins of groups $4$ and $5$ are not affected by the reduced reward: the same analysis as above can be used to deduce that their average weight (except one bin per instance excluded in group $5$) is at least $10/12$.
	\end{itemize}
	
	\paragraph{Case 2} At least one $FFk$ bin of instance $1$ is underfull.
	In this case, we modify the weight of items in instance $2$ only, as follows  (note that we reduce the \emph{weights} of these items, and not only their rewards):
	\[
	w(v) = 
	\begin{cases}
		0 \quad & \text{if } v/S < \frac{1}{3}, \\
		5/12 \quad & \text{if } v/S \in [\frac{1}{3}, \frac{1}{2}], \\
		10/12 \quad & \text{if } v/S > \frac{1}{2}.
	\end{cases}
	\]
	Note that all these weights are not higher than the basic weights of the same items. 
	\begin{itemize}
		\item In the optimal packing, every bin that contains an item larger than $S/2$ from instance $2$ is associated with instance $2$, and therefore contains only items of instance $2$ by construction. Therefore, the weight of any such bin is at most $15/12$ (the remaining space in such bin can pack only one item 
        of weight in the range $[S/3, S/2]$        ).
		The weight of every bin that contains no item larger than $S/2$ is still at most $15/12$, since its reward is at most $3/12$ as with the basic weights.
		The only bins that may have a larger weight (up to $17/12$) are those that contain an item larger than $S/2$ from instance $1$; at most half the bins in the optimal packing contain such an item. Therefore, the average  weight of a bin in the optimal packing is at most $16/12$.
		\item In the $FFk$ packing, 
		some bin $B^1$ in instance $1$ is underfull, that is, $B^1$ contains a single item of size $v\in(S/2, 2S/3)$, and the free space in $B^1$ is $S-v\in (S/3, S/2)$. 
		This means that, in the bins of instance $2$, there is no item of size at most $S/3$, since every such item would fit into $B^1$ (it is smaller than the available space in $B^1$, and it is not a copy of the single item in $B^1$).
		So for every bin $B^2$ in instance $2$ (except the excluded ones), there are only two options:
		\begin{itemize}
			\item $B^2$ contains one item larger than $S/2$ --- so its weight is $10/12$;
			\item $B^2$ contains two items, each of which is larger than $S/3$ --- so its weight is at least $2\cdot 5/12 = 10/12$.
		\end{itemize}
		The weight of items in instance $1$ does not change. So the average weight of bins in the $FFk$ packing (except the $3k$ excluded bins) remains at least $10/12$.
	\end{itemize}
	
	In all cases, for $k=2$, we get that the average weight of bins in the optimal packing is at most $16/12$, and the average weight of non-excluded bins in the $FFk$ packing is at least $10/12$. Therefore,
	$FFk -3k \leq (16/10)\cdot OPT = 1.6\cdot OPT$.
\qed	
\end{inlineproof}
\begin{toappendix}
    In this section we will give the proof of \Cref{ffk:theorem3/2} for the general case. Recall that we have already gave the proof for $k=2$ as a warm up case in \Cref{section:approx-algo:subsection:approx_algo:ffk}.    
\end{toappendix}
\begin{proof}
	Recall that we have a basic weighting scheme, in which each item of size $v$ is given a weight 
	\begin{align*}
		w(v) := v/S + r(v),
	\end{align*}
	where $r$ is a \emph{reward function}, computed as follows:
	\[
	r(v) := 
	\begin{cases}
		0 \quad & \text{if } v/S \leq \frac{1}{6}, \\
		\frac{1}{2}(v/S - \frac{1}{6}) \quad & \text{if } v/S \in (\frac{1}{6}, \frac{1}{3}), \\
		1/12 \quad & \text{if } v/S \in [\frac{1}{3}, \frac{1}{2}], \\
		4/12 \quad & \text{if } v/S > \frac{1}{2}.
	\end{cases}
	\]
	
	We modify this scheme later for some items, based on the instances to which they belong.
	
Denote by $u$, the first instance in the $FFk$ packing in which there are underfull bins (if there are no underfull bins at all, we set $u = k$). We modify the item weights as follows:
\begin{itemize}
\item For instances $1,\ldots, u-1$, we reduce the reward of each item of size $v>S/2$ from $4/12$ to $2/12$.
\item For instance $u$, we keep the weights unchanged.
\item For instances $u+1,\ldots, k$, 
we reduce the weights of items as in case $2$ of warm-up case for $k=2$ in the proof of \Cref{ffk:theorem3/2}), that is 
\[
w(v) = 
\begin{cases}
    0 \quad & \text{if } v/S < \frac{1}{3}, \\
    5/12 \quad & \text{if } v/S \in [\frac{1}{3}, \frac{1}{2}], \\
    10/12 \quad & \text{if } v/S > \frac{1}{2}.
\end{cases}
\]
\end{itemize}
Now we analyze the effects of these changes.
\begin{itemize}
\item In the optimal packing, 
the maximum reward of every  bin that contains an item larger than $S/2$ from instances $1,\ldots,u-1$ drops from at most $5/12$ to at most $3/12$, so their weight is at most $15/12$.
Additionally, every bin that contains an item larger than $S/2$ from some instance $j\in\{u+1,\ldots, k\}$, contains only items from instance $j$. Therefore, its weight can be at most $15/12$ (the remaining space in such bin can pack only one item from $[S/3,S/2]$ of positive weight. 
Additionally, the weight of every bin that contains no items larger than $S/2$ remains at most $15/12$.
The only bins that can have a larger weight (up to $17/12$) are bins with items larger than $S/2$ from instance $u$. At most $1/k$ bins in the optimal packing can contain such an item.
Therefore, the average weight of a bin in the optimal packing is at most $\frac{(17/12) + (15/12)\cdot(k-1)}{k}$.
		
		\item 
		In the $FFk$ packing, 
		instances $1,\ldots,u-1$ have no underfull bins. 
		The change in the reward affects only the bins of group 3
		(bins with a single item larger than $S/2$): the weight of each bin in group 3 (which must have one item larger than $2S/3$ since it is not underfull) is at least $2/3+2/12 = 10/12$. 
		Since we only reduce the reward of the item larger than $S/2$ for the instances $1, \ldots ,u-1$; the analysis of bins in groups 4 and 5 is not affected at all by this reduction in reward and therefore leads to their average weight (except excluding one bin per instance in groups 1, 2, and 5) of at least $10/12$.
		
		In instance $u$, the weights are unchanged, so the average bin weight (except excluding one bin per instance in groups 1, 2, and 5) is still at least $10/12$.
		
		Finally, the bins of instances $u+1,\ldots,k$ contain no item of size at most $S/3$ --- since any such item would fit into the underfull bin in instance $u$. Therefore, for each such bin, there are only two options:
		\begin{itemize}
			\item the bin contains one item larger than $S/2$ --- so its weight is $10/12$;
			\item the bin contains two items, each of which is larger than $S/3$ --- so its weight is at least $2\cdot 5/12 = 10/12$.
		\end{itemize}
	\end{itemize}
	Again, the average bin weight (except excluding one bin per instance in groups 1, 2, and 5) of instances $u+1, \ldots, k$ is at least 10/12.
	
	We conclude that the average weight of all bins in the $FFk$ packing (except the $3k$ excluded bins) is at least $10/12$. Therefore
	$$FFk \leq \left(\frac{\frac{(17/12) + (15/12)\cdot(k-1)}{k}}{10/12} \cdot OPT\right) + 3k=\left(1.5+\frac 1{5k}\right) \cdot OPT + 3k.$$	
 \qed
\end{proof}

\paragraph{Approximation ratio lower bound}
For $k=1$, the approximation ratio of $1.7$ is tight; an example is given in \cite{Dósa_Sgall_2014} and on page 306 in \cite{Johnson1974}.
In \Cref{appendix: ffk-worst-case-example-johnsonpaper}, we analyze these examples for $k=2$, and show that $FFk$ achieves a ratio of $1.35$. For the example given in \cite{Johnson1974} we observed that as $k$ increases, the approximation ratio continues to decrease. This raises the question of whether the bound of \Cref{ffk:theorem3/2} is tight.

Although we do not have a complete answer for this question, we show below an example in which $FFk$ attains an approximation ratio of $1.375$ for $k=2$, which is higher than the example of \cite{Johnson1974}.

Consider the example $D = \{371, 659, 113, 47, 485, 3, 228, 419, 468, 581, 626 \}$ and bin capacity ${S} = 1000$. Then, $FFk$ for $k=2$ will result in 11 bins $[371, 113, 47, 3, 228],$ $[659, 113, 47, 3],$ $[485, 419],$ $[468, 371],$ $[581, 228],$ $[626], [659],$ $[485, 419],$ $[468], [581], [626]$. The optimal packing for the items in $D_2$ is $[626, 371, 3],$ $[626, 371, 3]$, $[659, 228, 113],$ $[659, 228, 113],$ $[419, 581],$ $[419, 581],$ $[468, 485, 47],$ $[468, 485, 47]$. Clearly, $\frac{FFk(D_2)}{OPT(D_2)} = \frac{11}{8} = 1.375$. 
We observed that as $k$ increases, the approximation ratio continues to decrease.
Experimental results in support of this ratio are given in \Cref{section:results}.
Based on the example for which we have achieved a ratio of $1.375$ and the simulation of the algorithm on different datasets, we conjecture that for $k>1$ the \emph{absolute} approximation ratio for $FFk$ is $1.375$. 

\begin{toappendix}
    \subsection{Worst-case example}
    \label{appendix: ffk-worst-case-example-johnsonpaper}

    \begin{lemmarep} \label{lower bound 1}
		The approximation ratio of $FFk(D_k)$ for $k=2$ for the worst case examples given in \cite{Dósa_Sgall_2014,Johnson1974} is $1.35$. For the example given in \cite{Johnson1974}, the approximation ratio continues to decrease as $k$ increases.
	\end{lemmarep}
	\begin{proof}
		Dosa and Sgall \cite{Dósa_Sgall_2014} gave a simple example for the lower bound construction of $FF$ as following: For a very small $\epsilon$, first there are $10n$ \emph{small} items of size approximately $1/6$ (smaller and bigger), then comes $10n$ \emph{medium} items of size approximately $1/2$ (smaller and bigger), and finally $10n$ \emph{large} items of size $1/2 + \epsilon$ follows. They have proved that for this list $D$, $OPT(D)=10n$ and $FF(D)=17n$. Now, consider the packing of the items in $D_2$. After packing the first instance of $D$, $FFk$ packs the \emph{small} and \emph{medium} items into the existing bins. Only each of the \emph{large} items require a separate bin to pack. So it is easy to see that $FFk(D_2)=(17+10)n$ while $OPT(D_2)=(10+10)n$. This will give us $\frac{FFk(D_2)}{OPT(D_2)} = \frac{27}{20} = 1.35$. Note that here $OPT$ is arbitrarily big.
		
		Johnson, Demers, Ullman, Garey and Graham \cite{Johnson1974} provide the following example for $FF$. We now show that, on the same example, $FFk$ achieves an approximation ratio of $1.35$.
		
		All the item sizes in this example 
		$D$ are in non-decreasing order and are as follows
		$D = \{6 (7),10 (7),16 (3),34 (10),51 (10)\}$ and bin size $S = 101$.
		The number in the parenthesis denotes the occurrence of that item in $D$. In the above example $6 (7)$ expands to $\{6,6,6,6,6,6,6\}$. Before proceeding further, we describe some notation that we will use. For two bins $B_i, B_j$, $i<j$ implies bin $B_i$ has been created before bin $B_j$.

        \begin{figure}
\centering
    \begin{subfigure}[b]{0.3\textheight}
        \includegraphics[height=0.9\textheight, width=0.9\textwidth, keepaspectratio]{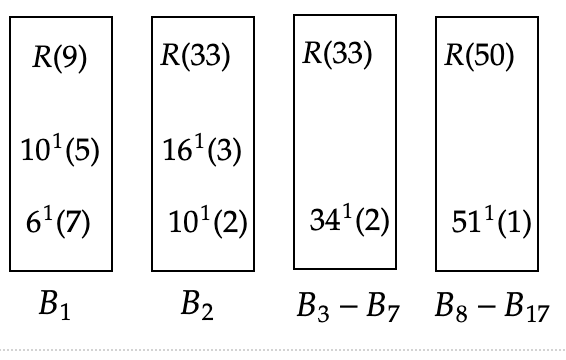}
        \caption{$FFk$ packing for $k=1$.}
        \label{ffk_lb_k1}
    \end{subfigure}

    \begin{subfigure}[b]{0.55\textheight}
        \includegraphics[height=0.9\textheight, width=0.9\textwidth, keepaspectratio]{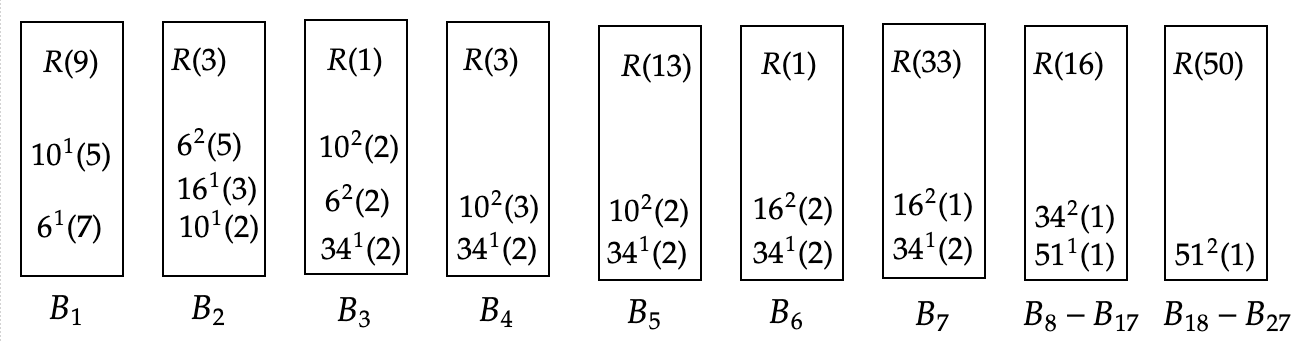}
        \caption{$FFk$ packing for $k=2$.}
        \label{ffk_lb_k2}
    \end{subfigure}

    \begin{subfigure}[b]{0.4\textheight}
        \includegraphics[height=0.9\textheight, width=0.9\textwidth, keepaspectratio]{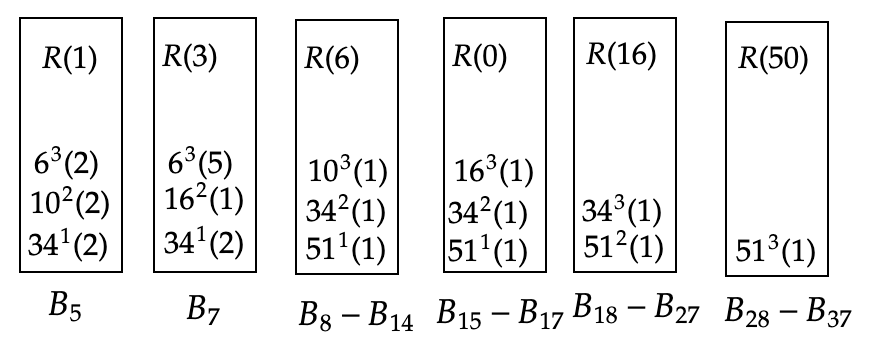}
        \caption{$FFk$ packing for $k=3$. Note that bins from $FFk(D_2)$ packing in which no item from future instances can be packed are not shown. }
        \label{ffk_lb_k3}
    \end{subfigure}

    \begin{subfigure}[b]{0.4\textheight}
        \includegraphics[height=0.9\textheight, width=0.9\textwidth, keepaspectratio]{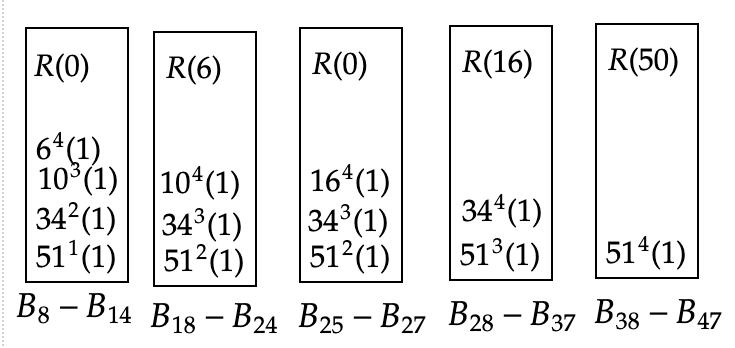}
        \caption{$FFk$ packing for $k=4$. Note that bins from $FFk(D_3)$ packing in which no item from future instances can be packed are not shown.}
        \label{ffk_lb_k4}
    \end{subfigure}

    \begin{subfigure}[b]{0.4\textheight}
        \centering
        \includegraphics[height=0.9\textheight, width=0.9\textwidth, keepaspectratio]{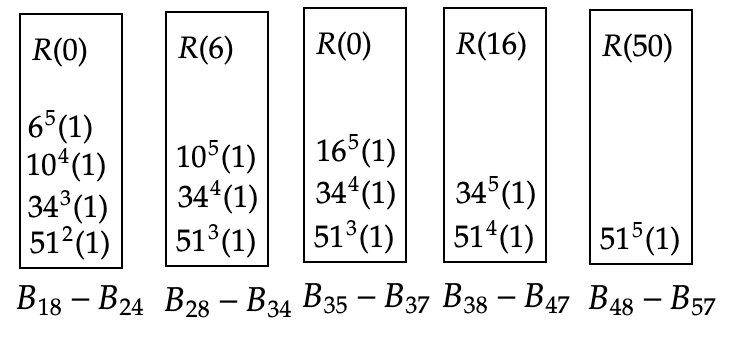}
        \caption{$FFk$ packing for $k=5$. Note that bins from $FFk(D_4)$ packing in which no item from future instances can be packed are not shown.}
        \label{ffk_lb_k5}
    \end{subfigure}
    \caption{ \Cref{ffk_lb_k1}-\Cref{ffk_lb_k5} show the $FFk$ packing for $D$ for different values of $k$. For the lack of space we are not showing the bins in the packing of $D_{k-1}$ in which no item from future instances can be packed. Note the pattern in packing in \Cref{ffk_lb_k4} and \Cref{ffk_lb_k5}.}
    \label{ffk lower bound example}
\end{figure}

		Note that $OPT$ for $k=1$ can result the following bin-packing 
		
		$O_1 = O_2 = O_3 = [51,34,16] $,
		
		$O_4 = O_5 = O_6 = O_7 = O_8 =O_9 = O_{10} = [51,34,10,6]$.
		
		It is easy to observe that optimal number of bins for packing $D_k$ is $OPT(D_k)=k \cdot OPT(D)$.
		
		In {\Cref{ffk lower bound example}} we have shown the packing of $D_1, D_2, D_3, D_4$, and $D_5$. Each box in the figure represents a bin. Each bin represents the items packed into the bin along with their count and the instance from which they belong. Each bin also represents the space remaining in the bin. For example, in {\Cref{ffk_lb_k1}}, $6^1(7)$, inside bin $B_1$ represents that there are 7 items of size 6 from first instance of $D$. $R(9)$ in bin $B_1$ denotes the remaining space in that bin.
		
		{\Cref{ffk_lb_k1}} and {\Cref{ffk_lb_k2}} show the $FFk$ packing of the items for instance $D_1$ and $D_2$ respectively. 
		Note that there are some bins in which the items from future bins cannot be packed (either there is no space or due to the violation of the $k$BP constraint). For the lack of space we do not show such bins. For example, 
		in {\Cref{ffk_lb_k3}} we have the shown the $FFk$ packing of the items for instance $D_3$. $FFk$ packs the items in the third instance of $D$ after packing the items in the first two instances. However, after packing the second instance, bins $B_1, B_2, B_3, B_4$ cannot pack any of the items from the third instance (in fact any of the items from all the future instances). Therefore, we do show these bins in {\Cref{ffk_lb_k3}}. 
		Note that there is a pattern in the $FFk$ packing of $D_4$ and $D_5$. It is easy to see that this pattern repeats itself in the packing of $D_6, D_7, \ldots, D_k$. Also note that while packing the items in $D_k$ we add 10 new bins in addition to the bins in the $FFk$ packing of $D_{k-1}$. Therefore, 
		$\frac{FFk(D_k)}{OPT(D_k)} = \frac{17 + 10(k-1)}{10k} \leq 1.35$ for $k>1$. One can observe that as $k$ increases, the approximation ratio continues to decrease.
	\end{proof}
	
	\begin{lemma} \label{lower bound for ffk}
		The lower bound of $FFk(D_k)$ for $k=2$ is 1.375.
	\end{lemma}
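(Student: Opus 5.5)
The plan is to exhibit one concrete instance and compute both quantities exactly. I will use the example given just before this lemma: $D = \{371, 659, 113, 47, 485, 3, 228, 419, 468, 581, 626\}$, $S=1000$, $k=2$. It suffices to show that $OPT(D_2)=8$ and $FFk(D_2)=11$, which gives the ratio $11/8=1.375$.

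\textbf{Optimal value.} First I verify $OPT(D_2)=8$ by matching a lower and an upper bound.
\begin{itemize}
\item \emph{Lower bound.} The item sizes of $D$ sum to exactly $4000$, so $V(D_2)=8000=8S$. Hence every packing of $D_2$ needs at least $8$ bins.
\item \emph{Upper bound.} Take two copies of each of the four bins $[626,371,3]$, $[659,228,113]$, $[419,581]$ and $[468,485,47]$. Each of these bins has size exactly $1000$.
\item \emph{Feasibility.} Each item of $D$ appears in exactly one of the four configurations. So putting the first instance into one copy and the second instance into the other uses every item exactly twice, never twice in the same bin. This is a valid $2$BP packing with $8$ bins.
\end{itemize}

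\textbf{Value of $FFk$.} Next I simulate $FFk$ on $D_2 = DD$ in the given order, bin by bin.
\begin{itemize}
\item \emph{First instance.} This phase is ordinary First-Fit on $D$. It produces $[371,113,47,3,228]$ (size $762$), $[659]$, $[485,419]$, $[468]$, $[581]$ and $[626]$.
\item \emph{Second instance, first part.} Each copy goes to the lowest-indexed bin with enough room that does not already hold its twin:
\begin{itemize}
\item $371^2$ cannot join bin 1, which holds $371^1$. Bin 2 has room $341$, bin 3 has room $96$, and bin 4 has room $532$, so $371^2$ goes to bin 4.
\item $659^2$ goes to a new bin 7. Its room after packing is $341$.
\item $113^2$, $47^2$ and $3^2$ skip bin 1, where their twins sit, and go to bin 2 (the $659$ bin), which has room $341$.
\item $485^2$ fits nowhere, including bin 7 whose room is now $341$, so it opens bin 8.
\end{itemize}
\item \emph{Second instance, remaining items.}
\begin{itemize}
\item $228^2$ goes to bin 5, the $581$ bin, after checking that bins 1--4 fail.
\item $419^2$ joins $485^2$ in bin 8.
\item $468^2$, $581^2$ and $626^2$ each open new bins 9, 10 and 11.
\end{itemize}
\end{itemize}
The result is exactly the 11-bin packing listed before the lemma.

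\textbf{Main obstacle.} The only real work is this careful First-Fit trace. At each step I must check both the residual capacity of every earlier bin and the conflict constraint, i.e.\ the bin must not already hold the same item's first copy. I will record the residual capacities after each placement to make the check transparent. Combining both computations gives $FFk(D_2)/OPT(D_2)=11/8=1.375$, which establishes the claimed lower bound for $k=2$.
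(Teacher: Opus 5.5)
Your proposal is correct and follows the paper's proof: the same instance, the same First-Fit trace producing the 11 bins, and the same four full bins used twice, with your volume bound $V(D_2)=8S$ making explicit the optimality that the paper only justifies by noting that every bin is exactly full. One cosmetic point: in the actual input order $3^2$ is processed after $485^2$ (at which point bin 2 has room $181$ rather than $341$), but this changes no placement, so your trace and conclusion $11/8=1.375$ stand.
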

	\begin{proof}
		Consider the input $D= \{371, 659, 113, 47, 485, 3, 228, 419,468,581,626\}$ and $S=1000$.
		For $k=1$, $FFk$ will result in the following packing: \\
		$B_1 = [371^1, 113^1, 47^1, 3^1, 228^1], sum=762$, \\
		$B_2 = [659^1], sum=659 $, \\
		$B_3 = [485^1, 419^1], sum = 904$, \\
		$B_4 = [468^1], sum = 468$, \\
		$B_5= [581^1], sum = 581$, \\
		$B_6 = [626^1], sum = 626$ \\
		We use the word ``SAME" for the content of a bin if a bin cannot pack any of the items from the future instances of $D$.
		This happens because every item in a future instance is either too large to fit into that bin, or has the same type as an item already packed into it. 
		Note that bin $B_1$ is such bin. 
		For $k=2$, $FFk$ will result in the following packing: \\
		$B_1 = \text{SAME}$,\\
		$B_2 = [659^1, 113^2, 47^2, 3^2], sum=822$, \\
		$B_3 = [485^1, 419^1], sum=904$, \\
		$B_4 = [468^1, 371^2], sum=839$, \\
		$B_5= [581^1, 228^2], sum=809$, \\
		$B_6 = [626^1], sum=626$, \\
		$B_7 = [659^2], sum=659$, \\
		$B_8 = [485^2, 419^2], sum=904$, \\
		$B_9 = [468^2], sum=468$, \\
		$B_{10} = [581^2], sum=581$, \\
		$B_{11} = [626^2], sum=626$, \\
		
		The items in $D$ can be optimally  packed in bins $[626,371,3], [659,228,113], [419,581], [468,485,47]$. The size of each of these bins is exactly $1000$. Therefore, the optimal packing of $D_k$ requires $k \cdot OPT(D) = 4k$ bins. 
		Therefore, $\frac{FFk(D_k)}{OPT(D_k} = \frac{11}{8} = 1.375$ for $k = 2$. 
	\end{proof}
	
	\begin{remarkrep}
		The approximation ratio for the example instance given in \Cref{lower bound for ffk} continues to decrease as $k$ increases.
	\end{remarkrep}
	\begin{proof}
		The packing for $k=2$ is given in \Cref{lower bound for ffk}. We will continue from this packing.
		Note that bin $B_2$ cannot pack any of the items from the future instances of $D$.
		
		For $k=3$, $FFk$ will result in the following packing: \\
		$B_1 = \text{SAME}, B_2 = \text{SAME}$\\
		$B_3 = [485^1, 419^1, 47^3, 3^3], sum=954$, \\
		$B_4 = [468^1, 371^2, 113^3], sum=952$, \\
		$B_5= [581^1, 228^2], sum=809$, \\
		$B_6 = [626^1, 371^3], sum=997$, \\
		$B_7 = [659^2, 228^3], sum=887$, \\
		$B_8 = [485^2, 419^2], sum=904$, \\
		$B_9 = [468^2, 485^3], sum=953$, \\
		$B_{10} = [581^2, 419^3], sum=1000$, \\
		$B_{11} = [626^2], sum=626$, \\
		$B_{12} = [659^3], sum=659$, \\
		$B_{13} = [468^3], sum=468$, \\
		$B_{14} = [581^3], sum=581$, \\
		$B_{15} = [626^3], sum=626$ \\
		Note the bin $B_3$ and $B_{10}$ cannot pack any of the items from the future instances of $D$.
		
		For $k=4$, $FFk$ will result in the following packing: \\
		$B_1 = \text{SAME}, B_2 = \text{SAME}, B_3 = \text{SAME}, B_{10} = \text{SAME}$\\
		$B_4 = [468^1, 371^2, 113^3, 47^4], sum=999$, \\
		$B_5= [581^1, 228^2, 113^4, 3^4], sum=925$, \\
		$B_6 = [626^1, 371^3], sum=997$, \\
		$B_7 = [659^2, 228^3], sum=887$, \\
		$B_8 = [485^2, 419^2], sum=904$, \\
		$B_9 = [468^2, 485^3], sum=953$, \\
		$B_{11} = [626^2, 371^4], sum=997$, \\
		$B_{12} = [659^3, 228^4], sum=887$, \\
		$B_{13} = [468^3, 485^4], sum=953$, \\
		$B_{14} = [581^3, 419^4], sum=1000$, \\
		$B_{15} = [626^3], sum=626$, \\
		$B_{16} = [659^4], sum=659$, \\
		$B_{17} = [468^4], sum=468$, \\
		$B_{18} = [581^4], sum=581$, \\
		$B_{19} = [626^4], sum=626$ \\
		Note that bin $B_4$ and $B_{14}$ cannot pack any of the items from the future instances of $D$.
		
		For $k=5$, $FFk$ will result in the following packing: \\
		$B_1 = \text{SAME}, B_2 = \text{SAME}, B_3 = \text{SAME}, B_4 = \text{SAME}, B_{10} = \text{SAME}, B_{14}=\text{SAME}$\\
		$B_5= [581^1, 228^2, 113^4, 3^4, 47^5], sum=972$, \\
		$B_6 = [626^1, 371^3, 3^5], sum=1000$, \\
		$B_7 = [659^2, 228^3, 113^5], sum=1000$, \\
		$B_8 = [485^2, 419^2], sum=904$, \\
		$B_9 = [468^2, 485^3], sum=953$, \\
		$B_{11} = [626^2, 371^4], sum=997$, \\
		$B_{12} = [659^3, 228^4], sum=887$, \\
		$B_{13} = [468^3, 485^4], sum=953$, \\
		$B_{15} = [626^3, 371^5], sum=997$, \\
		$B_{16} = [659^4, 228^5], sum=887$, \\
		$B_{17} = [468^4, 485^5], sum=953$, \\
		$B_{18} = [581^4, 419^5], sum=1000$, \\
		$B_{19} = [626^4], sum=626$, \\
		$B_{20} = [659^5], sum=659$, \\
		$B_{21} = [468^5], sum=468$, \\
		$B_{22} = [581^5], sum=581$, \\
		$B_{23} = [626^5], sum=626$ \\
		Note that bin $B_5, B_6, B_7 B_{18}$ cannot pack any of the items from the future instances of $D$.
		
		For $k=6$, $FFk$ will result in the following packing: \\
		$B_1 = \text{SAME}, B_2 = \text{SAME}, B_3 = \text{SAME}, B_4 = \text{SAME}, B_{5} = \text{SAME}, B_{6} = \text{SAME}, B_{7} = \text{SAME}, B_{10} = \text{SAME}, B_{14}=\text{SAME}, B_{18} = \text{SAME}$\\
		$B_8 = [485^2, 419^2, 47^6, 3^6], sum=954$, \\
		$B_9 = [468^2, 485^3], sum=953$, \\
		$B_{11} = [626^2, 371^4], sum=997$, \\
		$B_{12} = [659^3, 228^4, 113^6], sum=1000$, \\
		$B_{13} = [468^3, 485^4], sum=953$, \\
		$B_{15} = [626^3, 371^5], sum=997$, \\
		$B_{16} = [659^4, 228^5], sum=887$, \\
		$B_{17} = [468^4, 485^5], sum=953$, \\
		$B_{19} = [626^4, 371^6], sum=997$, \\
		$B_{20} = [659^5, 228^6], sum=887$, \\
		$B_{21} = [468^5, 485^6], sum=953$, \\
		$B_{22} = [581^5, 419^6], sum=1000$, \\
		$B_{23} = [626^5], sum=626$, \\
		$B_{24} = [659^6], sum=659$, \\
		$B_{25} = [468^6], sum=468$, \\
		$B_{26} = [581^6], sum=581$, \\
		$B_{27} = [626^6], sum=626$ \\
		Note that bins $B_8, B_{12}, B_{22}$ cannot pack any of the items from the future instances of $D$.
		
		For $k=7$, $FFk$ will result in the following packing: \\
		$B_1 = \text{SAME}, B_2 = \text{SAME}, B_3 = \text{SAME}, B_4 = \text{SAME}, B_{5} = \text{SAME}, B_{6} = \text{SAME}, B_{7} = \text{SAME}, B_{8} = \text{SAME}, B_{10} = \text{SAME}, B_{12} = \text{SAME}, B_{14}=\text{SAME}, B_{18} = \text{SAME}, B_{22} = \text{SAME},$\\
		$B_9 = [468^2, 485^3, 47^7], sum=1000$, \\
		$B_{11} = [626^2, 371^4, 3^7], sum=1000$, \\
		$B_{13} = [468^3, 485^4], sum=953$, \\
		$B_{15} = [626^3, 371^5], sum=997$, \\
		$B_{16} = [659^4, 228^5, 113^7], sum=1000$, \\
		$B_{17} = [468^4, 485^5], sum=953$, \\
		$B_{19} = [626^4, 371^6], sum=997$, \\
		$B_{20} = [659^5, 228^6], sum=887$, \\
		$B_{21} = [468^5, 485^6], sum=953$, \\
		$B_{23} = [626^5, 371^7], sum=997$, \\
		$B_{24} = [659^6, 228^7], sum=887$, \\
		$B_{25} = [468^6, 485^7], sum=953$, \\
		$B_{26} = [581^6, 419^7], sum=1000$, \\
		$B_{27} = [626^6], sum=626$, \\
		$B_{28} = [659^7], sum=659$, \\
		$B_{29} = [468^7], sum=468$, \\
		$B_{30} = [581^7], sum=581$, \\
		$B_{31} = [626^7], sum=626$ \\
		Note that bins $B_9, B_{11}, B_{16}, B_{26}$ cannot pack any of the items from the future instances of $D$.
		
		For $k=8$, $FFk$ will result in the following packing: \\
		$B_1 = \text{SAME}, B_2 = \text{SAME}, B_3 = \text{SAME}, B_4 = \text{SAME}, B_{5} = \text{SAME}, B_{6} = \text{SAME}, B_{7} = \text{SAME}, B_{8} = \text{SAME}, B_{9} = \text{SAME}, B_{10} = \text{SAME}, B_{11} = \text{SAME}, B_{12} = \text{SAME}, B_{14}=\text{SAME}, B_{16} = \text{SAME}, B_{18} = \text{SAME}, B_{22} = \text{SAME}, B_{26} = \text{SAME},$\\
		$B_{13} = [468^3, 485^4, 47^8], sum=1000$, \\
		$B_{15} = [626^3, 371^5, 3^8], sum=1000$, \\
		$B_{17} = [468^4, 485^5], sum=953$, \\
		$B_{19} = [626^4, 371^6], sum=997$, \\
		$B_{20} = [659^5, 228^6, 113^8], sum=1000$, \\
		$B_{21} = [468^5, 485^6], sum=953$, \\
		$B_{23} = [626^5, 371^7], sum=997$, \\
		$B_{24} = [659^6, 228^7], sum=887$, \\
		$B_{25} = [468^6, 485^7], sum=953$, \\
		$B_{27} = [626^6, 371^8], sum=997$, \\
		$B_{28} = [659^7, 228^8], sum=887$, \\
		$B_{29} = [468^7, 485^8], sum=953$, \\
		$B_{30} = [581^7, 419^8], sum=1000$, \\
		$B_{31} = [626^7], sum=626$, \\
		$B_{32} = [659^8], sum=659$, \\
		$B_{33} = [468^8], sum=468$, \\
		$B_{34} = [581^8], sum=581$, \\
		$B_{35} = [626^8], sum=626$ \\
		Note that bins $B_{13}, B_{15}, B_{20}, B_{30}$ cannot pack any of the items from the future instances of $D$.
		
		At this point, we note a repeating pattern:
		
		\begin{itemize}
			\item Item $371^7, 371^8$ has been packed in bins $B_{23}, B_{27}$ respectively. These bins have the same content $[626, 371]$ (without superscript). For $k=9$ it will pack $371^9$ in bin $B_{31}$, and afterwards this bin will also have the same content $[626, 371]$. 
			\item 
			Items $659^7, 659^8$ have been packed in bins $B_{28}, B_{32}$ respectively.
			Item $659^9$ cannot be packed into any of the previous bins and hence will require a new bin $B_{36}$ to pack. 
			\item 
			Items $113^7, 113^8$ have been packed in bins $B_{16}, B_{20}$ respectively. These bins have the same content $[659, 228, 113]$ (without superscript). 
			For $k=9$ it will pack $113^9$ in bin $B_{24}$ and after packing this bin will also have the same content $[659, 228, 113]$.
			\item Items $47^7, 47^8$ have been packed in bins $B_{9}, B_{13}$ respectively. These bins have the same content $[468, 485, 47]$ (without superscript). For $k=9$ it will pack $47^9$ in bin $B_{17}$ and after packing this bin will also have the same content $[468, 485, 47]$.
			\item 
			Items $485^7, 485^8$ have been packed in bins $B_{25}, B_{29}$ respectively. These bins have the same content $[468, 485]$ (without superscript). For $k=9$ it will pack $485^9$ in bin $B_{33}$ and afterwards this bin will also have the same content $[468, 485]$.
			\item
			Items $3^7, 3^8$ have been packed in bins $B_{11}, B_{15}$ respectively. These bins have the same content $[626, 371, 3]$ (without superscript). For $k=9$ it will pack $3^9$ in bin $B_{19}$ and after packing this bin will also have the same content $[626, 371, 3]$.
			\item 
			Items $228^7, 228^8$ have been packed in bins $B_{24}, B_{28}$ respectively. These bins have the same content $[659, 228]$ (without superscript). For $k=9$ it will pack $228^9$ in bin $B_{32}$ and after packing this bin will also have the same content $[659, 228]$.
			\item 
			Items $419^7, 419^8$ have been packed in bins $B_{26}, B_{30}$ respectively. These bins have the same content $[581, 419]$ (without superscript). For $k=9$ it will pack  $419^9$ in bin $B_{34}$ and afterwards this bin will also have the same content $[581, 419]$.
			\item 
			Items $468^9, 581^9, 626^9$ cannot be packed into any of the previous bins and hence will require new bins $B_{37}, B_{38}, B_{39}$ respectively. 
			Note that for $k=8$ these items were packed in new bins $B_{29}, B_{30}, B_{31}$ and for $k=7$ these items were packed in new bins $B_{25}, B_{26}, B_{27}$.  
		\end{itemize}
		One can observe that the content of the bins (except the $\text{SAME}$ bins) remains same for $k=7,8,9$. This pattern continues for further values of $k$. After $k \geq 2$, packing of $D_k$ requires 4 new bins than the packing of $D_{k-1}$. 
		Therefore, $bins(D_k) = 11 + 4\cdot(k-2) = 4\cdot k + 3$.
		
		The items in $D$ can be optimally  packed in bins $[626,371,3], [659,228,113], [419,581], [468,485,47]$. The size of each of these bins is exactly $1000$. Therefore, the optimal packing of $D_k$ requires $k \cdot OPT(D) = 4k$ bins. 
		Therefore, $\frac{FFk(D_k)}{OPT(D_k} = \frac{11 + 4 \cdot (k-2)}{4 \cdot k} = 1 + \frac{3}{4k} \leq 1.375$ for $k \geq 2$. One can observe that as $k$ increases, the approximation ratio continues to decrease.
        \qed
	\end{proof}
\end{toappendix}

\subsection{\texorpdfstring{$FFDk$}{FFDk}} \label{section:approx-algo:subsection:approx_algo:ffdk}
The $k$-time version of the First-Fit Decreasing bin-packing algorithm first sorts $D$ in non-increasing order. Then it  constructs $D_k$ using $k$ consecutive copies of the sorted $D$, and then implements $FFk$ on  $D_k$.
In contrast to $FFk$, we could not prove an upper bound for $FFDk$ that is better than the upper bound for $FFD$; we only have a lower bound.

\begin{toappendix}
    \section{$FFDk$}
    \label{appendix: FFDk}
\end{toappendix}

\begin{lemmarep} \label{ffdk:lemma1}
	$FFDk(D_k) \geq \frac{7}{6} \cdot OPT(D_k) + 1$.
\end{lemmarep}
\begin{inlineproof}
	We use the following example from page $2$ of \cite{dosa_tight_2007}. Let $\delta$ be a sufficiently small positive number, and let $S=1$. Let $D = \{\frac{1}{2}+ \delta, \frac{1}{2}+ \delta, \frac{1}{2}+ \delta, \frac{1}{2}+ \delta, \frac{1}{4}+ 2\delta, \frac{1}{4}+ 2\delta, \frac{1}{4}+ 2\delta, \frac{1}{4}+ 2\delta, \frac{1}{4}+ \delta, \frac{1}{4}+ \delta, \frac{1}{4}+ \delta, \frac{1}{4}+ \delta, \frac{1}{4} - 2\delta, \frac{1}{4} - 2\delta, \frac{1}{4} - 2\delta, \frac{1}{4} - 2\delta, \frac{1}{4} - 2\delta, \frac{1}{4} - 2\delta, \frac{1}{4} - 2\delta, \frac{1}{4} - 2\delta\}$. 
 
    An optimal packing for $D$ contains 4 bins of type $\{ \frac{1}{2}+\delta, \frac{1}{4}+\delta, \frac{1}{4} - 2\delta \}$ and 2 bins of type $\{ \frac{1}{4}+ 2\delta, \frac{1}{4}+ 2\delta, \frac{1}{4} - 2\delta, \frac{1}{4} - 2\delta \}$. As all bin sizes are exactly $1$, this pattern is clearly optimal for any $k$. Therefore, for all $k\geq 1$, $OPT(D_k) = 6k$. 
	
	On applying $FFDk$ on $D_k$ the resulting number of bins are $8 + 7(k-1)$. 
    The detailed $FFDk$ packing of the items in $D_k$ is given in \Cref{appendix: FFDk}.
    This gives us a lower bound of $\frac{7}{6} \cdot OPT(D_k) + 1$.
 \qed
\end{inlineproof}
\begin{toappendix}
    Here, we give in detail the $FFDk$ packing of the items in $D_k$. Note that $D$ is $\{\frac{1}{2}+ \delta, \frac{1}{2}+ \delta, \frac{1}{2}+ \delta, \frac{1}{2}+ \delta, \frac{1}{4}+ 2\delta, \frac{1}{4}+ 2\delta, \frac{1}{4}+ 2\delta, \frac{1}{4}+ 2\delta, \frac{1}{4}+ \delta, \frac{1}{4}+ \delta, \frac{1}{4}+ \delta, \frac{1}{4}+ \delta, \frac{1}{4} - 2\delta, \frac{1}{4} - 2\delta, \frac{1}{4} - 2\delta, \frac{1}{4} - 2\delta, \frac{1}{4} - 2\delta, \frac{1}{4} - 2\delta, \frac{1}{4} - 2\delta, \frac{1}{4} - 2\delta\}$.
\end{toappendix}
\begin{proof}
    Note that all items in $D$ are in non-increasing order. We use the following notation: $x \pm c \cdot \delta^i$, where $x \in \{1/2, 1/4\}$ and $c \in \{1, 2\}$, denotes that the item belongs to the instance $i$. 
    Along with the bins in packing the items in $D$ we also show the sum of all the items in the bin along with an indicator to whether the bin can contain items from the future instances of $D$ or not. We use $\times$ to denote that the bin can not pack items from the future instances of $D$, otherwise we use symbol $\checkmark$. 
    For $k=1$, $FFDk$ packing for the items in $D$ is:\newline
    $B_1 = \{1/2 + \delta^1, 1/4 + 2\delta^1\}, V(B_1) = 3/4 + 3 \delta, \times \\ 
    B_2 = \{1/2 + \delta^1, 1/4 + 2\delta^1\}, V(B_2) = 3/4 + 3 \delta, \times \\
    B_3 = \{1/2 + \delta^1, 1/4 + 2\delta^1\}, V(B_3) = 3/4 + 3 \delta, \times\\
    B_4 = \{1/2 + \delta^1, 1/4 + 2\delta^1\}, V(B_4) = 3/4 + 3 \delta, \times\\
    B_5 = \{1/4 + \delta^1, 1/4 + \delta^1, 1/4 + \delta^1\}, V(B_5) = 3/4 + 3 \delta, \times\\
    B_6 = \{1/4 + \delta^1, 1/4 - 2\delta^1, 1/4 - 2\delta^1, 1/4 - 2\delta^1\}, V(B_1) = 1 - 5 \delta, \times\\
    B_7 = \{1/4 - 2\delta^1, 1/4 - 2\delta^1, 1/4 - 2\delta^1, 1/4 - 2\delta^1\}, V(B_1) = 1 - 8 \delta, \times\\
    B_8 = \{1/4 - 2\delta^1\}, V(B_1) = 1/4 - 2 \delta, \checkmark
    $\\
    Note that the bins $B_1 - B_7$ can not pack any of the items from the future instances of $D$. Therefore, for $k=2$, we show only the bins from $B_8$ onward. Therefore, $FFDk$ packing for the items in $D_2$ is as follows: \newline
    $
    B_8 = \{1/4 - 2\delta^1, 1/2 + \delta^2, 1/4 + \delta^2\}, V(B_8) = 1, \times\\
    B_9 = \{1/2 + \delta^2, 1/4 + 2\delta^2\}, V(B_9) = 3/4 + 3\delta, \times\\ 
    B_{10} = \{1/2 + \delta^2, 1/4 + 2\delta^2\}, V(B_{10}) = 3/4 + 3\delta, \times\\
    B_{11} = \{1/2 + \delta^2, 1/4 + 2\delta^2\}, V(B_{11}) = 3/4 + 3\delta, \times\\
    B_{12} = \{1/4 + 2\delta^2, 1/4 + \delta^2, 1/4 + \delta^2\}, V(B_{12}) = 3/4 + 4\delta, \times\\
    B_{13} = \{1/4 + \delta^2, 1/4 - 2\delta^2, 1/4 - 2\delta^2, 1/4 - 2\delta^2\}, V(B_{13}) = 1 - 5\delta, \times\\
    B_{14} = \{1/4 - 2\delta^2, 1/4 - 2\delta^2, 1/4 - 2\delta^2, 1/4 - 2\delta^2\}, V(B_{14}) = 1 - 8\delta, \times\\
    B_{15} = \{1/4 - 2\delta^2\}, V(B_{15}) = 1/4 - 2\delta, \checkmark
    $\\
    Note that the bins from $B_1$ to $B_{14}$ can not pack any of the items from the future instances of $D$, and the bin $B_{15}$ is same as the bin $B_8$ when $k=1$. Therefore, it is clear that the packing pattern will repeat for further values of $k$, and from above it is clearly evident that it will require only $7$ more bins.
    \qed
\end{proof}

We discuss the challenges in extending the existing proof for the $FFD$ in \Cref{appendix: FFDk}.
\begin{toappendix}
    \paragraph{Challenges in extending existing proof:} Existing proofs for the $FFD$ are based on the assumption that the last bin contains a single item, and no other item(s) are packed after that, i.e. the smallest item is the only item in the last bin of $FFD$ packing. We cannot say the same in the case of $k$BP when $k>1$. For example let $D= \{ 103,102,101\}$ and ${S}=205$. Then $FFDk$ packing when $k=1$ is $\{103,102\}, \{101\}$  whereas for $k=2$ the packing is $\{103, 102\}, \{101,103\}, \{102,101\}$. Hence, the existing proofs for $FFD$ cannot be extended to $FFDk$. 
\end{toappendix}
Based on the simulation of $FFDk$ on different datasets, we conjecture that the upper bound for $FFDk$ is $\frac{11}{9} \cdot OPT(D_k) + \frac{6}{9}$. Experimental results supporting this conjecture are provided in  \Cref{section:results}.

\subsection{\texorpdfstring{$NFk$}{NFk}}
\label{section:approx-algo:subsection:NFk}

\begin{toappendix}
    \section{$NFk$}
    \label{appendix: NFk}
\end{toappendix}

Given the input $D_k$, the algorithm $NFk$ works as follows: like ${NF}$, ${NFk}$ always keeps a single bin open to pack items. If the current item does not pack into the currently open bin then ${NFk}$ closes the current bin and opens a new bin to pack the item.

\begin{theoremrep} \label{NFk:theorem:asymptotic-ratio-2}
	For every input $D_k$ and $k \geq 1$,  the asymptotic ratio of ${NFk}(D_k)$ is 2.
\end{theoremrep}
\begin{proof}
	The idea behind the proof is due to \cite{Johnson1973}. 
		
We can assume that $V(D) > S$, otherwise there is a trivial solution with $k$ bins. 
While processing input $D_k$, $NFk$ holds only one open bin, and it cannot contain a copy of each item of $D$. In fact, the open bin always contains a part of some instance of $D$, and possibly a part of the next instance of $D$, with no overlap. Therefore, 
if the current item $x$ is not packed into the current open bin, the only reason is that $x$ does not fit, as there is no  previous copy of $x$ in the current bin (all previous copies, if any, are in already-closed bins).
	
	Let the number of bins in the ${NFk}$ packing of $D_k$ be ${NFk}(D_k)$. Let these bins be ordered in the sequence in which they are opened. Then, for any two bins $B_{i-1}$ and $B_{i}$ where $i\geq 2$, 
	\[
	V(B_{i-1}) + V(B_i) > S
	\]
	Therefore,
	\[
	V(D_k)=V(B_1) + \ldots + 	V(B_{{NFk}(D_k)}) > \left\lfloor \frac{{NFk}(D_k)}{2} \right\rfloor \cdot S
	\]
	
	Since $OPT(D_k) \geq V(D_k)/S$,
	\[
	{NFk}(D_k) \leq 2 \cdot OPT(D_k) + 1 
	\]
	For the lower bound we consider the example given in \cite{Zheng_Luo_Zhang_2015}. Let the bin capacity be $1$. Let $D$ be an input instance with $2 y$ items (for some large integer $y$) of sizes $1/2, \epsilon, 1/2, \epsilon, \ldots$ ($y$ pairs overall). Then ${NFk}$ will pack $D_k$ into $k\cdot y$ bins, whereas an optimal packing of $D_k$ consists of $k \cdot (y/2+1)$ bins. This gives an asymptotic ratio of $2$.
 \qed
\end{proof}
We give a detailed proof of the above Theorem in \Cref{appendix: NFk}.

\section{Polynomial-time Approximation Schemes} \label{section:ptas}
\subsection{Some general concepts and techniques} \label{section:EAA:subsection:general}

\begin{toappendix}
    \section{Polynomial-time Approximation Schemes}
    \label{appendix: Polynomial-time Approximation Schemes}
\end{toappendix}
The basic idea behind generalizing Fernandez de la Vega-Lueker and all the Karmarkar-Karp algorithms to solve $k$BP is similar. It consists of three steps: 
1. Keeping aside the small items, 2. Packing the remaining large items, and 3. Packing the small items in the bins that we get from step 2 (opening new bins if necessary) to get a solution to the original problem.

In step 3, the main difference from previous work is that, in $k$BP, we cannot pack two copies of the same small item into the same bin, so we may have to open a new bin even though there is still remaining room in some bins. The following lemma analyzes the approximation ratio of this step.
\begin{lemma} \label{general:lemma1}
	Let $D_k$ be an instance of the $k$BP problem, and $0 < \epsilon \leq 1/2$.
We say that the item is \emph{large}, if its size is bigger than $\epsilon \cdot S$ and \emph{small} otherwise. 
Assume that the large items are
packed into $L$ bins. Consider an algorithm which starts adding the small items into the $L$ bins respecting the constraint of $k$BP, but whenever required, the algorithm opens a new bin. Then the number of bins required for the algorithm to pack the items in $D_k$ is at most $\max \{ L, (1+ 2 \cdot \epsilon) \cdot OPT(D_k) + k \}$.
\end{lemma}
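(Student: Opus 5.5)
We prove the bound by case analysis on whether the algorithm opens new bins. If it never opens a new bin, all small items fit into the $L$ bins and the count is exactly $L$. So assume at least one new bin is opened, and let $N\ge L+1$ be the final number of bins. We show that all but at most $k$ of these $N$ bins are filled to more than $(1-\epsilon)S$. A volume argument then finishes: since $OPT(D_k)\ge V(D_k)/S$, we get $(N-k)(1-\epsilon)S < V(D_k)\le S\cdot OPT(D_k)$. Hence
\[
N < \frac{OPT(D_k)}{1-\epsilon}+k \le (1+2\epsilon)\,OPT(D_k)+k,
\]
using $\frac{1}{1-\epsilon}\le 1+2\epsilon$, which holds for $0<\epsilon\le 1/2$.

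The key step is to control the bins that are not almost full. Consider the last small item $x$ whose placement forced the algorithm to open a new bin. At that moment every existing bin $B$ either (a) lacked room, so $V(B)>S-V(x)\ge (1-\epsilon)S$ because $x$ is small; or (b) already contained a copy of $x$. At most $k-1$ bins can contain a copy of $x$, since there are only $k$ copies in total and the current one is unplaced.

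After that moment no further bin is opened, and sizes only grow. So at the end, every bin that existed before $x$'s new bin is either more than $(1-\epsilon)S$ full or one of the at most $k-1$ bins of type (b). Together with the bin opened for $x$ itself, at most $k$ bins are possibly underfull.

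The main obstacle is bins opened earlier for small items, before the last opening. These were not full when opened, but they existed when $x$ forced a new bin. Hence they are covered by the same dichotomy (a)/(b) at that moment, and no separate treatment is needed.

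I would state the argument with the ``last time a new bin was opened'' explicitly, to make this coverage clear. I would also note that the algorithm places items in some bin when feasible, so a new bin is opened only if every existing bin fails (a) or (b). This gives at most $k$ exceptional bins and completes the bound $\max\{L,(1+2\epsilon)OPT(D_k)+k\}$.
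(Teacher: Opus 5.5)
Your proof is correct and follows essentially the same argument as the paper. The paper notes that the first item of the last bin could not be placed into any of the other bins except the at most $k-1$ holding its copies, so at least $A(D_k)-k$ bins have size above $(1-\epsilon)\cdot S$; the bound $V(D_k)\le S\cdot OPT(D_k)$ and $\frac{1}{1-\epsilon}\le 1+2\epsilon$ then finish, exactly as you do (your phrasing is in fact cleaner than the paper's ``less than $(1-\epsilon)\cdot S$ free size'', which should read free space less than $\epsilon\cdot S$).
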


\begin{proof}
	 Let $I$ be the set of all small items in $D$ 
 and $I_k$ be the $k$ copies of $I$. Let $A(D_k)$ be the number of bins in the packing of $D_k$. If adding small items do not require a new bin, then $A(D_k) = L$. Otherwise, since  the first item in the last bin cannot be packed to $A(D_k)-k$ previous bins, each of these bins has less than $(1 - \epsilon) \cdot S$ free size. Thus 
	\begin{align} \label{general:lemma1:equation1}
		\sum {D_k [i]} &\geq (S - \epsilon \cdot S)(A(D_k)-k) \nonumber \\
		\sum {D_k [i]} &\leq OPT(D_k) \cdot S \nonumber \\
		A(D_k) &\leq \frac{1}{1 - \epsilon} \cdot OPT(D_k) + k \nonumber \\
		A(D_k) &\leq (1 + 2 \cdot \epsilon) \cdot  OPT(D_k) + k
	\end{align}
	Therefore,
	\begin{equation} \label{general:lemma1:equation2} A(D_k) \leq \max \{ L,  (1 + 2 \cdot \epsilon) \cdot OPT(D_k) + k \} 
	\end{equation}
\qed
\end{proof}


Step 2 is done using a linear program based on \emph{configurations}.

\begin{definition}
\label{definition; configuration}
	A \emph{configuration} (or a \emph{bin type}) is a collection of item sizes which sums to, at most, the bin capacity $S$.
\end{definition}
For example {\cite{enwiki:1139054649}}: suppose there are $7$ items of size $3$, $6$ items of size $4$, and $S=12$. Then, the possible configurations are $ [3,3,3,3], [3,3,3], [3,3], [3], $ $[4,4,4], [4,4], [4],$ $ [3,3,4],[3,4,4], [3,4]$.



Enumerate all possible configurations by the natural numbers from $1$ to $t$. Let $A=\|a_{ij}\|$ be a $m(D) \times t$ matrix, such that for each natural $i\le m(D)$ and $j\le t$ the entry $a_{ij}$ is the number of items of size $c[i]$ in the configuration $j$. Let $\mathbf{n}$ be a $m(D)$-dimensional vector such that for each natural $i\le m(D)$ its $i$th entry is $n[i]$ (the number of items of size $c[i]$). Let $\mathbf{x}$ be a $t$-dimensional vector such that for each natural $j\le t$ we have that $x[j]$ is the number of bins filled with configuration $j$, and $\textbf{1}$ be a $t$-dimensional vector whose each entry is $1$.
Consider the following linear program
\begin{align*}
&& \min  &\quad \mathbf{1 \cdot x}
\\
(C_1)
&&
\text{such that} &\quad A\mathbf{x} = \mathbf{n}
\\
&& &\mathbf{x} \geq 0 
\end{align*}
When $\mathbf{x}$ is restricted to integer entries ($\mathbf{x}\in \mathbb{Z}^t$),
the solution of this linear program defines a feasible bin-packing.
We denote by $F_1$ the fractional relaxation of the above program, where $\mathbf{x}\in \mathbb{R}^t$.

Recall that in $k$BP, each item of $D$ has to appear in $k$ distinct bins. \emph{One can observe that $k$BP uses the same configurations as in the bin-packing, to ensure that each bin contains at most one copy of each item.}
Therefore, the configuration linear program $C_k$ for $k$BP is as follows, 
where $A$, $\mathbf{n}$, and $\mathbf{x}$ are the same as in $C_1$ above
(for $k=1$ it is the same as in \cite{fernandez_de_la_vega_bin_1981}):
\begin{align}
&&	\min  \quad \mathbf{1 \cdot x} \label{eqn:dlvl:4}\\
(C_k)
&&
	\text{such that} \quad A\mathbf{x} &= k\mathbf{n} \label{eqn:dlvl:5} \\
&&	\mathbf{x} &\geq 0 \label{eqn:dlvl:6}
\end{align}

\begin{lemmarep}
Every integral solution of $C_k$ can be realised as a feasible solution of $k$BP.
\end{lemmarep}

\begin{inlineproof}
Since each bin can contain at most one copy of each item of $D$, for each natural $j\le t$ if the configuration $j$ can be realized then $a_{ij}\le n[i]$ for each natural $i\le m(D)$. We shall call such configurations \emph{feasible}. We can realise every sequence of feasible configurations as a solution of the $k$BP problem
as follows. 
For each natural $i\le m(D)$, let $d_1,\dots, d_{n[i]}$ be the items from $D$ of size $c[i]$. Let the queue $Q_i$ be arranged of $k$ copies of these items, beginning from  the first copies of $d_1,\dots, d_{n[i]}$ in this order, then of the second copies of these items in the same order, and so forth.
To realize the sequence, consider the first configuration, say, $j$, from the sequence, and for each natural $i\le m(D)$ move $a_{ij}$ items of size $c[i]$ from the queue $Q_i$ into the first bin (or just do nothing when $Q_i$ is already empty), then  similarly process the second configuration from the sequence and so forth. Since all configurations are feasible, we never put two copies of the same item in the same bin, so the above procedure constructs a feasible solution to $k$BP.

A detailed example to illustrate the realisation, as a feasible solution to \kbp, of the integral solution of $C_k$ is given in \Cref{appendix: Polynomial-time Approximation Schemes}.
\qed
\end{inlineproof}
\begin{toappendix}
    Here, we give an example using which we illustrate how an integral solution of $C_k$ can be realised as a feasible solution of \kbp. 

Consider the example given after \Cref{definition; configuration}
    In this example, there are $m(D) = 2$ distinct item sizes: $c[1] = 3$ and $c[2]=4$. There are $n[1] = 7$ items of size $c[1]=3$, and $n[2]=6$ items os size $c[2]=4$. 
    For $k=2$, an integral solution to $C_2$ will consist of $2$ copies of configuration $[3,3,3,3]$, $2$ copies of configuration $[3,3,4]$, $2$ copies of configuration $[3,4,4]$, and $2$ copies of configuration $[4,4,4]$. Then, this solution sequence can be realised as follows:\\
    -- $3,3,3,3,3,3,3$ are the items in $D$ of size $c[1] = 3$. Then, the queue $Q_1$ will consist of $2$ copies of the items $3,3,3,3,3,3,3$. First, there are first copies of these items, then there are second copies of the items. Therefore, finally $Q_1$ will be $\{3^1,3^1,3^1,3^1,3^1,3^1,3^1, 3^2,3^2,3^2,3^2,3^2,3^2,3^2\}$. Superscript $i$ denotes the $i$th copy of an item. \\
    -- Similarly, $Q_2$ will be $\{4^1,4^1,4^1,4^1,4^1,4^1, 4^2,4^2,4^2,4^2,4^2,4^2\}$.\\
    -- Now, consider the first configuration $[3,3,3,3]$ in the solution sequence. To realise this configuration we move $4$ items of size $c[1] = 3$ from $Q_1$ into the first bin. Since the next configuration in the solution sequence is same as the first one, we make the same movement of items from $Q_1$ to the second bin.
    After these movements, $Q_1$ will be $\{3^2, 3^2, 3^2, 3^2, 3^2, 3^2\}$, and $Q_2$ will be $\{4^1,4^1,4^1,4^1,4^1,4^1, 4^2,4^2,4^2,4^2,4^2,4^2\}$.\\
    -- Third configuration in the solution sequence is $[3,3,4]$. To realise this configuration we move $2$ items of size $c[1]=3$ from $Q_1$ to the third bin, and $1$ item of size $c[2]=4$ from $Q_2$ to the third bin.
    Since the next configuration in the solution sequence is same as third configuration, we make the same movements of items from $Q_1$ and $Q_2$ to the fourth bin.
    After these movements, $Q_1$ will be $\{3^2, 3^2\}$, and $Q_2$ will be $\{4^1,4^1,4^1,4^1, 4^2,4^2,4^2,4^2,4^2,4^2\}$.\\
    -- Fifth configuration in the solution sequence is $[3,4,4]$. To realise this configuration, we move one item of size $c[1] = 3$ from $Q_1$ to the fifth bin, and two items of size $c[2]=4$ from $Q_2$ to this fifth bin.
    Since the next configuration in the solution sequence is same as fifth configuration, we make the same movements of items from $Q_1$ and $Q_2$ to sixth bin.
    After these movements $Q_1$ will be empty, and $Q_2$ will be $\{ 4^2, 4^2, 4^2, 4^2, 4^2, 4^2\}$.\\
    -- Seventh configuration in the solution sequence is $[4,4,4]$. To realise this configuration we move three items of size $c[2]=4$ from $Q_2$ to seventh bin.
    Since the next configuration is same as seventh configuration, therefore we make the same movements of items from $Q_2$ to eighth bin. 
    After, these movement both $Q_1$ and $Q_2$ will become empty and also there are no more configurations to realise.\\
    Since all the configuration in the solutions sequence are feasible, the above procedure has constructed a feasible solution to \kbp.
    \qed
\end{toappendix}

Let $F_k$ be the fractional bin-packing problem corresponding to $C_k$. Step $2$ involves grouping. Grouping reduces the number of different item sizes, and thus reduces the number of constraints and configurations in the fractional linear program $F_k$.

To solve the configuration linear program 
efficiently, both Fernandez de la Vega-Lueker algorithm and Algorithm 1 of Karmarkar Karp use a \emph{linear grouping} technique. In linear grouping, items are divided into groups (of fixed cardinality, except possibly the last group), and each item size (in each group) increases to the maximum item size in that group. 
See \Cref{appendix:linear grouping} for more detail.

\begin{toappendix}
    \section{Fernandez de la Vega-Lueker algorithm to \kbp}
    \label{appendix: Fernandez de la Vega-Lueker algorithm to kbp}
\end{toappendix}
\begin{toappendix}
	\paragraph{Linear Grouping: } \label{appendix:linear grouping} 
    Let $D$ be some instance of the bin-packing problem and $g > 1$ be some integer parameter. Order the items in $D$ in a non-increasing order. Let $U$ be the instance obtained by making groups of the items in $D$ of cardinality $g$ and then rounding up the items in each group by the maximum item size in that group. Let $U'$ be the group of the $g$ largest items and $U''$ be the instance consisting of groups from the second to the last group in $U$. Then 
$OPT(U'') \leq OPT(D)$ \cite{karmarkar-efficient-1982} and 
$OPT(U') \leq g$, because each item in $U'$ can be packed into a single bin. So
$$OPT(D) \leq OPT(U'' \cup U') \leq OPT(U'') + OPT(U') \leq OPT(U'') + g.$$
It implies the below \Cref{general:lemma2} due to \cite{karmarkar-efficient-1982},
where 
 $LIN(D)$ denotes the value of the fractional bin-packing problem $F_1$ associated with the instance $D$:
\begin{lemma} \label{general:lemma2}
    1. $OPT(U'') \leq OPT(D) \leq OPT(U'') + g$. $\quad$
	2. $LIN(U'') \leq LIN(D) \leq LIN(U'') + g$. $\quad$
	3. $V(U'') \leq V(D) \leq V(U'') + g$.
\end{lemma}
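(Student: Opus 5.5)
Each of the three parts follows from comparing $D$, $U'$, and $U''$ through the sorted order. I will use one domination fact throughout. Sort $D$ non-increasingly as $d_1\ge d_2\ge\dots\ge d_n$. In $U$, each item $d_i$ in group $j\ge 2$ is rounded up to the largest item of its group. That value is at most the smallest item of group $j-1$, which is at most $d_{i-g}$. So the $h$-th item of $U''$ is at most the $h$-th item of $D$. In the other direction, every item of $D$ outside the first group, i.e.\ $d_i$ with $i>g$, is at most the rounded size of its own group, and that group lies in $U''$. Hence the items of $D$ minus its $g$ largest are dominated item-by-item by $U''$, with possibly fewer items of $U''$ needed when the last group is short.

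Part 1. For $OPT(U'')\le OPT(D)$, I will take an optimal packing of $D$ and replace each item $d_{h}$ ($h\le |U''|$) by the $h$-th item of $U''$, which is no larger. Items of $D$ with no counterpart are discarded. Sizes do not increase, so the packing stays feasible and uses at most $OPT(D)$ bins. For the upper bound, I will take an optimal packing of $U''$. I replace each rounded item by the original item of $D$ mapped to it, which is no larger; this packs $D$ minus its first group. The $g$ largest items are then put one per bin, giving $OPT(D)\le OPT(U'')+g$.

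Part 2. The same argument is carried out on the fractional program $F_1$. A fractional solution is a nonnegative weighting of configurations. Substituting item sizes by smaller ones turns each configuration into a (possibly different) feasible configuration for the target instance, so it maps a feasible fractional solution of $D$ to one for $U''$ with the same objective. This gives $LIN(U'')\le LIN(D)$. Conversely, a fractional solution for $U''$ covers $D$ minus the first group. Adding $g$ singleton configurations with weight $1$ covers the rest, so $LIN(D)\le LIN(U'')+g$.

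The one subtlety is that configurations in $F_1$ are indexed by sizes, not individual items, so the substitution has to be phrased size-wise. I will handle this by passing through an item-level formulation: the LP value depends only on the multiset of sizes, and dominated multisets give smaller LP values by the substitution above. This is the step I expect to need the most care.

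Part 3. $V(U'')\le V(D)$ follows by summing the item-wise domination. For the upper bound, $V(D)\le V(U'')+\sum_{i\le g} d_i$, and the sum of the $g$ largest items is at most $g$ when sizes are normalized so that $S=1$ (in general at most $gS$). The statement implicitly uses the normalization $S=1$, and I would state it explicitly.
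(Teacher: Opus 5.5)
Your proposal is correct. Part~1 follows the paper's argument. The paper bounds $OPT(D)\le OPT(U''\cup U')\le OPT(U'')+OPT(U')\le OPT(U'')+g$ exactly as you do. For $OPT(U'')\le OPT(D)$ it only cites Karmarkar--Karp, and your shifted domination (the $h$-th item of $U''$ is at most $d_h$) supplies that step. For Parts~2 and~3 the paper gives no argument beyond the citation, so your domination proofs make the lemma self-contained.

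The step you flagged in Part~2 works as you suggest, but you should make two points explicit.

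\emph{Matching multiplicity convention.} The item-level program (one equality constraint per item, configurations built from items of total size at most $S$) must use the same convention as $F_1$. A configuration here, as in Karmarkar--Karp, may contain more copies of a size than $D$ has items of that size. So your item-level configurations must be allowed to repeat an item. If you allow only subsets, the values can differ: for a single item of size $S/2$, $F_1$ has value $1/2$ but the subset program has value $1$.

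\emph{Equal values.} You need that the two programs have the same optimum.
\begin{itemize}
\item Aggregating an item-level solution by size gives a feasible solution of $F_1$ of the same cost.
\item Conversely, split the weight of each configuration uniformly over all ways of sending each of its size-$c$ slots, independently, to one of the items of size $c$. Every item is then covered with total weight exactly $1$, at the same cost.
\end{itemize}
Once this is in place, your two substitutions act on item-level configurations with no size-wise ambiguity and give both inequalities of Part~2. These are replacing each $d_h$ by the $h$-th item of $U''$, and replacing each rounded item by its original.

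Your remark on Part~3 is also right. For general capacity the bound is $V(D)\le V(U'')+g\cdot S$. The paper's statement inherits Karmarkar--Karp's normalization $S=1$.
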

\end{toappendix}

Our extension of the 
Fernandez de la Vega-Lueker and
the Karmarkar-Karp algorithms to $k$BP differs from their original counterparts in mainly two directions. First, in the 
configuration linear program (see the constraint {\Cref{eqn:dlvl:5}} in $C_k$), and hence the obtained solution to this configuration linear program is not necessarily the $k$ times copy of the original solution of BP. Second, in greedily adding the small items, see {\Cref{general:lemma1}}. In extension of Karmarkar-Karp algorithm 1 to $k$BP we have also shown that getting an integer solution from $\mathbf{x}$ by rounding method may require at most $(k-1)/2$ additional bins. 
We discuss extensions to the Fernandez de la Vega-Lueker and Karmarkar-Karp algorithms and their analyses in \Cref{SEC: PTAS: SUBSECTION: dlvl} and \Cref{SEC: PTAS: SUBSECTION: kkalgorithms}, respectively.

The inputs to the extension of the algorithms by Fernandez de la Vega-Lueker and Algorithm 1 and Algorithm 2 of Karmarkar-Karp are an input set of items $D$, a natural number $k$, and an 
approximation parameter $\epsilon \in (0,1/2]$. Algorithm 2 of Karmarkar-Karp, in addition, accepts an integer parameter $g > 0$.


\subsection{Fernandez de la Vega-Lueker algorithm to \texorpdfstring{$k$}{k}BP} \label{SEC: PTAS: SUBSECTION: dlvl}

Fernandez de la Vega and Lueker \cite{fernandez_de_la_vega_bin_1981} published a PTAS which, given an input instance $D$ and $\epsilon \in (0,1/2] $, solves a bin-packing problem 
with, at most, $(1+\epsilon) \cdot OPT(D) + 1$ bins. They devised a method called ``adaptive rounding'' for this algorithm. In this method, the given items are put into groups and rounded to the largest item size in that group. This resulting instance will have fewer different item-sizes. This resulting instance can be solved efficiently using a configuration linear program $C_k$ (see 
\Cref{appendix:linear grouping}
for more detail).

\begin{toappendix}
    \paragraph{Solving the configuration linear program $C_k$:} \label{appendix: solving the configuration lp}
    Configuration linear program $C_k$ can be solved as follows:
    
    \textit{Using exhaustive search:} Since there are $n(D)$  items, a configuration can repeat at most $n(D)$  times. Therefore, $(x_\tau)_{\tau \in T} \in \{0,1, \ldots,n(D)\}^{\lvert T \rvert}$ 
    where $T$ is the set of all possible configurations, $\tau$ is some configuration in $T$, and $x_\tau$ is the number of bins filled with the configuration $j$.
    Each configuration contains at most $1/\epsilon$ items, and there are $m(D)$ different item sizes. Therefore, the number of possible configurations is at most $m(D)^{1/\epsilon}$. Now, we can check if enough slots are available for each size $c[i]$. Finally, output the solution with the minimum number of bins. Running time then is $m(D) \times n(D)^{m(D)^{1/\epsilon}}$, which is polynomial in $n(D)$ when $m(D)$ and $\epsilon$ are fixed.
\end{toappendix}

\paragraph{A high-level description of the extension of Fernandez de la Vega-Lueker algorithm to $k$BP.}
Let $I$ and $J$ be multisets of small and large items in $D$, respectively. After applying linear grouping in $J$, let $U$ be the resulting instance and $C_k$ be the corresponding configuration linear program. An optimal solution to $C_k$ will give us an optimal solution to the corresponding $k$BP instance $U_k$. Ungrouping the items in $U_k$ will give us a solution to $k$BP instance $J_k$. Finally, adding the items in $I_k$, by respecting the constraints of $k$BP, to the solution of $J_k$ (and possibly opening new bins if required) will give us a packing of $D_k$.

\Cref{dlvl:lemma:opt<(1+epsilon)opt} (see \Cref{appendix: Fernandez de la Vega-Lueker algorithm to kbp} for more detail) 
bounds the number of bins in an optimal packing of $U_k$. The extension of the algorithm by Fernandez de la Vega-Lueker to $k$BP is given in 
{\Cref{appendix: Fernandez de la Vega-Lueker algorithm to kbp}}.

\begin{toappendix} 
	
    If the items in $D$ are arranged in a non-decreasing order, then $U''$ will be the instance consisting of groups from first to the second from the last group, and $U'$  be the last group consisting of the $\leq g$ large items. The above same results hold in this case as well.
	
	We extend the algorithm by de la Vega and Lueker to solve $k$BP as follows:
	
	\begin{algorithm}[H]
		\DontPrintSemicolon
		\KwIn{A set $D$ of items, an integer $k$, and $\epsilon \in (0, 1/2]$.}
		\KwOut{A bin-packing of $D_k$.}
		Let $I$ and $J$ be multisets of small (of size $\le\epsilon \cdot S$) and large (of size $>\epsilon \cdot S$) items in $D$, respectively.  \;
		Sort the items in $J$ in a non-decreasing order of their sizes. \;
		Construct an instance $U$ from $J$ by 
		the linear grouping with $g = n(J) \cdot {\epsilon}^2$ and round up the sizes in each group to the maximum size in that group. \;
		Optimally solve the configuration linear program $C_k$ corresponding to the rounded problem $U$. 
		This will give us an optimal solution to the $k$BP instance $U_k$.  \;
		To get a solution for $J_k$, ``ungroup'' the items, that is replace the items in groups with the original items in that group, as in step $3$, prior to rounding up.  \;
		Greedily add small items in $I_k$ by respecting constraint of $k$BP to get a solution for $D_k$. \;
		\caption{Fernandez de la Vega-Lueker Algorithm to $k$BP}
		\label{delaVegaLueker}
	\end{algorithm}	
	The instances $I_k$ and $J_k$ consist of $k$ copies of instances $I$ and $J$, respectively.
	\begin{lemma} \label{dlvl:lemma:opt<(1+epsilon)opt}
		Let  $OPT(U_k)$ denote the optimal number of bins in solving $C_k$. Then, $OPT(U_k) < (1 + \epsilon) \cdot OPT(J_k) $.
	\end{lemma}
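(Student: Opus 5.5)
Here $U$ is the instance obtained from $J$ by linear grouping with $g = n(J)\epsilon^2$. Since $J$ is sorted non-decreasingly, $U''$ consists of all groups except the last, and $U'$ is the last group of at most $g$ largest items. I will compare $OPT(U_k)$ with $OPT(J_k)$ through $U''_k$ and $U'_k$.

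First, $OPT(U''_k)\le OPT(J_k)$. I will build an injection that matches each rounded item of $U''$ to an original item of $J$ that is at least as large and distinct. Group $i$ of $U''$ is rounded up to the maximum of group $i$, which is at most the minimum of group $i+1$, so its items can be matched to the items of the next group. Apply this matching copy by copy. Take an optimal $k$BP packing of $J_k$ and replace each matched original item by the corresponding rounded item of $U''$, deleting the unmatched items. Bin sizes do not increase. Distinct items in a bin stay distinct, so no bin receives two copies of the same item. This gives a feasible $k$BP packing of $U''_k$.

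Second, $U'_k$ can be packed into $k\cdot g$ bins, one bin per copy of each item. More cleverly, it fits into $\lceil g/\ell\rceil\cdot k$-type bins, but $kg$ suffices. Hence
\[
OPT(U_k)\le OPT(U''_k)+OPT(U'_k)\le OPT(J_k)+k\,n(J)\,\epsilon^2 .
\]

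Third, I will bound $k\,n(J)\epsilon^2$ by $\epsilon\,OPT(J_k)$. Every item of $J$ has size greater than $\epsilon S$, so $V(J_k)>k\,n(J)\,\epsilon S$. Since $OPT(J_k)\ge V(J_k)/S$, we get $k\,n(J)\epsilon< OPT(J_k)$, and therefore $k\,n(J)\epsilon^2<\epsilon\,OPT(J_k)$. Combining the three steps gives $OPT(U_k)<(1+\epsilon)\,OPT(J_k)$. The strict inequality comes from the strict size bound, assuming $J\neq\emptyset$; the degenerate empty case is trivial.

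The main obstacle is the first step: making sure the rounding-down substitution preserves the $k$BP constraint. The identification has to be one fixed bijection between items of $U''$ and a subset of the items of $J$, applied uniformly to every copy. Then the $k$ copies of each rounded item land in the $k$ distinct bins that held the copies of its matched original item. I would also handle integrality of $g$ by using $\lceil\cdot\rceil$ or assuming divisibility, at the cost of at most $k$ extra bins. If that breaks strictness, I would note the standard convention from \cite{fernandez_de_la_vega_bin_1981}.
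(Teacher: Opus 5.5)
Your proof is correct and is essentially the paper's argument. The paper sandwiches $OPT(L_k)\le OPT(J_k)\le OPT(U_k)$ using a rounded-down instance $L$, whose nonzero part is exactly your $U''$ relabelled by the one-group shift. It then charges the one extra group $k\cdot g$ bins and bounds $k\,n(J)\,\epsilon^2\le\epsilon\,OPT(J_k)$ by counting items per bin rather than by volume. Your strict volume bound actually gives the strict inequality in the statement, whereas the paper's proof only concludes $\le$.

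One small caveat concerns your first step. Matching the $i$th group injectively into the $(i+1)$st requires the $(i+1)$st group to have at least as many items. Under the convention you adopt (the top group has at most $g$ items), this can fail for the top pair of groups. You should therefore do one of three things:
\begin{itemize}
\item assume $g\mid n(J)$;
\item place any partial group at the small end;
\item take $U'$ to be the $g$ largest rounded items.
\end{itemize}
The paper's comparison of $U$ with $L$ avoids this issue: for either placement of a partial group, $U$ is covered by the nonzero part of $L$ together with at most $g$ further items.
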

	\begin{proof}
		Let $L$ be the problem resulting from rounding down the items in each group but the first by the maximum of the previous group. Then, the instance $L_k$ consists of $k$ copies of instance $L$. The items in the first group are  rounded down to $0$. Since bin-packing  is monotone, 
		\begin{align} \label{equation:dlvlp:1}
			OPT(L_k) \leq OPT(J_k) \leq OPT(U_k)
		\end{align} 
		Note that $L$ and $U$ differ by a group of at most 
		$n(J) \cdot {\epsilon}^2$ items. Therefore, the difference between $L_k$ and $U_k$ is, at most, $k \cdot n(J) \cdot {\epsilon}^2$. Hence,
		\begin{align} \label{equation:dlvlp:2}
			OPT(U_k) - OPT(L_k) \leq k\cdot n(J)\cdot \epsilon^2 \nonumber \\
			OPT(U_k) \leq OPT(J_k) + k\cdot n(J)\cdot \epsilon^2
		\end{align}
		Since all items in $U_k$ 
		(and hence in $J_k$) have a size larger than $\epsilon \cdot S$, the number of items in a bin is at most $1/\epsilon$.  Hence, the minimum number of bins required to pack 
		the items in $J_k$
		are at least
		$k \cdot n(J) \cdot \epsilon$. Hence,
		\begin{equation} \label{equation:dlvlp:3}
			k \cdot n(J) \cdot \epsilon^2 \leq \epsilon \cdot OPT(J_k).
		\end{equation}
		By \Cref{equation:dlvlp:2} and \Cref{equation:dlvlp:3},  
		\begin{equation} \label{dlvlres1}
			OPT(U_k) \leq (1 + \epsilon)OPT(J_k)
		\end{equation}	
        \qed
	\end{proof}
	
	\begin{corollary} \label{dlvl:corollary:bins<(1+epsilon)opt}
		$ A(J_k) \leq (1+\epsilon) \cdot OPT(J_k)$.
	\end{corollary}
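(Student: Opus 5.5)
The plan is to let $A(J_k)$ denote the number of bins the algorithm produces for the large items $J_k$ after ungrouping (steps 3--5 of \Cref{delaVegaLueker}). I will show that this packing uses exactly $OPT(U_k)$ bins and is a feasible $k$BP packing of $J_k$. The bound then follows immediately from \Cref{dlvl:lemma:opt<(1+epsilon)opt}.

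First I check that the packing of $U_k$ is feasible and has the right size. Step 4 solves $C_k$ optimally for the rounded instance $U$, and the solution is taken integral (by exhaustive search as described above). By the lemma that every integral solution of $C_k$ can be realised as a feasible $k$BP solution, this gives a $k$BP packing of $U_k$ with $OPT(U_k)$ bins, in which no bin holds two copies of the same rounded item.

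Second, I ungroup. Each rounded item of $U$ corresponds to a unique original item of $J$ (linear grouping is a bijection between items), and this correspondence extends copy-by-copy to $U_k$ and $J_k$. I replace every rounded item in each bin by its original. Since rounding only increased sizes, every original item is at most its rounded counterpart, so no bin capacity is exceeded. Distinct copies of the same original item come from distinct copies of the same rounded item, and those lie in distinct bins. So the $k$BP constraint is preserved, and the number of bins is unchanged: $A(J_k)=OPT(U_k)$.

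Combining the two steps with \Cref{dlvl:lemma:opt<(1+epsilon)opt} gives $A(J_k)=OPT(U_k)\le(1+\epsilon)\cdot OPT(J_k)$. The only delicate point is the bijection in the ungrouping step: it must be fixed per item of $D$, and not per size class, so that copies of one original item cannot collide in a bin. I expect this to be the sole obstacle, and it is resolved by fixing the map item-by-item before realising the configurations.
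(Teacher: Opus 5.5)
Your proposal is correct and follows the same route as the paper. You take the optimal integral solution of $C_k$ for $U_k$, ungroup it into a packing of $J_k$ with the same number of bins, and apply the lemma $OPT(U_k)\le(1+\epsilon)\cdot OPT(J_k)$. You also spell out what the paper only calls ``clear'': ungrouping never enlarges sizes, and the item-by-item correspondence between $U$ and $J$, combined with the realisation lemma for $C_k$, keeps all copies of one original item in distinct bins.
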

	\begin{proof}
		From steps $3$ and $5$ of the algorithm, it is clear that from a packing of $U_k$ we can obtain a packing of $J_k$ with the same number of bins. Therefore, $A(J_k) \leq (1+\epsilon) \cdot OPT(J_k)$.
        \qed
	\end{proof}
\end{toappendix}

\begin{theoremrep} \label{dlvl:theorem:bins<=(1+2epsilon)opt+k}
    Let $A$ be the generalization of the Fernandez de la Vega-Lueker algorithm to solve \kbp. Then, 
    $A(D_k) \leq (1 +  2 \cdot \epsilon ) \cdot OPT(D_k)	 + k$.
\end{theoremrep}
\begin{proof}
	Let $I$ be the set of all items of size at most $\epsilon \cdot S$ in $D$, and $I_k$ be $k$ copies of the instance $I$. In the step $6$, while packing the items in $I_k$ if fewer than $k$ new bins are opened, then from $\Cref{dlvl:corollary:bins<(1+epsilon)opt}$ it is clear that $A(D_k) \leq A(J_k) + k \leq (1 + \epsilon) \cdot OPT(J_k) + k$. Since bin-packing is monotone, $OPT(J_k) \leq OPT(D_k)$. Therefore, $A(D_k) \leq (1 + \epsilon) \cdot OPT(D_k) + k$.
	
	Now assume that at least $k$ new bins are opened at the step $6$ of the algorithm. Then, 
    From \Cref{general:lemma1},
	\begin{align} \label{dlvl:theorem1:equation1}
		A(D_k) &\leq \max \{(1+\epsilon) \cdot OPT(D_k) + k, (1+2 \cdot \epsilon) \cdot OPT(D_k) + k \} \nonumber \\
		&\leq (1+2 \cdot \epsilon) \cdot OPT(D_k) + k
	\end{align}
\qed
\end{proof}

\begin{toappendix}
    \paragraph{Running time of \Cref{delaVegaLueker}: }
    \label{dlvl:algorithmruntimeanalysis} 
    Time $O(n(D)\log{n(D)})$ is sufficient for all the steps except the step 4 in \Cref{delaVegaLueker}. Step 4 takes time $O\left(m(D) \cdot {n(D)} ^ {{m(D)} ^ {1/\epsilon}}\right)$. Therefore, the time taken by the algorithm is $O\left(m(D) \cdot {n(D)} ^ {{m(D)} ^ {1/\epsilon}}\right)$, which is polynomial in $n(D)$ given $m(D)$ and $\epsilon$.    
\end{toappendix}

We give a detailed proof of \Cref{dlvl:theorem:bins<=(1+2epsilon)opt+k}, along with the runtime analysis of the Fernandez de la Vega-Lueker algorithm to $k$BP, in 
\Cref{appendix: Fernandez de la Vega-Lueker algorithm to kbp}.

\subsection{Karmarkar-Karp Algorithms to \texorpdfstring{$k$}{k}BP} \label{SEC: PTAS: SUBSECTION: kkalgorithms}

Karmarkar and Karp \cite{karmarkar-efficient-1982} improved the work done by Fernandez de la Vega and Lueker \cite{fernandez_de_la_vega_bin_1981} mainly in two directions: (1) Solving the linear programming relaxation of $C_1$ 
using a variant of the GLS method \cite{grotschel_ellipsoid_1981} and (2) using a different grouping technique. These improvements led to the development of three algorithms. Their algorithm $3$ is a particular case of the algorithm $2$; we will discuss the generalization of algorithms $1$ and $2$ of Karmarkar-Karp algorithms to solve $k$BP.
Let $LIN(F_k)$ denote the optimal solution to the fractional linear program $F_k$ for $k$BP. We will discuss helpful results relevant to analyzing the generalized version of their algorithms. These results are an extension of the results in \cite{karmarkar-efficient-1982}.
 {\Cref{kkalgorithms: opt-dk<=2V(Dk)/s+k}} bounds from above the number of bins needed to pack the items in an optimal packing of some instance $D_k$. {\Cref{kkalgorithms: V(Dk)<=LIN(Dk)S+S(m(Dk)+k)/2}} concerns  obtaining an integer solution from a basic feasible solution of the fractional linear program.
We discuss \Cref{kkalgorithms: opt-dk<=2V(Dk)/s+k} and \Cref{kkalgorithms: V(Dk)<=LIN(Dk)S+S(m(Dk)+k)/2} in appendix \Cref{kkalgorithms2kBP}.

\begin{toappendix}
	\section{Karmarkar-Karp Algorithms to \kbp} \label{kkalgorithms2kBP}
\end{toappendix}

\begin{toappendix}
	\begin{lemma} 
		\label{kkalgorithms: opt-dk<=2V(Dk)/s+k}
		$OPT(D_k) \leq 2 \cdot V(D_k)/S + k$ 
	\end{lemma}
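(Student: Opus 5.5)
We will bound $OPT(D_k)$ by the number of bins produced by a concrete $k$BP algorithm, namely the Next-Fit extension $NFk$ from \Cref{section:approx-algo:subsection:NFk}, run on $D_k = DD\ldots D$. Any feasible $k$BP packing upper-bounds $OPT(D_k)$, so it suffices to show $NFk(D_k)\le 2V(D_k)/S + k$.

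\textbf{Step 1: feasibility.} As argued in the proof of \Cref{NFk:theorem:asymptotic-ratio-2}, the open bin of $NFk$ never contains two copies of the same item. The only exception is the degenerate case where some open bin would have to contain a whole copy of $D$ plus more items, which can happen only when $V(D)\le S$. We treat that case separately: if $V(D)\le S$, then the $k$ bins each holding one full copy of $D$ form a valid packing. Hence $OPT(D_k)\le k\le 2V(D_k)/S+k$. From now on assume $V(D)>S$. Then no bin can span a full instance, so an item is rejected by the open bin only because it does not fit.

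\textbf{Step 2: pairing bins.} Let $B_1,\dots,B_N$ be the $NFk$ bins in opening order, with $N=NFk(D_k)$. Every item opening $B_{i}$ failed to fit into $B_{i-1}$, so $V(B_{i-1})+V(B_i)>S$ for all $i\ge 2$. Summing over the disjoint pairs $(B_1,B_2),(B_3,B_4),\dots$ gives
\begin{align*}
V(D_k) \;\ge\; \sum_{j=1}^{\lfloor N/2\rfloor}\bigl(V(B_{2j-1})+V(B_{2j})\bigr) \;>\; \lfloor N/2\rfloor\, S .
\end{align*}
Therefore $N < 2V(D_k)/S + 2$, and since $N$ is an integer, $N \le 2V(D_k)/S+1$ up to rounding.

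\textbf{Step 3: conclusion.} Since $k\ge 1$, we get $OPT(D_k)\le N\le 2V(D_k)/S+1\le 2V(D_k)/S+k$.

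The only delicate point is Step 1: we must justify that the $k$BP conflict constraint never forces $NFk$ to close a bin with room left. I would first try the argument already used for \Cref{NFk:theorem:asymptotic-ratio-2}: the open bin holds a contiguous segment of the sequence $D_k$, and a segment of length less than $n(D)$ contains no repeated item. If that argument turns out to be fragile, there is a cruder fallback that explains the additive $k$. Pack each copy of $D$ separately by ordinary Next-Fit. Within a single copy there are no conflicts, so each copy uses fewer than $2V(D)/S+1$ bins. Summing over the $k$ copies gives $OPT(D_k)\le 2V(D_k)/S+k$ directly.
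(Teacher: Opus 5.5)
Your argument is correct, and your main route differs from the paper's. The paper's proof is one line. It takes the single-copy bound $OPT(D)\le 2V(D)/S+1$, cited from Karmarkar--Karp, and uses $OPT(D_k)\le k\cdot OPT(D)$, i.e.\ it repeats an optimal packing of $D$ $k$ times. The additive $k$ is then just $k$ times the additive $1$. Your fallback is exactly this argument, with Next-Fit supplying the single-copy bound.

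Your main route instead runs $NFk$ on the interleaved sequence $D_k$. It observes that when $V(D)>S$ the conflict constraint never binds. Indeed, a conflict for $x^{j+1}$ would require the open bin to hold the whole block from $x^j$ up to $x^{j+1}$, which is a full cyclic copy of $D$ of size $V(D)>S$. This gives a sharper statement, $OPT(D_k)\le\max\{k,\,2V(D_k)/S+1\}$. It shows the additive $k$ is needed only when $V(D)\le S$, where it is genuinely forced because every item must lie in $k$ distinct bins. The price is the extra conflict-freeness argument, which the paper's reduction avoids entirely.

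There is one slip to fix in Step 2. From $N<2V(D_k)/S+2$, integrality of $N$ does not give $N\le 2V(D_k)/S+1$ when $2V(D_k)/S$ is not an integer. Instead use $\lfloor N/2\rfloor\ge (N-1)/2$; combined with $\lfloor N/2\rfloor S<V(D_k)$, this gives $N<2V(D_k)/S+1$ directly. This only matters for $k=1$, since for $k\ge 2$ your looser bound already suffices. Also, a contiguous segment of $D_k$ of length at most $n(D)$, not just less than $n(D)$, is repetition-free.
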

	\begin{proof}
		When $k=1$ then from \mbox{\cite{karmarkar-efficient-1982}} we know that $OPT(D) \leq \frac{2}{S} \cdot V(D)+ 1$. Therefore $$OPT(D_k) \leq k \cdot OPT(D) \leq \frac{2}{S}\cdot k \cdot V(D)+ k = \frac{2}{S} \cdot V(D_k)+ k.$$
        \qed
	\end{proof}
	
	\begin{lemma} 
		\label{kkalgorithms: V(Dk)<=LIN(Dk)S+S(m(Dk)+k)/2}
		$V(D_k) \leq S \cdot LIN(D_k) \leq S \cdot OPT(D_k) \leq S \cdot LIN(D_k) + \frac{S \cdot (m(D_k) + k)}{2}$
	\end{lemma}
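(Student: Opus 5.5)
The first two inequalities are routine. The last one I would prove by adapting the rounding argument of Karmarkar and Karp to a basic optimal solution of $F_k$, with their additive $+1$ replaced by $+k$ using \Cref{kkalgorithms: opt-dk<=2V(Dk)/s+k}. Throughout, $m(D_k)=m(D)$, since duplicating $D$ creates no new sizes.

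For $V(D_k)\le S\cdot LIN(D_k)$: let $\mathbf{x}$ be any feasible solution of $F_k$. Since $A\mathbf{x}=k\mathbf{n}$, we have $V(D_k)=\sum_j x[j]\cdot V(\text{configuration } j)$. Each configuration has size at most $S$, so the right-hand side is at most $S\cdot \mathbf{1}\cdot\mathbf{x}$. For $LIN(D_k)\le OPT(D_k)$: an optimal $k$BP packing uses only feasible configurations. Counting how many bins use each configuration gives an integral solution of $C_k$, and hence a feasible solution of $F_k$, of value $OPT(D_k)$.

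For the main inequality, take a basic optimal solution $\mathbf{x}$ of $F_k$. It has at most $m(D)$ nonzero entries, because there are $m(D)$ equality constraints. Split it as $\mathbf{x}=\lfloor\mathbf{x}\rfloor+\mathbf{y}$, where $0\le y[j]<1$ and $\mathbf{y}$ has at most $m(D)$ nonzero entries. Put $Y:=\mathbf{1}\cdot\mathbf{y}$. The vector $k\mathbf{n}-A\lfloor\mathbf{x}\rfloor=A\mathbf{y}$ is a nonnegative integer vector. It describes a residual multiset $R$ of item sizes, with $V(R)\le S\cdot Y$ by the same computation as above, and with at most $k\,n[i]$ items of each size $c[i]$. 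I then bound the number of extra feasible configurations needed to cover $R$ in two ways.

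\textbf{Bound (a).} Use one bin of configuration $j$ for each $j$ with $y[j]>0$. Since $A\mathbf{y}$ is dominated by $A\lceil\mathbf{y}\rceil$, this covers $R$, possibly with surplus items that we drop. Dropping items keeps a configuration feasible, so this uses at most $m(D)$ configurations.

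\textbf{Bound (b).} Split $R$ into $k$ layers, each a sub-multiset of $D$: the $\ell$-th layer takes the $\ell$-th residual copy of each size, in the sense of the queue order. Packing each layer by the single-copy bound from \cite{karmarkar-efficient-1982} used in \Cref{kkalgorithms: opt-dk<=2V(Dk)/s+k} gives at most $2V(R)/S+k\le 2Y+k$ bins. Every such bin lies inside one copy of $D$, so it is a feasible configuration.

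Combining the two bounds, the number of extra configurations is at most $\min\{m(D),\,2Y+k\}\le \tfrac12\bigl(m(D)+2Y+k\bigr)=Y+\tfrac{m(D)+k}{2}$.

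Together with $\lfloor\mathbf{x}\rfloor$, we obtain a multiset of feasible configurations whose item-count vector is exactly $k\mathbf{n}$, after dropping surplus items. This is an integral solution of $C_k$ with value at most $\mathbf{1}\cdot\lfloor\mathbf{x}\rfloor+Y+\tfrac{m(D)+k}{2}=LIN(D_k)+\tfrac{m(D_k)+k}{2}$. By the realisation lemma for integral solutions of $C_k$, it is a valid $k$BP packing. Multiplying by $S$ gives the claim.

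The step I expect to need the most care is bound (b). There one must check that the layers of $R$ really are sub-multisets of $D$, so that the single-copy $2V/S+1$ bound applies to each layer and each resulting bin contains at most one copy of any item. The decisive point is that the final packing is assembled purely as a count vector of feasible configurations and then passed through the realisation lemma, which assigns item identities so that no bin holds two copies of the same item.
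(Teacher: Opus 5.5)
Your proposal is correct and follows the paper's proof. Both take a basic optimal solution of $F_k$ with at most $m(D)$ nonzero entries and keep $\lfloor\mathbf{x}\rfloor$. Both bound the fractional residual by $m(D)$ (one bin per fractional configuration) and by $2V/S+k$ (via the bound $OPT\le 2V/S+1$), then average the two bounds. Your handling of the residual is more careful than the paper's: you take it to be the integer vector $A\mathbf{y}$ and split it into $k$ sub-multisets of $D$, whereas the paper only asserts that its residual instance has single copies and applies the $D_k$ lemma to it directly.
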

	\begin{proof}
		Let $\mathbf{c}$ and $\mathbf{n}$ be the vectors of dimension $m(D)$ such that for each natural $i\le m(D)$ the $i$th entry of $\mathbf{c}$ is the size $c[i]$ and the $i$th entry of $\mathbf{n}$ is the number $n[i]$ of items of size $c[i]$. 
		For each natural $j\le t$ let the size of the configuration $j$ be the sum of all sizes of the items of $a_j$, which is equal to the $j$th entry of the vector $\mathbf{c^T}A$. For instance, let $m(D)=2$, $n[1]=2$, $n[2]=2$, $c[1]=3$, $c[2]=4$, and $S=10$. Then, the number $t$ of feasible  configurations is $6$, and let the matrix $A$ be 
		$ \begin{bmatrix}
			0 & 0 & 1 & 2 &1 & 2 \\
			1 & 2 & 0 & 0 &1 & 1
		\end{bmatrix} $. Then $\mathbf{c^T}A=\begin{bmatrix}
			4 & 8 & 3 & 6 & 7 & 10
		\end{bmatrix}$. Note that the size of each configuration is at most the bin size $S$. So if $\mathbf{S}$ is a vector of dimension $t$ whose each entry equals $S$ then
		\begin{equation} \label{kk:lemma2:equation1}
			\mathbf{S} \geq \mathbf{c^T}A
		\end{equation}
		Let $\mathbf{x}$ be an optimal basic feasible solution of the fractional linear program $F_k$, which represents the $k$BP problem, for instance, $D_k$. 
		Each element $x[j]$ in $\mathbf{x}$ represents the (fractional) number of bins filled with configuration $j$.
		Then, $\mathbf{S} \cdot \mathbf{x}$ is the maximum possible sum of all items  
		from
		all configurations in the solution vector $\mathbf{x}$.
		Then,
		\begin{align}
			S \cdot LIN(D_k) = S \cdot LIN(F_k) &= \mathbf{S} \cdot \mathbf{x}  \label{kk:lemma2:equation2}\\
			&\geq (\mathbf{c}^T A) \cdot \mathbf{x} && \text{by \Cref{kk:lemma2:equation1}} \label{kk:lemma2:equation3} \\
			&\geq  \mathbf{c}^T(k\mathbf{n}) && \text{by \Cref{eqn:dlvl:5}} 
		\label{kk:lemma2:equation4}
	\end{align}
	
	Since for each natural $i\le m(D)$, the $i$th entry of $\mathbf{c^T}$ is the size $c[i]$ and the $i$th entry of $\mathbf{n}$ is the number of items of size $c[i]$, $\mathbf{c^T n}$ is the sum of sizes of all items of $D$, so by \Cref{kk:lemma2:equation4} we have  $S \cdot LIN(D_k) = k \cdot V(D) = V(D_k)$.
	
	\par Since for any natural $j\le t$ the variable $x_j$ in the problem for $LIN(D_k)$ is the fractional number of occurrences of the configuration $j$, we have $LIN(D_k) \leq OPT(D_k)$, and the equality holds iff the problem for $LIN(D_k)$ has an integer solution.
	
	\par 
	
	It is well-known that the linear program {\Cref{eqn:dlvl:4}} - {\Cref{eqn:dlvl:6}} has 
	an optimal solution $\mathbf{x}$ which is a \emph{basic feasible solution},
	where the number of
	non-zero variables is at most the number $m(D)$ of constraints. 
	For each natural $j\le t$ such that $x_j>0$ we have $x_j = \lfloor x_j \rfloor + r_j$, where $ \lfloor x_j \rfloor$ is the integer part of $x_j$, and $r_j$ is the fractional part of $x_j$.
	For each natural $j \le t$ such that $r_j \in (0,1)$, we take configuration $j$. Let $D'$ be the resulting constructed instance. Clearly, all items in $D'$ will have a single copy.
	Then,
	\begin{equation} \label{kk:lemma2:equation6}
		V(D') \leq S \cdot LIN(D')  = S \cdot \sum_j r_j 
	\end{equation}
	For the residual part, we can construct two packings. 
	First, for each non-zero $r_j$, let's take a bin of configuration $j$. We can then remove extra items.
	Therefore,
	\begin{equation} \label{eqn:kk:lemma2:1} OPT(D') \leq m(D_k) \text{ ( the instances } D \text{ and } D_k \text{ have the same configurations)}\end{equation}
	
	Second, from \Cref{kkalgorithms: opt-dk<=2V(Dk)/s+k} we have	 
	\begin{equation} \label{eqn:kk:lemma2:2} OPT(D') \leq 2V(D')/S + k  \end{equation}
	By \Cref{eqn:kk:lemma2:1} and \Cref{eqn:kk:lemma2:2},$$OPT(D')\le  \min \{m(D_k), 2V(D')/S + k\}\le\frac {m(D_k) +  2V(D')/S + k}2= V(D')/S + \frac{m(D_k)+k}{2}.$$.
	
	Therefore,
	\begin{align} \label{kklemm2e4}
		OPT(D_k) &\leq \text{the principal part} + OPT(D') \nonumber \\
		OPT(D_k) &\leq \text{the principal part} + V(D')/S + \frac{m(D_k)+k}{2} \nonumber \\
		OPT(D_k) &\leq LIN(D_k) + \frac{m(D_k)+k}{2} && \text{by \Cref{kk:lemma2:equation2}}
	\end{align}
    \qed
	\end{proof}
	
	\begin{corollary} \label{kk:lemm2cor1}
	There is an algorithm \textbf{A} that solves the $k$BP by solving the corresponding fractional bin-packing problem $F_k$. It produces at most $\mathbf{1 \cdot x} + \frac{m(D_k)+k}{2}$ bins, where $\mathbf{x}$ is 
	an optimal solution of $F_k$.
	\end{corollary}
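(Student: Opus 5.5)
The corollary should follow almost directly from \Cref{kkalgorithms: V(Dk)<=LIN(Dk)S+S(m(Dk)+k)/2}, by turning its existential rounding argument into an explicit procedure. The algorithm \textbf{A} has three stages.

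First, solve the fractional program $F_k$ and obtain an optimal \emph{basic} feasible solution $\mathbf{x}$. Since $F_k$ has $m(D)=m(D_k)$ equality constraints, at most $m(D_k)$ entries of $\mathbf{x}$ are non-zero. Computing such a solution efficiently is assumed here, as in Karmarkar--Karp, e.g.\ via the GLS-type method on the dual followed by a standard conversion to a vertex.

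Second, split $\mathbf{x}$ into an integer part and a fractional residue. For each configuration $j$, open $\lfloor x_j\rfloor$ bins of configuration $j$. This integral vector is a feasible sub-solution of $C_k$, so by the realisation lemma it can be turned into bins respecting the $k$BP constraint (each configuration is feasible, so no bin receives two copies of an item). It uses at most $\mathbf{1}\cdot\mathbf{x}-\sum_j r_j$ bins. The items not yet covered form a residual instance $D'$, and its total size satisfies $V(D')\le S\sum_j r_j$, exactly as in \eqref{kk:lemma2:equation6}.

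Third, pack $D'$ by whichever of the two constructions from the lemma's proof uses fewer bins. The first option takes one bin of configuration $j$ for each $j$ with $r_j>0$ and deletes surplus items; this uses at most $m(D_k)$ bins. The second option packs $D'$ with a simple greedy heuristic, such as $NFk$ or $FFk$, which achieves $2V(D')/S+k$ bins by the argument behind \Cref{kkalgorithms: opt-dk<=2V(Dk)/s+k} and \Cref{NFk:theorem:asymptotic-ratio-2}. The better of the two uses at most
\[\min\{m(D_k),\,2V(D')/S+k\}\le V(D')/S+\tfrac{m(D_k)+k}{2}\le \textstyle\sum_j r_j+\tfrac{m(D_k)+k}{2}\]
bins. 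Adding the two stages gives the total bound $\mathbf{1}\cdot\mathbf{x}+\frac{m(D_k)+k}{2}$.

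The main obstacle is checking that the residual items can always be packed without conflicts alongside the integer part. A conflict would arise if two copies of the same item ended up in one residual bin. This is avoided by treating the residual as a fresh $k'$BP-style instance: it consists of at most $k$ copies of each item, and the copies are distributed by the same queue-based realisation as before, so each residual bin receives items in queue order without repeating an item. For the greedy option, the additive $+k$ term absorbs the conflicts, exactly as in the proof of \Cref{kkalgorithms: opt-dk<=2V(Dk)/s+k}. With this checked, the remaining arithmetic is the same as in the lemma.
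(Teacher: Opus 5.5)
The route is the paper's. The corollary is the constructive reading of the proof of \Cref{kkalgorithms: V(Dk)<=LIN(Dk)S+S(m(Dk)+k)/2}: optimal basic solution, integer parts, residual $D'$ with $V(D')\le S\sum_j r_j$, the two packings of $D'$, and $\min\le$ average. Your first option, with the queue-based realisation, is fine.

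The step that does not hold as you justify it is the greedy bound $2V(D')/S+k$, because the residual need not be of the form $E_k$. Take $S=100$, $D=\{36,32,33,34\}$, $k=2$. The dual $y_{36}=2/3$, $y_{32}=y_{33}=y_{34}=1/3$ is feasible, since $36+32+33>100$. Complementary slackness with it shows that the unique optimum of $F_2$ is $x_{\{36,b\}}=2/3$ for each $b\in\{32,33,34\}$ and $x_{\{32,33,34\}}=4/3$. So $D'=\{36,36,32,33,34\}$.

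This breaks both results you cite:
\begin{itemize}
\item \Cref{kkalgorithms: opt-dk<=2V(Dk)/s+k} does not apply to $D'$. Its proof is just $OPT(D_k)\le k\cdot OPT(D)$, and its $+k$ counts $k$ conflict-free copies; it does not absorb conflicts.
\item The proof of \Cref{NFk:theorem:asymptotic-ratio-2} needs the periodic order $DD\cdots D$ together with $V(D)>S$. In an arbitrary order, $NFk$ can be forced by conflicts alone to open a new bin at every second item, e.g.\ tiny items in the order $x,x,y,y,z,z,\dots$ with $k=2$.
\end{itemize}

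The repair is short, and either of the following works.
\begin{itemize}
\item Split $D'$ into layers $L_1\supseteq\dots\supseteq L_k$, where $L_\ell$ holds one copy of each item with at least $\ell$ residual copies. Each layer is conflict-free, so run ordinary $NF$ on each one. This uses at most $\sum_\ell\bigl(2V(L_\ell)/S+1\bigr)=2V(D')/S+k$ bins.
\item Run $FFk$ in any order and show that at most $k$ of its bins end with content at most $S/2$. Let $x$ be the item that opened the last such bin. Then $V(x)\le S/2$, so $x$ fits by size into every earlier such bin when it arrives. Hence each of those bins already holds a copy of $x$, so there are at most $k-1$ of them, and $N\le 2V(D')/S+k$.
\end{itemize}

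With either fix your proof is complete. It is then more careful than the paper's, which invokes \Cref{kkalgorithms: opt-dk<=2V(Dk)/s+k} on $D'$ directly. The example above also contradicts the paper's remark that every item of $D'$ has a single copy.
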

	Note that the number of different item sizes in the instance is $m(D_k) = m(D)$.
\end{toappendix}

Before moving further, we would like to mention that if we use some instance (or group) without subscript $k$, we are talking about the instance when $k=1$. 

All Karmarkar-Karp algorithms use a variant of the ellipsoid method to solve the fractional linear program. So, we will talk about adapting this method to $k$BP.

\paragraph{Solving the fractional linear program: } \label{kk:solving-flp}
Solving the fractional linear program \hypertarget{kk:flp}{$F_k$}   involves a variable for each configuration. This results in a large number of variables. The fractional linear program $F_k$ has the following dual $D_F$.
\begin{align} \label{kk:dflp}
	\max  \quad k \cdot \mathbf{n \cdot y} \\
	\text{such that} \quad A^T \mathbf{y} &\leq \mathbf{1} \\
	\mathbf{y} &\geq 0 
\end{align}
The above dual linear program can be solved to any given tolerance $h$ by using a variant of the ellipsoid method that uses an approximate separation oracle \cite{karmarkar-efficient-1982}. The running time of the algorithm is $T(m(D_k),n(D_k)) = O\left(m(D)^8 \cdot  {\ln{m(D)}} \cdot {\ln^2\left(\frac{m(D) \cdot n(D)}{\epsilon \cdot S \cdot h}\right)} + \frac{m(D)^4 \cdot k \cdot n(D) \cdot \ln{m(D)}}{h} \ln{\frac{m(D) \cdot n(D)}{\epsilon \cdot S \cdot h}}\right)$.	We give a description of this variant of the ellipsoid method and its running time in \Cref{kkalgorithms: solving-flp}

\begin{toappendix}
	\subsection{Solving the fractional linear program: } \label{kkalgorithms: solving-flp}
	Solving the fractional linear program \hyperlink{kk:flp}{$F_k$}   involves a variable for each configuration. This results in a large number of variables. The fractional linear program $F_k$ has the following dual $D_F$.
	\begin{align} \label{kk:dflp:appendix}
		\max  \quad k \cdot \mathbf{n \cdot y} \\
		\text{such that} \quad A^T \mathbf{y} &\leq \mathbf{1} \\
		\mathbf{y} &\geq 0 
	\end{align}
	The above dual linear program can be solved to any given tolerance $h$ by using a variant of the ellipsoid method that uses an approximate separation oracle \cite{karmarkar-efficient-1982}.
	This separation oracle accepts as input a vector $\mathbf{y}$ (each element $y[i]$ of $\mathbf{y}$ represents the price of item of size $c[i]$) and determines:
	\begin{itemize}
		\item whether $\mathbf{y}$ is feasible, or
		\item if not, then return a constraint that is violated, i.e. a vector $\mathbf{a}$ such that $\mathbf{a} \cdot \mathbf{y} > 1$.
	\end{itemize}
	Let $\mathbf{c}$ be the $m(D)$-vector of the item sizes. The separation oracle solves the above problem by solving the following knapsack problem:
	\begin{align} \label{kk:dflp:knapsack}
		\max \quad \mathbf{a} \cdot \mathbf{y} \\
		\text{such that} \quad \mathbf{a} \cdot \mathbf{c} &\leq S \\
		\mathbf{a} &\geq 0 
	\end{align}
	Each entry $a[i]$ of $\mathbf{a}$ represents $a[i]$ pieces of size $c[i]$ 
	, and hence $\mathbf{a}$ is an integer vector.   If $\mathbf{y}$ is feasible then the optimal value of the above knapsack problem is at most $1$. Otherwise, $\mathbf{a}$ correspond to a configuration that violates the constraint $\mathbf{a} \cdot \mathbf{y} \leq 1$.
	
	Suppose we want a solution of $D_F$ 
	within a specified tolerance $\delta$. Then, the above knapsack problem can be solved in polynomial time by rounding down each component of $\mathbf{y}$ to the closest multiple of  {$\frac{\delta}{n(D)}$}. The runtime of the above algorithm is  {$O\left(\frac{m(D) \cdot n(D)}{\delta}\right)$} \cite{karmarkar-efficient-1982}.  Alternatively, we can round down each component of $\mathbf{y}$ to the nearest multiple of {$\frac{\delta}{k \cdot n(D)}$}, and then the runtime of the above algorithm is {$O\left(\frac{k \cdot m(D) \cdot n(D)}{\delta}\right)$}.
	
	The modified ellipsoid method uses the above approximate separation oracle as follows. Given the current ellipsoid center $\mathbf{y_c}$, let $\widetilde{\mathbf{y_c}}$ be the rounded down solution corresponding to $\mathbf{y_c}$. The modified ellipsoid method performs either an \textit{feasibility cut} (if $\widetilde{\mathbf{y_c}}$ is infeasible) or an \textit{optimality cut} (if $\widetilde{\mathbf{y_c}}$ is feasible). Since $\mathbf{\widetilde{y_c}}$ is the rounded down solution corresponding to $\mathbf{y_c}$, if $\mathbf{\widetilde{y_c}}$ is infeasible then $\mathbf{y_c}$ is not feasible too. In a feasibility cut, the modified ellipsoid method cuts from the ellipsoid all points such that $\mathbf{a} \cdot \mathbf{y} > 1$. If $\mathbf{\widetilde{y_c}}$ is feasible, then $\mathbf{y_c}$ may or may not be feasible.  
	If the rounding-down is done to a multiple of {$\frac{\delta}{k \cdot n(D)}$}, then, by definition of the rounding, we have 
	$$k \cdot \mathbf{n} \cdot \widetilde{\mathbf{y_c}} \geq k \cdot \mathbf{n} \cdot \mathbf{y_c} - k \cdot \mathbf{n} \cdot \mathbf{1} \cdot \frac{\delta}{k \cdot n} \nonumber  k \cdot \mathbf{n} \cdot \mathbf{y_c} - \delta.$$
	In an optimality cut, the method preserves all the vectors whose value exceeds $\widetilde{\mathbf{y_c}}$ by more than $\delta$ \cite{karmarkar-efficient-1982} i.e. it removes all points that satisfy $k \cdot \mathbf{n} \cdot \mathbf{y} < k \cdot \mathbf{n} \cdot \widetilde{\mathbf{y_c}} + \delta $ .
	
	The number of iterations in the modified ellipsoid algorithm is the same as that of the modified ellipsoid method in \cite{karmarkar-efficient-1982} that is, at most, {$Q = 4 \cdot {m(D)}^2 \cdot \ln{\left(\frac{m(D) \cdot n(D)}{\epsilon \cdot S \cdot \delta}\right)}$} iterations. The total execution time of the modified algorithm is  {$O\left(Q \cdot m(D) \cdot \left(Q \cdot m(D) + \frac{k \cdot n(D)}{\delta}\right)\right)$}. 
	
	Let $t$ be the number of configurations in $k$BP, which is the same as the number of configurations in BP. There are at most $t$ constraints of the form $\mathbf{a} \cdot \mathbf{y} \leq 1$ during the ellipsoid method. It is well known that in any bounded linear program with {$m(D)$}  variables, at most {$m(D)$} constraints are sufficient to determine the optimal solution. 
	We can use the same constraint elimination procedure as in \cite{karmarkar-efficient-1982} to come up with this critical set of {$m(D)$}  constraints. It results in having a reduced dual linear program with {$m(D)$}  variables and {$m(D)$}  constraints. Let us call this reduced dual linear program $R_{D_F}$. By taking the dual of $R_{D_F}$, we get a reduced primal linear program, say $R_F$. This reduced primal linear program will have only {$m(D)$}  variables corresponding to {$m(D)$}  constraints. Let us denote the optimal solutions of the dual linear program $D_F$ and the reduced dual linear program $R_{D_F}$ as $LIN(D_F)$ and $LIN(R_{D_F})$ respectively. Since $R_{D_F}$ is a relaxation of $D_F$ therefore, $LIN(D_F) \leq LIN(R_{D_F})$. From the modified GLS algorithm, we know that $LIN(R_{D_F}) \leq LIN(D_F) + \delta$ if rounding down is done in a multiple of  {$\frac{\delta}{k \cdot n(D)}$}. From the duality theorem of the linear program, both primal and dual have the same optimal solution. Therefore, the optimal solution of the reduced primal linear program, $R_F$, is at most $LIN(D_F) + \delta$ using the rounding down scheme in the modified GLS method.
	
	The constraint elimination procedure will require to execute the modified GLS algorithm to run at most
	{$(m(D)+1)\left(\left\lceil \frac{R}{m(D)+1} \right\rceil + (m(D)+1) \ln\left(\left\lceil \frac{R}{m(D)+1} \right\rceil\right) + 1\right)$} times, where  {$R \leq Q + 2 \cdot m(D)$} \cite{karmarkar-efficient-1982}. In the constraint elimination procedure, in the worst case, one out of every {$m(D)+1$} executions succeeds \cite{karmarkar-efficient-1982}. Therefore, the total accumulated error is  {$\delta \left(\left\lceil \frac{R}{m(D)+1} \right\rceil + (m(D)+1) \ln\left(\left\lceil \frac{R}{m(D)+1} \right\rceil\right) + 1\right)$}. If the tolerance $\delta$ is set to 
	$$\frac{h}{\left\lceil \frac{R}{m(D)+1} \right\rceil + (m(D)+1) \ln\left(\left\lceil \frac{R}{m(D)+1} \right\rceil\right ) + 2}$$ then the optimal solution of the reduced primal linear program is at most $LIN(D_F) + h$ using the rounding down mechanism in the modified GLS method.
	
	The total running time of the algorithm is
	$$O\left(\left(Q  m(D) \left(Q  m(D) + \frac{k \cdot n(D)}{\delta}\right)\right)(m(D)+1)\left(\left\lceil \frac{R}{m(D)+1} \right\rceil + (m(D)+1) \ln\left\lceil \frac{R}{m(D)+1} \right\rceil + 1\right)\right) $$ $$\approx O\left(m(D)^8 \cdot  {\ln{m(D)}} \cdot {\ln^2\left(\frac{m(D) \cdot n(D)}{\epsilon \cdot S \cdot h}\right)} + \frac{m(D)^4 \cdot k \cdot n(D) \cdot \ln{m(D)}}{h} \ln{\frac{m(D) \cdot n(D)}{\epsilon \cdot S \cdot h}}\right).$$ {Since $m(D) = m(D_k)$ and $k \cdot n(D) = n(D_k)$}, in our analysis, we denote this running time as {$T(m(D_k),n(D_k))$}.	 
\end{toappendix}

\paragraph{A high-level description of the extensions of the Karmarkar-Karp algorithms.} We will give a high-level description behind the extension of Karmarkar-Karp algorithms to solve $k$BP. Let $I$ and $J$ be multisets of small and large items in $D$, respectively. Let $U''$ be the instance constructed from $J$ by applying the grouping technique. Construct the configuration linear program $C_k$ for $U''_k$ and solve the corresponding fractional linear program $F_k$. Let $\mathbf{x}$ be the resulting solution. From $\mathbf{x}$, obtain an integral solution for $U''_k$. From this solution, get a solution for $J_k$ by ungrouping the items. Add the items in $I_k$ by respecting the constraints of $k$BP to get a solution for $D_k$.

\subsubsection{Karmarkar-Karp Algorithm 1 extension to \texorpdfstring{$k$}{k}BP:} \label{SUBSUBSECTION: KKAlgorithm1}
Algorithm 1 of the Karmarkar-Karp algorithms uses the linear grouping technique as illustrated in \Cref{section:EAA:subsection:general}. We give the extension of the Karmarkar-Karp Algorithm 1 to $k$BP in \Cref{kkalgorithms: algorithm1}

\begin{toappendix}
	\subsection{Karmarkar-Karp Algorithm 1 extension to \texorpdfstring{$k$}{k}BP}
	\label{kkalgorithms: algorithm1}
	Recall that the algorithm 1 of Karmarkar-Karp algorithms uses the linear-grouping technique \Cref{section:EAA:subsection:general}. 
	
	\begin{algorithm}[H] 
		\DontPrintSemicolon
		\KwIn{A set $D$ of items, a number $\epsilon \in (0,1/2]$, and an integer $k$.}
		\KwOut{A bin-packing of $D_k$.}
		Let an item of $D$ is \emph{small} if it has size at  most $\max \{1/n, \epsilon\} \cdot S$ and \emph{large}, otherwise. Let $I$ and $J$ be multisets of small and large items of $D$, respectively.  Let $I_k$ be the $k$ copies of the instance $I$. \;
		Perform the linear grouping on the instance $J$ by making groups of cardinality $g = \lceil n(J) \cdot {\epsilon}^2 \rceil $. Denote the resulting instances as ${U'}$  and  $U''$, as defined in the linear grouping technique in 
        \Cref{appendix:linear grouping} \;
		Pack each item in ${U'_k}$ into a bin. There can be at most $g \cdot k$ bins, because each item of $D$ has $k$ copies in $D_k$. \;
		Solve ${U''}_k$ using a fractional linear program with tolerance $h=1$. Denote the resulting solution as $\mathbf{x}$. \;
		Construct the integer solution from $\mathbf{x}$ by the rounding method using no more than $1 \cdot \mathbf{x} + \frac{m(U'')+k}{2}$ bins (see \Cref{kk:lemm2cor1}). \;
		Reduce the item sizes as necessary and obtain a packing for $J_k$ along with the addition of bins as in step 3. \;
		Obtain a packing for $D_k$ by adding the items in $I_k$ which are kept aside in step 1 by respecting constraint of $k$BP, creating new bin(s) when necessary. \;
		\caption{Karmarkar-Karp Algorithm 1 extension to $k$BP}
		\label{kk:algo1}
	\end{algorithm}
\end{toappendix}

\begin{theoremrep} \label{kkalgorithms:algorithm1:theorem:bin(Dk)<=(1+2kepsilon)OPT+additiveterms}
	Let $A(D_k)$ denote the number of bins produced by Karmarkar-Karp Algorithm 1 extension to solve \kbp. Then,
	$A(D_k) \leq (1 + 2 \cdot k \cdot \epsilon)OPT(D_k) + \frac{1}{2 \cdot \epsilon^2} + (2 \cdot k+1)$.
\end{theoremrep}
\begin{appendixproof}
	Since bin-packing is monotone.
	\begin{equation} \label{eqn:kk:algorithm1:theorem1:1}
		OPT({U''}_k) \leq OPT(J_k) \leq OPT(D_k)  
	\end{equation}
	The size of each item in $J_k$ is at least $\frac{\epsilon}{2} \cdot S$. Therefore, from \Cref{kkalgorithms: V(Dk)<=LIN(Dk)S+S(m(Dk)+k)/2},
	\begin{align} \label{eqn:kk:algorithm1:theorem1:2}
		S\cdot OPT(J_k) &\geq V(J_k) \geq \frac{\epsilon}{2} \cdot S \cdot n(J_k) \nonumber \\
		OPT(J_k) &\geq \frac{\epsilon}{2} \cdot n(J_k).
	\end{align}
	From algorithm step 2 and from \Cref{eqn:kk:algorithm1:theorem1:1} and \Cref{eqn:kk:algorithm1:theorem1:2},
	\[ g = \lceil n(J_k) \epsilon^2 \rceil \leq 2 \cdot \epsilon \cdot OPT(J_k) + 1 \leq 2 \cdot \epsilon \cdot OPT(D_k) + 1  \]
	\[ m({U''}_k) = n({U''}_k)/g = n({U''}_k)/{ \lceil n(J_k)\epsilon^2 \rceil } \leq n(J_k)/ { \lceil n(J_k)\epsilon^2 \rceil } \leq 1/\epsilon^2.  \]
	We have tolerance $h=1$, so
	\[ \mathbf{1 \cdot x} \leq LIN({U''}_k) + 1 \leq OPT({U''}_k) + 1 \leq OPT(D_k) +1 \]
	Step $6$ will result in the number of bins which is at most
	\[ \mathbf{1 \cdot x} + \frac{m({U''}_k) + k}{2} + g \cdot k  \leq\]
	\[   OPT(D_k) +1 + \frac{k + \frac{1}{\epsilon^2}}{2} + k \cdot (2 \cdot \epsilon \cdot OPT(D_k) + 1) \leq\]
	\[ \ (1 + 2 \cdot k \cdot \epsilon)OPT(D_k) + \frac{1}{2 \cdot \epsilon^2} + (2k+1). \]
	
	By \Cref{general:lemma1}, the number $bins(D_k)$ of bins required at step $7$ of the algorithm is at most 
	$$\max \left\{ (1 + 2 \cdot k \cdot \epsilon)OPT(D_k) + \frac{1}{2 \cdot \epsilon^2} + (2k+1), (1 + 2\ \cdot \epsilon)OPT(D_k) + k \right\}=$$ $$(1 + 2 \cdot k \cdot \epsilon)OPT(D_k) + \frac{1}{2 \cdot \epsilon^2} + (2k+1). \notag \nonumber $$
\end{appendixproof}

We give the proof of the above Theorem and the running time of the Karmarkar-Karp Algorithm 1 extension to $k$BP in \Cref{kkalgorithms: algorithm1}.

\begin{toappendix}	
	\paragraph{Running time of \Cref{kk:algo1}.}  \label{kk:algorithm1:running-time}
	The time taken by fractional linear program is at most $T(m(U''_k),n(U''_k) ) \leq T\left(\frac{1}{\epsilon^2}, n(D_k)\right)$. The rest of the steps in Algorithm 1 will take at most $O(n(D_k) \log {n(D_k)})$ time. Therefore, the execution time of above Algorithm 1 is at most $O\left(n(D_k) \log {n(D_k)} + T(\frac{1}{\epsilon^2}, n(D_k)\right)$.
\end{toappendix}


\subsubsection{Karmarkar-Karp Algorithm 2 extension to \texorpdfstring{$k$}{k}BP.} \label{SUBSUBSECTION: KKAlgorithm2}
Algorithm 2 of Karmarkar and Karp uses the \textit{alternative geometric grouping technique}. Let $J$ be some instance and $g > 1$ be some integer parameter, then, alternative geometric grouping partitions the items in $J$ into groups such that each group contains the necessary number of items so that the size of each group but the last (i.e. the sum of the item sizes in that group) is at least $g \cdot S$. See \Cref{kkalgorithms:algorithm2} for more details
on alternative geometric grouping and the extension of the Karmarkar-karp Algorithm 2 to \kbp.

\begin{toappendix}
	\subsection{Karmarkar-Karp Algorithm 2 extension to \texorpdfstring{$k$}{k}BP} \label{kkalgorithms:algorithm2}
	
	\paragraph{Alternative Geometric Grouping.} Let $J$ be an instance and $g>1$ be an integer parameter. Sort the items in $J$ in a non-increasing order of their size. Now, we partition $J$ into groups $G_1,\dots G_r$ 
	each containing necessary number of items so that the size of each but the last 
	group (the sum of item sizes in that group) is at least $g \cdot S$.
	
	For natural $i\le r$ let $l_j$ be the  cardinality of the group $G_i$, that is the number of items in the group.
	Let $l_r$ be the cardinality of $G_r$. Note that the size of $G_r$ may be less than $g \cdot S$.
	
	Since each item size is less than $S$, the number of items in each but the last
	group is at least $g+1$. Let $\epsilon \cdot S$ be the smallest item size in $J$. Then, the number of items in each but the last 
	group is at most $\frac{g}{\epsilon}$. 
	Therefore,
	$
	g+1 \leq l_1 \leq \ldots \leq l_r \leq \frac{g}{\epsilon}
	$. 
	For each natural $i<r$ let ${G'}_{i+1}$ be the group obtained by increasing the size of each item in ${G}_{i+1}$ to the maximum item size in that group except for the smallest $l_{i+1} - l_{i}$ items in that group. Let $U' = \bigcup_{i=2}^r \Delta {G_i} \cup G_1$ and $U'' = \bigcup_{i=2}^r G_i^{'}$, where $\Delta G_i$ is the multiset of smallest $l_i - l_{i-1}$ items in $G_i$ for each natural $i$ with $2\le i\le r$ .
	
	Since bin-packing is monotone, $OPT(J) \leq OPT(U') + OPT(U'')$. By \Cref{kkalgorithms: opt-dk<=2V(Dk)/s+k}, $OPT(U') \leq 2 \cdot V(U')/S + 1$ (note that $k=1$). It holds that $V(U') \leq S \cdot g \cdot (1 + \ln {\frac{1}{\epsilon \cdot S}}) + S$ \cite[page 315]{karmarkar-efficient-1982}. Therefore,
	\begin{equation} \label{kk:equation4}
		OPT(J) \leq OPT(U'') + 2\cdot g \cdot \left(2 + \ln {\frac{1}{\epsilon \cdot S}} \right)
	\end{equation}
	Also, $V(U'') \geq {\sum_{i=2}^r} g \cdot S \cdot \frac{l_{i-1}}{l_i} \geq g \cdot S \cdot \left( (r-1) - \ln \frac{1}{\epsilon \cdot S} \right)$. 
	Therefore, by \cite{karmarkar-efficient-1982},  
	\begin{equation} \label{kk:equation5}
		m(U'') = r-1 \leq \frac {V(U'')}{g \cdot S} + \ln {\frac{1}{\epsilon \cdot S}}.
	\end{equation} 
	
	\begin{algorithm}[H]
		\DontPrintSemicolon
		\KwIn{A set $D$ of items, $\epsilon \in (0,1/2]$, and integers $g > 1$ and $k$.}
		\KwOut{A bin-packing of $D_k$.}
		Let $J$ be the instance obtained after removing all items of size at most $ \epsilon \cdot S$ from $D$.  \;
		\While{$V(J) > S \cdot (1 + \frac{g}{g-1} \cdot \ln{\frac{1}{\epsilon}})$}{
			Do the alternative geometric grouping with the parameter $g$. Let the resulting instances be $U'$ and $U''$.  \;
			Solve ${U''}_k$ using fractional bin-packing with the tolerance $h=1$. Denote the resuting basic feasible solution as $\textbf{x}$. \;
			Construct an integer solution from $\textbf{x}$. Create $\lfloor x_i \rfloor$ bins for each $x_i$ in $\textbf{x}$. Remove the packed items from $J$. \;
			Pack ${U'}_k$ in  at most $2 \cdot k \cdot g \cdot \left[ 2 + \ln \frac{1}{\epsilon} \right]$ bins. \;
		}
		When $V(J) \leq S \cdot (1 + \frac{g}{g-1} \cdot \ln{\frac{1}{\epsilon}})$, pack the remaining items greedily in at most $(2 + \frac{2 \cdot g}{g-1} \cdot \ln {\frac{1}{\epsilon}})$ bins and create $k$ copies of this packing. \;
		Greedily pack the $k$ copies of the items of size at most $\epsilon \cdot S$ respecting $k$BP constraint to get a solution for $D_k$. \;
		\caption{Karmarkar-Karp Algorithm 2 extension to $k$BP}
		\label{kk:algorithm2}
	\end{algorithm}	
\end{toappendix}

\begin{theoremrep} \label{kkalgorithm2:theorem:bin(Dk)<=opt+O(klog2opt)}
    Let $A(D_k)$ denote the number of bins produced by Karmarkar-Karp Algorithm 2 extension to solve \kbp. Then,
    $A(D_k) \leq OPT(D_k) + O(k \cdot \log^2 {OPT(D)})$.
\end{theoremrep}
\begin{appendixproof}
		Let ${U''}_{k,i}$ 
		be the instance of $U''_k$ 
		at the beginning of the $i$th iteration of the loop in step $2$. Then the fractional part of $\mathbf{x}$ contains at most $m({U''}_{k,i})$ 
		non-zero variables $x_j$ for $j\le t$. Therefore, from 
		\eqref{kk:equation5} we have
		\begin{equation} \label{kk:algorithm2: analysis:equation1}
			m({U''}_{k,i})  = m({U''}_{1,i}) \leq \frac{V({U''}_{1,i})}{g\cdot S} + \ln{\frac{1}{\epsilon \cdot S}}
		\end{equation}
		\emph{The number of iterations in step $6$ of algorithm.}
		As shown in \cite{karmarkar-efficient-1982} the number of iterations is at most $\frac{\ln {V({D})}}{\ln {g}}+ 1$.
		
		\emph{Number of bins produced using Fractional Bin-Packing solution in steps $3-5$.}
		Let $J_i$ denote the item set 
		$J$ 
		at the beginning of $i$th iteration. On applying alternative geometric grouping with the parameter $g$, we get 
		the instances ${U'}_{1,i}$ and ${U''}_{1,i}$ . Let 
		${U'}_{k,i}$ and ${U''}_{k,i}$ be the corresponding $k$BP instances.
		Let $B_i$ be the number of bins obtained in solving ${U''}_{k,i}$ by applying fractional bin-packing at the iteration $i$ with the tolerance $h=1$. Then
		\[
		B_i \leq LIN({U''}_{k,i}) + 1.
		\]
		Summing across all iterations
		\begin{align} \label{kk:algorithm2: analysis:equation2}
			\sum B_i &\leq LIN({U''}_{k}) + \frac{\ln {V(D) }}{\ln {g}}+ 1 \nonumber \\
			&\leq OPT({U''}_{k}) + \frac{\ln {V({D}) }}{\ln {g}}+ 1
		\end{align}
		\emph{The total number of bins used.}
		Let $A(D_k)$ denote the total number of used bins
		Then $A(D_k)$ is the sum of  the bins used 
		in steps $3-5$, in step $6$, and in step $8$. So 
		\begin{equation}
			\begin{split}
				A(D_k) \leq OPT(D_k) + \left(\frac{\ln{V(D)}}{\ln{g}} + 1\right)\left(1 + 4 \cdot k \cdot g + 2 \cdot k \cdot g \ln{\frac{1}{\epsilon}}\right) + \\ 
				k\cdot \left(2 + \frac{2 \cdot g}{g-1} \ln{\frac{1}{\epsilon}}\right).	
			\end{split}
		\end{equation}
		By \Cref{general:lemma1}, 
		\begin{equation}
			\begin{split}
				A(D_k) \leq \max 
				\bigg\{
				OPT(D_k) + 
				\left(
				\frac{\ln{V(D)}}{\ln{g}} + 1
				\right)
				\left(
				1 + 4 \cdot k \cdot g + 2 \cdot k \cdot g \ln{\frac{1}{\epsilon}}
				\right) + 
				\\ 
				k\cdot
				\left(
				2 + \frac{2 \cdot g}{g-1} \ln{\frac{1}{\epsilon}}
				\right), 
				(1+2 \cdot \epsilon)\cdot OPT(D_k) + k  
				\bigg\}
			\end{split}
		\end{equation}
		Choosing $g=2$ and $\epsilon=\frac{1}{V(D)}$ will result in 
		\begin{align}
			A(D_k) &\leq OPT(D_k) + O(k \cdot \log^2 {OPT(D)})
		\end{align}
		number of bins.
        \qed
\end{appendixproof}
	
\begin{toappendix}
	\paragraph{Running time of \Cref{kk:algorithm2}:}
	By \Cref{kk:equation5}, at each iteration of the while loop, $m({U''}_{1,i})$ decreases by a factor of $g \cdot S$. During the $i$th iteration of the while loop, the time taken by fractional bin-packing is at most $T(m({U''}_{k,i}), n({U''}_{k,i}))$. The remaining steps will take at most $O(n(D_k) \cdot \log {n(D_k)})$ time. Therefore, the total time taken by the algorithm is $O\left(T\left(\frac{n(D) \cdot S}{g \cdot S} + \ln {\frac{1}{\epsilon}},n(D_k)\right) + n(D_k) \cdot \log{n(D_k)}\right)$. For the choice $g=2$ and $\epsilon=\frac{1}{V(D)}$ the running time will be $O\left(T\left(\frac{n(D)}{2} + \ln {V(D)},n(D_k)\right) + n(D_k) \cdot \log{n(D_k)}\right) \in O\left(T\left(\frac{n(D)}{2},n(D_k)\right) + n(D_k) \cdot \log{n(D_k)}\right)$.
\end{toappendix}

We give the proof of the above Theorem and the running time of the Karmarkar-Karp Algorithm 2 extension to $k$BP in \Cref{kkalgorithms:algorithm2}.


\section{Egalitarian allocation of watts}
\label{sec: Egalitarian allocation of supply}
So far, our aim was to maximize the connection time during which a  household is connected to electricity. Another possible direction of research in the egalitarian allocation of the electricity distribution problem is to maximize the minimum allocation of electricity in watts.
In terms of allocation of electricity in watts, the utility of an agent $i$ is defined as 
\begin{align*}
    u_i(\mathcal{I}) = \sum_{l : i \in A_l} D[i] \cdot | I_l |
\end{align*}
Then, the optimization objective is
\begin{align*}
    \max_{\mathcal{I}} 
    \min_{i \in [n]} 
    u_i(\mathcal{I})
\end{align*}
where the maximum is over all partitions that satisfy the demand constraints. This max-min value is called the \emph{egalitarian watts allocation} of the given instance.

\subsection{Examples}
\label{chap:equal supply algorithms-sec:impossibility result}
We illustrate the difference between the two variants of the egalitarian allocation problem using several examples. Throughout the examples, the demands are given in kW, and the time-interval for allocation is 1 hour.

\begin{example}
\label{exm:watts-1}
There are two agents, with demands $D[1]=3$ and $D[2]=5$ kW. The supply is $5$ kW. Egalitarian allocation of time (as discussed in the previous section) would connect each agent exactly $1/2$ of an hour; the agents get $3/2$ and $5/2$ kWh respectively. Egalitarian allocation of watts would connect agent 1 for $5/8$ of an hour and agent 2 for $3/8$ of an hour, so that each agent gets $15/8$ kWh.
\end{example}

With egalitarian watts allocation, agents with small demands will necessarily get small allocation of watts, even if they are connected all the time. If we look only on the smallest amount of watt allocation, the problem might become uninteresting.

\begin{example}
\label{exm:watts-2}
There are three agents with demand $D[1] = M, D[2] = M-1, D[3] = 1$ for some large integer $M >1$. The supply is $S = M+1$.
	Agent 3 can get at most $1$ kWh. Hence, an algorithm that only consider the minimum amount can connect agents $\{1,3\}$ for $1/M$ of the time and agents $\{2,3\}$ for $(M-1)/M$ of the time; then agents 1 and 3 get 1 kWh each, and agent 2 gets $(M-1)^2/M$ kWh. This allocation is clearly unfair towards agent 1.

A fairer solution would be to connect 	agents $\{1,3\}$ for a fraction $\frac{M-1}{2M-1}$ of the time, and the agents $\{2,3\}$ for a fraction $\frac{M}{2M-1}$ of the time. Then, agent 3 gets 1 kWh, whereas agents 1 and 2 both get $\frac{M(M-1)}{2M-1}$ kWh.
\end{example}

One way to get to the fairer solution is to use the \emph{leximin criterion}. A simple way to apply this criterion is using the following two-step allocation process:
\begin{enumerate}
	\item Choose an integer $g\geq 0$, and choose a subset $G$ of the $g$ agents with smallest demand; these agents will be connected all the time.
	\item Compute an egalitarian watts allocation for a reduced instance, in which the agents are $D\setminus G$ and the supply is $S - V(G)$ (the original supply minus the total demand of agents in $G$).
\end{enumerate}
Run this procedure for different values of $g$, and pick the solution with the highest \emph{leximin vector}.

In \Cref{exm:watts-1} above, the optimal $g$ is $0$; in \Cref{exm:watts-2}, the optimal $g$ is $1$.

In \Cref{section:existence of finite k for kBP}, we proved that egalitarian time allocation can always be solved using \kbp. Formally, there exists a function $K(n)$ such that, any egalitarian time allocation problem with $n$ agents can be solved by computing an optimal \kbp solution for some $k\leq K(n)$, and connecting each bin an equal amount of time.

This procedure clearly does not yield an egalitarian watts allocation, as agents with a smaller demand should be connected for a longer time. 
One could hope that this approach could be extended by having a different $k$ for different agents. For example, in \Cref{exm:watts-1} we could choose $k_1=5$ and $k_2=3$. The solution to the generalized \kbp instance would have $8$ bins: $5$ bins with agent 1 alone and $3$ bins with agent 2 alone. Connecting each bin for $1/8$ of the time would yield the egalitarian watts allocation.

One could think that setting $k_i$ inversely proportional to $D[i]$, such that $k_i \cdot D[i]$ would be a constant for all agents, would lead to an egalitarian watts allocation. But this is not the case:
\begin{example}
There are three agents with demands $D[1]=1, D[2]=2, D[3]=3$, and $S=5$.
	To make the product $k_i \cdot D[i]$ constant, we need 	$k_1=6, k_2=3, k_3=2$. The optimal bin allocation satisfying these requirements has 6 bins: $\{1,2\},\{1,2\},\{1,2\},\{1,3\},\{1,3\},\{1\}$. Connecting each bin $1/6$ of an hour results in each agent receiving $1$ kWh. 
	
But if we consider only the following $5$ bins:  $\{1,2\},\{1,2\},\{1,2\},\{1,3\},\{1,3\}$, and connect each bin for $1/5$ of the time, agent 1 still gets 1 kWh, whereas agents 2 and 3 get more: each of them gets $6/5$ kWh.
\end{example}
This issue can be solved by the same procedure mentioned after \Cref{exm:watts-2}: taking $g=1$ leads to connecting agent 1 all the time, and solving the remaining instance with $k_2=3$ and $k_3=2$.

Unfortunately, we cannot get a universal bound on the required $k_i$'s (that depends only on $n$).
\begin{example}
\label{exm:watts-3}
There are two agents with demand $D[1]$ and $D[2]$, which are co-prime integers. Suppose $S < D[1] + D[2]$ but $S \geq \max \{D[1], D[2] \}$. Then, the only feasible configurations are $\{D[1]\}$ and $\{D[2]\}$. For an egalitarian allocation, $D[1]$ should be connected $\frac{D[2]}{D[1] + D[2]}$ fraction of the time, and $D[2]$ should be connected $\frac{D[1]}{D[1] + D[2]}$ fraction of the time.
This means that, if each bin is connected the same amount of time, then we should have $D[2]$ bins with configuration $\{D[1]\}$ and $D[1]$ bins with configuration $\{D[2]\}$. So we have $k_1 = D[2]$ and $k_2 = D[1]$. These numbers can be arbitrarily high, even though $n=2$ is fixed.   
\end{example}

Another potential problem with the above approach
arises when 
there are some items with very small item sizes compared to the largest item size. Imagine the situation where the smallest item in the instance is, say, $0.05$ and the maximum item size in the instance is, say, $14$. In this case, there are $280$ copies of this smallest item w.r.t a single copy of the maximum item size. The final derived instance will be a large instance in terms of the number of items. Moreover, in this case, the equal watts allocation is limited by the small item size, i.e., the agent with max demand $14$ will not get electricity more than $0.05$(kW). 
We can determine such items $D[i]$, and $g$ is initialized with the count of such items.

\emph{Assumptions:} --- We assume that $V(D) > S$. Otherwise, the problem is trivial, as we can connect all agents simultaneously, resulting in allocating the watts equal to their demand. \newline
--- When we say ``original'' \kbp it means that we are talking about the \kbp problem where each item appears exactly $k$ times in an instance (as mentioned in 
\Cref{sec:intro subsec:kbp,sec:model subsec:kbp}).

Below, we present a number of heuristic algorithms to tackle the problem of the electricity distribution of watts as equally as possible.

    In all heuristics, we first determine a set of $g$ agents with smallest demands that will be connected all the time. Then, we apply heuristics to the remaining supply and the remaining agents.
    
    We denote $d_{\max} := \max(D) = \text{ the largest demand in } D$. Here, we use the term ``item'' and ``demand'' interchangeably.
    
    \subsection{Heuristic Algorithm 1 (HA1)}
    \label{subsec: heuristic algorithm 1}

        Let $D^{\uparrow}$ be the instance $D$ sorted in non-decreasing order. Let $d_l$ be the maximum demand in $D^{\uparrow}$ such that the sum of all the item sizes $\leq d_l$ is at most $S - d_{\max}$ i.e. they all can be packed in a bin whose capacity is  $S - d_{\max}$, and adding the next item larger than $d_l$ violates the bin capacity (with bin size $S - d_{\max})$ constraint. 
    Clearly, such an item exists; otherwise, all the items can be packed into a single bin and connected all the time.

    \begin{example} \label{example: example1 for HA1}
        Let $D = \{0.2, 0.22, 0.4, 0.42, 0.8, 0.82, 1.7, 1.7, 3, 3.2, 6.5, 6.7, 14, 14.2\}$ and supply is $S=21$.
        Then, $d_{\max} = 14.2$. $D^{\uparrow}$ will be $D^{\uparrow} = \{ 0.2, 0.22, 0.4, 0.42, 0.8, 0.82, 1.7, 1.7, 3, 3.2, 6.5, 6.7, 14, 14.2 \}$. Note that $S - d_{\max} = 21 - 14.2 = 6.8$.  $d_l$ will be $1.7$ because the sum of all the items $\leq d_l = 1.7$ (that is the items in the set $\{0.2, 0.22, 0.4, 0.42, 0.8, 0.82, 1.7, 1.7 \}$) will be $6.26$ that is $\leq S - d_{\max} = 6.8$. Clearly, the next item larger than $d_l$ cannot be added to it, as then their sum will be $9.26$ that is, $> S - d_{\max} = 6.8$.
    \end{example}

    Let $G \subseteq D$ be some subset of items
    such that all the items in $G$ have sizes $\leq d_l$. The updated set of remaining agents is $D' = D\setminus G$, and the updated supply is $S' = S - V(G)$. We determine $G$ as follows:
    \begin{itemize}
        

        \item For some natural $g$, $G$ contains the first $g$ items in the set $D^{\uparrow}$.
        In \Cref{example: example1 for HA1}, there are no smallest items for which the number of copies is too large w.r.t a single copy of the maximum item size. Hence, initially $g$ will be $0$.
        such items. 
        Now, there are $9$ possible values of $g$ viz, $0,1,2,3,4,5,6,7,8$.
        As an example, for $g=2$, the set $G$ will be $G = \{0.2, 0.22\}$.
    \end{itemize}

    Our heuristic algorithm 1 connects agents in $G$ all the time. Hence, these agents get their required demand.
    We derive an instance $D''$ as follows:
    \begin{itemize}
	   \item[--] for each item $D'[i]$ in $D'$ we compute the number of copies that needs to be created for (nearly) equal allocation of watts as:
        $\frac{k \cdot d_{\max}}{{D'}[i]}$  rounded to the nearest integer, where $k$ denotes the number of copies of $d_{\max}$.
	   \item[--] Now,
	   \begin{enumerate}
		\item for each item 
        $D'[i]$ in $D'$, if the number of copies of demand ${D'}[i]$ are non-zero then take a copy of it and insert it into the instance $D''$. Reduce the number of copies of demand $D'[i]$ by $1$. 
		\item repeat the above procedure until the number of copies of each item becomes $0$. 
	   \end{enumerate}
    \end{itemize}

    Now, we can apply $FFk$ or $FFDk$ to the derived instance $D''$. Let $B$ be the resulting solution.
    Solution $B$ is
    \emph{incomplete} in the sense that they do not contain items in $G$.
    To each bin $B_i \in B$, we add the items in $G$, these are the agents that are connected all the time.
    Doing so will \emph{complete} the solution $B$.
    After that, compute the watts allocation vector for this computed \emph{complete} solution $B$.
    Different such solutions can be computed for all possible values of $g$. Among these solutions, the one that is leximin-preferred over other solutions (this is done by comparing the watts allocation vector) is chosen as the best solution. Ties are broken arbitrarily.

    The above approach can be time-consuming as there can be different possible values of $g$ for each instance.
    To run it faster, we use the ternary search
    \footnote{We are thankful for the following stackoverflow answer \url{https://stackoverflow.com/a/40695871}.}
    to determine the possible solution that is leximin-preferred over other computed solutions.

    The rationale behind using the ternary search is as follows: if $G$ contains too few demands then the watts allocation may be low as the remaining agents may contain small demands;
    if $G$ contains too many agents then again the watts allocation for the remaining agents may be low because the remaining capacity may not be sufficient to pack as many demands as possible from the remaining set of demands.
    Therefore, the peak lies somewhere in between them.
    Examples to support this claim have been given in \Cref{Example support for the applicability of ternary search}.

    \begin{toappendix} 
        \section{Example support for the applicability of ternary search}
        \label{Example support for the applicability of ternary search}
        To demonstrate this we have taken some examples from the dataset (same dataset used by \cite{oluwasuji_algorithms_2018,oluwasuji_solving_2020}) used in this paper. Each of these examples has aggregate demand greater than the supply. It makes sense as otherwise every agent gets their stipulated demand. \Cref{figure:ternary-search-example-parta} and \Cref{figure:ternary-search-example-partb} shows that peak for egalitarian allocation of electricity supply lies between the small and large values of $G$. This suggests the use of ternary search.
        \begin{figure}[htbp]
            \centering
            \begin{subfigure}[b]{\textwidth}
                \centering
                \includegraphics[width=0.82\linewidth, height=0.82\textheight, keepaspectratio]{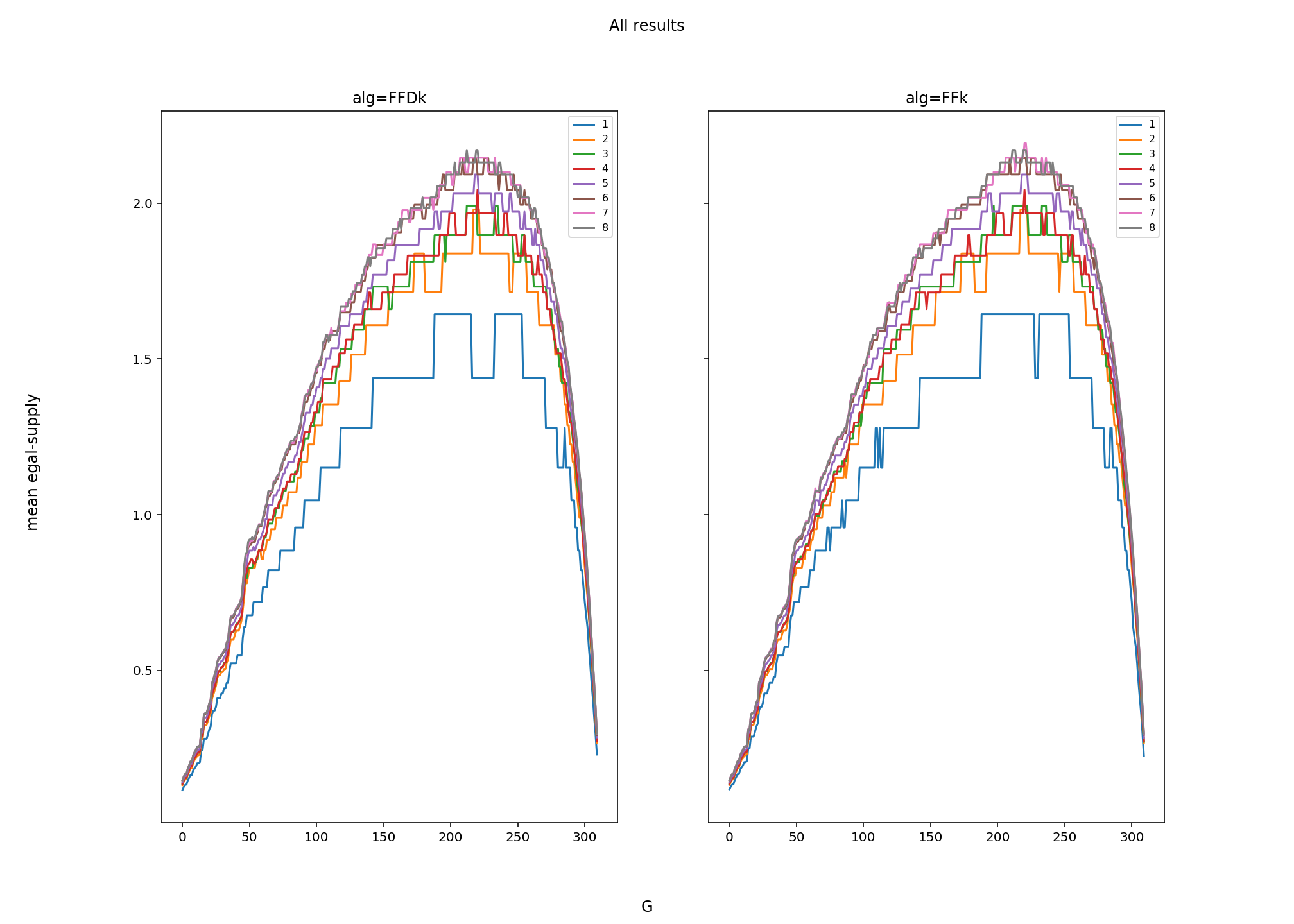}
                \caption{}
                \label{figure:example1-ternarysearch}
            \end{subfigure}
            \hfill
            \begin{subfigure}[b]{\textwidth}
                \centering
                \includegraphics[width=0.82\linewidth, height=0.82\textheight, keepaspectratio]{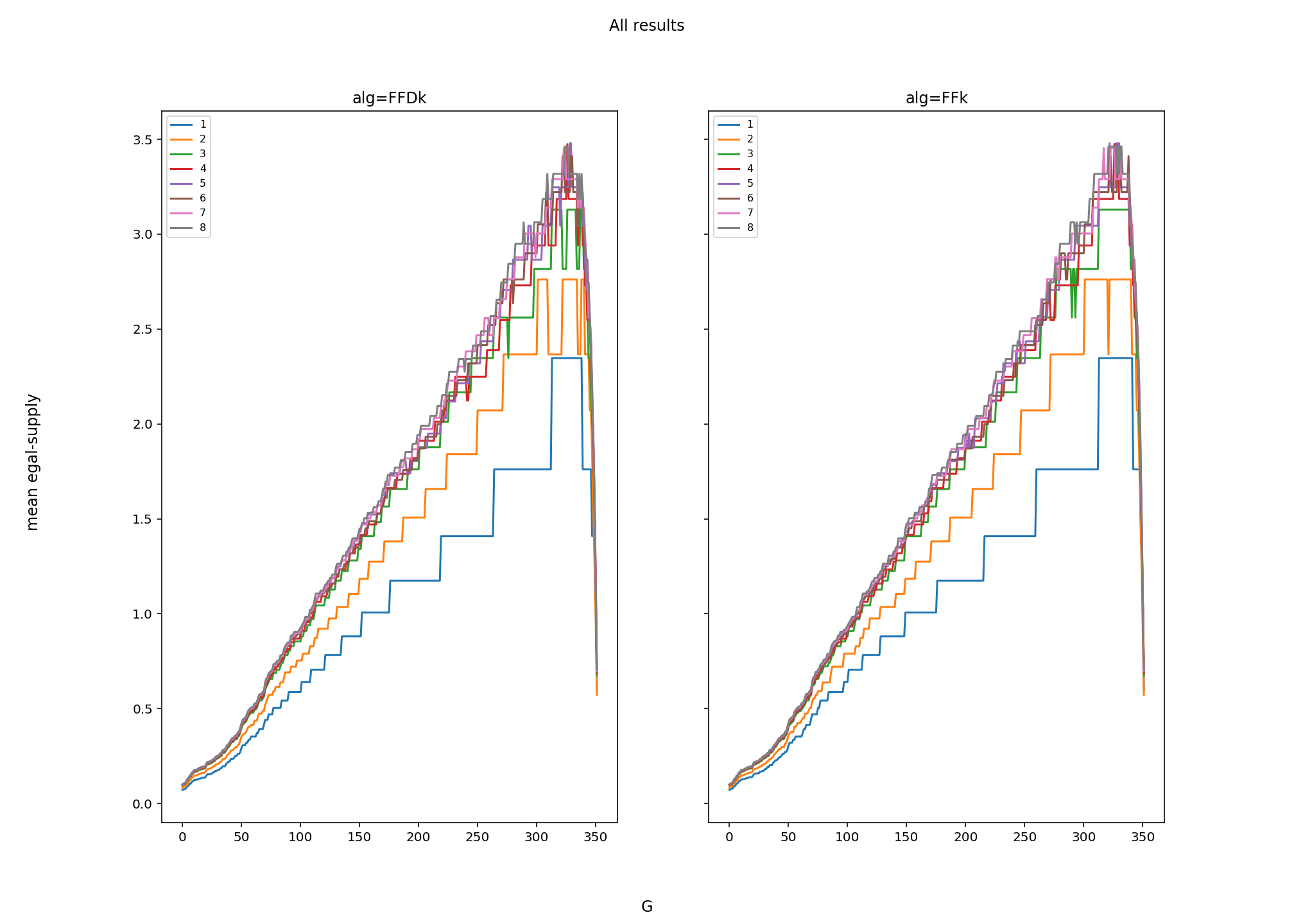}
                \caption{}
                \label{figure:example2-ternarysearch}
            \end{subfigure}
            \caption{Values of $G$ and mean egalitarian value of electricity supply are shown at the $x$ and $y-$ axis respectively. \Cref{figure:example1-ternarysearch} and \Cref{figure:example2-ternarysearch} corresponds to different examples. In each figure we see that the peak lies in between the smallest and highest value of $G$. Hence, ternary search can be used. Different lines denote a different value of $k$.}
            \label{figure:ternary-search-example-parta}
        \end{figure}
    
        \begin{figure}[htbp]
            \begin{subfigure}[b]{\textwidth}
                \centering
                \includegraphics[height=0.82\textheight, width=0.82\linewidth, keepaspectratio]{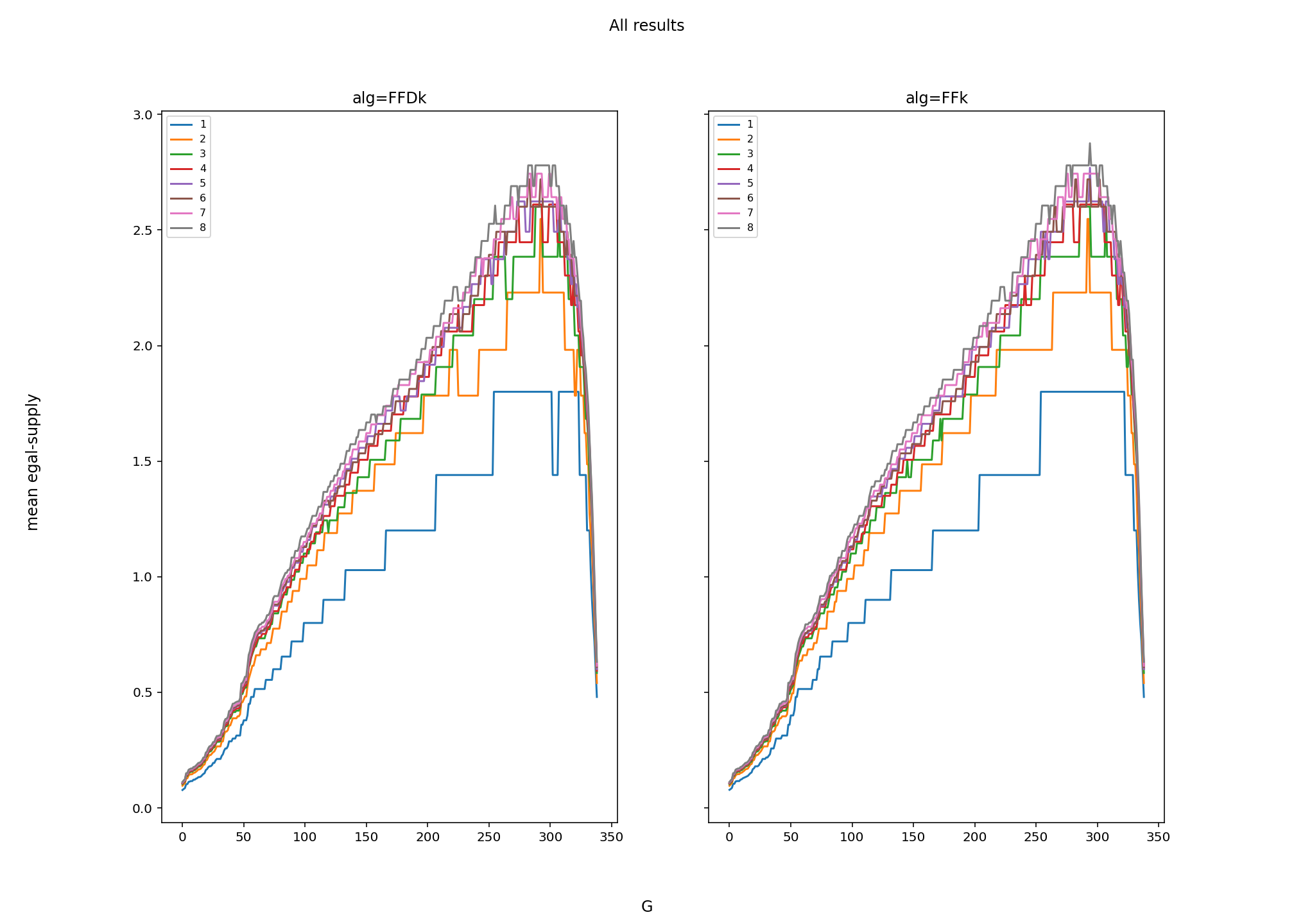}
                \caption{}
                \label{figure:example3-ternarysearch}
            \end{subfigure}
            \hfill
            \begin{subfigure}[b]{\textwidth}
                \centering
                \includegraphics[height=0.82\textheight, width=0.82\linewidth, keepaspectratio]{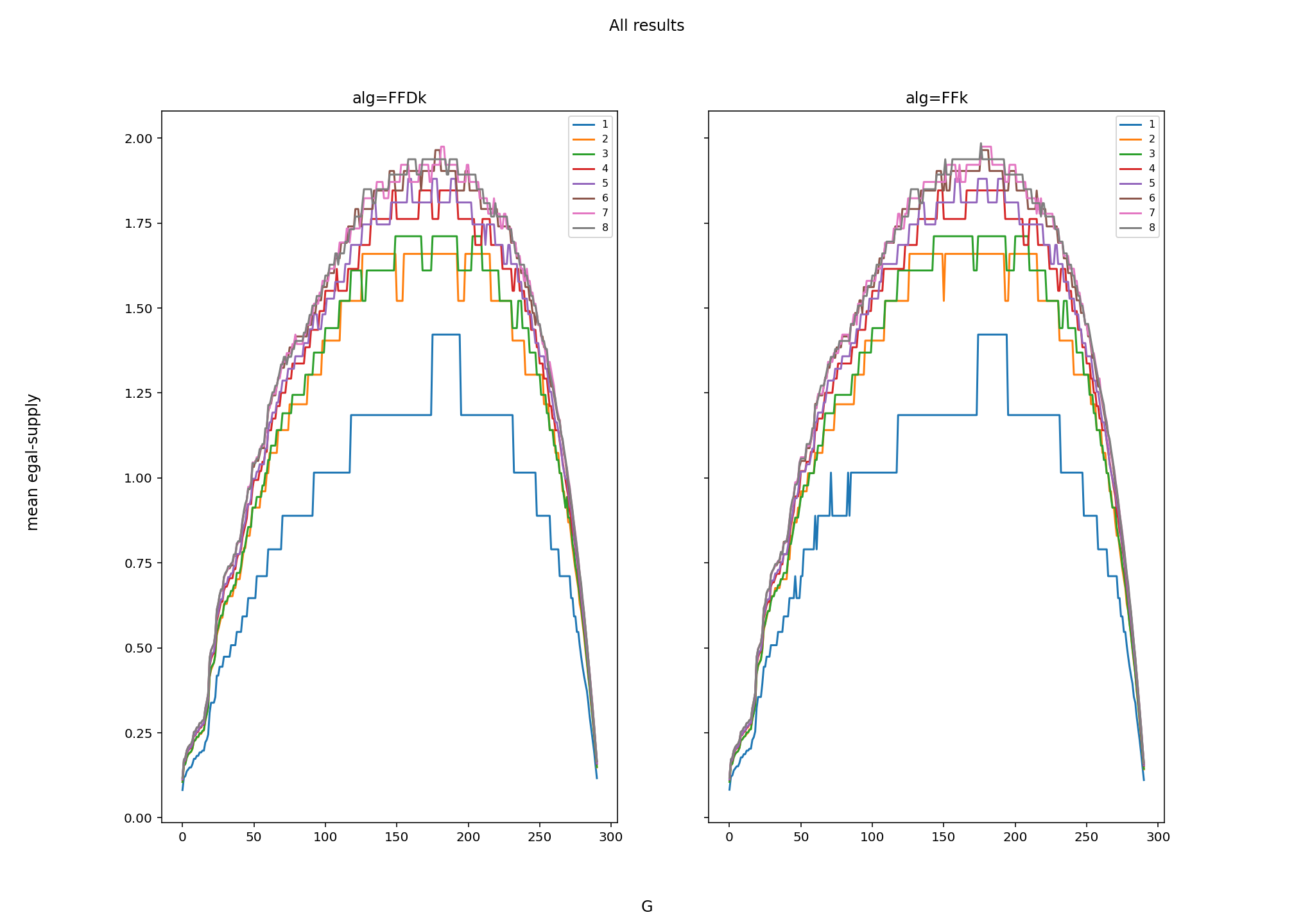}
                \caption{}
                \label{figure:example4-ternarysearch}
            \end{subfigure}
            \caption{Values of $G$ and mean egalitarian value of electricity supply are shown at the $x$ and $y-$ axis respectively. \Cref{figure:example3-ternarysearch} and \Cref{figure:example4-ternarysearch} corresponds to different examples. In each figure we see that the peak lies in between the smallest and highest value of $G$. Hence, ternary search can be used. Different lines denote a different value of $k$.}
            \label{figure:ternary-search-example-partb}
        \end{figure}
    
        The ternary approach is as follows:
    
        \begin{algorithm}[H]
        	\label{ternary search}
        	\DontPrintSemicolon
        	\KwIn{
        		$P$: function which computes the solution; \\
        		$g_{begin}$:= an integer parameter. Initially set to $0$; \\
        		$g_{end}$:= an integer parameter. Initially set to the cardinilaity of the set that contains all the items $\leq d_l$ \;
        	}
        	\SetKw{kwto}{to}
        	
        	\tcc{NOTE: $\ldots$ in $P$ denotes other parameters that function $P$ accepts.}
        	$\mathtt{bestsolution} = None$ \;
        	
        	\While{$g_{end} - g_{begin} > 3$}{
        		Compute $\mathtt{solutionbegin} = P(\ldots, g_{begin})$ \;
        		Compute $\mathtt{solutionend} = P(\ldots, g_{end})$ \;
        		Set $\mathtt{bestsolution}$ to the one which is best among $\mathtt{bestsolution}$, $\mathtt{solutionbegin}$ and $\mathtt{solutionend}$. \;
        		$g_{begin}:= round((g_{begin} \cdot 2 + g_{end})/3,0)$ \;
        		$g_{end}:= round((g_{begin} + g_{end} \cdot 2)/3,0)$ \;
        	}
        	
        	\For{$g = g_{begin}$ \kwto $g = g_{end}$}{
        		Compute $\mathtt{solutiong} = P(\ldots, g)$ \;
        		Compare $\mathtt{bestsolution}$ and $\mathtt{solutiong}$ and set $\mathtt{bestsolution}$ accordingly. \;
        	}
        	\Return $\mathtt{bestsolution}$ \;
        	\caption{Ternary Search}
        \end{algorithm}
    
        Function $P$ involves applying algorithm $FFk$ or $FFDk$ (passed as an argument to $P$) on the instance ${D''}$. Finally, after adding the items in $G$ to each bin of the solution obtained, the function $P$ returns this solution.
    
    \end{toappendix}

    \begin{example} \label{ha1:example2}
        First, we consider \Cref{exm:watts-2}
        In this example there are three agents with demands $M, M-1, 1$ respectively, and the supply is $M+1$.
        There are two possibilities for $G$: for $g=0$ we get $G=\{\}$, and for $g=1$ we get $G=\{1\}$.
In both of the cases for $g$ there is enough remaining space to pack item $d_{\max} = M$. However, for $g=2$ we get $G = \{1, M-1\}$ and there is not enough remaining space to pack item $d_{\max} = M$.

        -- When $G = \{\}$,  HA1 will have $\frac{k \cdot M}{1}$ copies of item $1$, and $\frac{k \cdot M}{M-1}$ (rounded to $0$ decimal points)copies of item $M-1$ where $k$ is the number of copies of item $M$. Then, the rest of the algorithm computes the egalitarian allocation of watts.
        
        To determine the final allocation let's assume that $\frac{k \cdot M}{M-1}$ is an integer. Since, $k \cdot M > \frac{k \cdot M}{M-1} + k = \frac{k \cdot (2M-1)}{M-1}$ for large $M >1$ and also, the item $M-1$ and $M$ can not be packed into the same bin, we can say that the number of bins that contain item $1$ are sufficient to pack the items $M-1$ and $M$. In fact, to pack these items we require no more than $k\cdot M$ bins. Therefore, 
        \begin{itemize}
        \item Item $1$ will get $1 \cdot \frac{k\cdot M}{k \cdot M} = 1$kW, and
        \item item $M-1$ will get $(M-1) \cdot \frac{\frac{k \cdot M}{M-1}}{k \cdot M} = 1$kW, and 
        \item item $M$ will get $M \cdot \frac{k}{k \cdot M} = 1$kW.
        \end{itemize}
        The allocation vector of watts in this case is $(1,1,1)$.

        -- When $G = \{1\}$,  HA1 connects the agent $1$ all the time. For the remaining agents $\{M-1, M\}$, HA1 will have  $\frac{k \cdot M}{M-1}$ copies of item $M-1$ where $k$ is the number of copies of item $M$. 
        Note that the items $M-1$ and $M$ can not fit into a single bin. We require $\frac{k \cdot M}{M-1} + k = \frac{k \cdot (2M-1)}{M-1}$ bins to pack the item $M-1$ and $M$, and each of these bins can accommodate item $1$. Therefore,
        \begin{itemize}
        \item Item $1$ will get $1 \cdot \frac{\frac{k \cdot (2M-1)}{M-1}}{\frac{k \cdot (2M-1)}{M-1}} = 1$kW of electricity, and
        \item item $M-1$ will get
        $(M-1) \cdot \frac{\frac{k \cdot M}{M-1}}{\frac{k \cdot (2M-1)}{M-1}} = \frac{(M-1) \cdot M}{2M-1}>1$kW of electricity, and
        \item item $M$ will get $M \cdot \frac{k}{\frac{k \cdot (2M-1)}{M-1}} = \frac{M \cdot (M-1)}{2M-1} > 1$kW of electricity.
        \end{itemize}
        The allocation vector of watts in this case is $1, \frac{M \cdot (M-1)}{2M-1}, \frac{M \cdot (M-1)}{2M-1}$ which is better than the allocation vector $(1,1,1)$. HA1 outputs this leximin preferred solution.
        

        We can understand above description with the help of an example. Let's take $M=5$. The three agents with demand $M, M-1, 1$ are $5,4,1$ respectively, and the supply is $S = M+1  = 6$. Now, the feasible values for $g$ are $0,1$ (Note that we have already argued why $g=2$ is not feasible). \\
        -- For $g=0$ we get $G = \{\}$. Assume that $k=4$, so there are $4$ copies of the item $d_{\max} = 5$. In this case HA1 will have $\frac{4 \cdot 5=10}{1} = 20$ copies of item $1$, and $\frac{4 \cdot 5 = 10}{4} = 5 \text{ (after rounding to $0$ decimal points)}$ copies of item $4$.
        Now the derived instance $D''$ is 
        $\{5,4,1,5,4,1,5,4,1,5,4,1,4,1,1,1,1,1,1,1,1,1,1,1,1,1,1,1,1\}$ 
        and the updated supply is $S' = S - V(G) = 6 - 0 = 6$.
        Applying $FFk$ to instance $D''$ results in the following packing:
        \begin{align*}
            \{5,1\}, \{4,1\}, \{5,1\}, \{4,1\}, \{5,1\}, \{4,1\}, \{5,1\}, \{4,1\}, \\
            \{4,1\}, \{1\}, \{1\}, \{1\}, \{1\}, \{1\}, \{1\}, \{1\}, \{1\}, \{1\}, \{1\}, \{1\}   
        \end{align*}
        To each bin in this solution we add all the items in $G$. Here $G$ is empty so the final solution is same as computed above. 
        The final allocation of supply in this will be $\frac{4}{20}\cdot 5 = 1$kW for agent with demand $5$kW and $\frac{5}{20}\cdot 4 = 1$kW for agent with demand $4$kW and $1$kW for agent with demand $1$kW. 
        Final allocation vector in this case is $(1,1,1)$ \\
        -- For $g=1$ we get $G = \{1\}$. Assume that $k=4$, so there are $4$ copies of the item $d_{\max} = 5$. In this case HA1 will have $\frac{4 \cdot 5}{4} = 5 \text{ (after rounding to 0 decimal points)}$ copies of item $4$.
        Now the derived instance $D''$ is 
        $\{5,4,5,4,5,4,5,4,4\}$
        and the updated supply is $S' = S - V(G) = 6 - 1 = 5$.
        Applying $FFk$ to instance $D''$ results in the following packing:
        $\{5\}, \{4\}, \{5\}, \{4\}, \{5\}, \{4\}, \{5\}, \{4\}, \{4\}$.
        To each bin in this solution we add all the items in $G$. After adding, the final solution is
        $\{5,1\}, \{4,1\}, \{5,1\}, \{4,1\}, \{5,1\}, \{4,1\}, \{5,1\}, \{4,1\}, \{4,1\}$. 
        The final allocation of supply in this will be $\frac{4}{9}\cdot 5 =\approx 2.2$kW for agent with demand $5$kW and $\frac{5}{9}\cdot 4 \approx 2.2$kW for agent with demand $4$kW and $1$kW for agent with demand $1$kW.
        Final allocation vector in this case will be 
        $(1,2.2,2.2)$. \\
        -- Clearly, allocation $(1,2.2,2.2)$ is preferred over allocation $(1,1,1)$.
    \end{example}

    \begin{example} \label{ha1:example1}
        $D = \{0.2,0.22, 0.4,0.42, 0.8,0.82, 1.7,1.7, 3,3.2, 6.5,6.7, 14,14.2\}, S = 21$.
        
        -- The feasible values for $g$ in this case are: $0,1,2,3,4,5,6,7,8$ ( $9$ is not a feasible value for $g$. See explanation of the HA1 algorithm). 

        -- For $g=0$ we get $G = \{\}$. Assume that $k=3$ so there are $3$ copies of the item $d_\max = 14.2$.
        In this case HA1 will have $\frac{3 \cdot 14.2 = 42.6}{0.2} = 213$ copies of the item with size $0.2$. Similarly, the number of copies of other items can be computed after rounding to $0$ decimal points. 
        The below table shows the item along with their number of copies:
        \begin{center}
            \begin{tabular}{|c|c|c|c|c|c|c|c|c|c|c|c|c|c|c|}
                \hline
                 item & 0.2 & 0.22 & 0.4 & 0.42 & 0.8 & 0.82 & 1.7 & 1.7 & 3 & 3.2 & 6.5 & 6.7 & 14 & 14.2  \\ \hline
                 \#copies & 213 & 194 & 106 & 101 & 53 & 52 & 25 & 25 & 14 & 13 & 7 & 6 & 3 & 3 \\
                 \hline
            \end{tabular}
        \end{center}
        HA1 creates the instance $D''$ as follows:
        \begin{itemize}
            \item[*] first it picks the first item in $D'$ and checks if the number of copies are non-zero. Then picks up this item and inserts it into $D''$. Then it picks the second item in $D$ and does the same. Following this, $D''$ contains
            the $3$ copies of the instance $\{0.2, 0.22, 0.4,0.42, 0.8,0.82, 1.7,1.7, 3,3.2, 6.5,6.7, 14,14.2\}$,
            \item[*] After doing this the updated list of items and their corresponding number of copies are:
            \begin{center}
            \begin{tabular}{|c|c|c|c|c|c|c|c|c|c|c|c|c|c|c|}
                \hline
                 item & 0.2 & 0.22 & 0.4 & 0.42 & 0.8 & 0.82 & 1.7 & 1.7 & 3 & 3.2 & 6.5 & 6.7 & 14 & 14.2  \\ \hline
                 \#copies & 210 & 191 & 103 & 98 & 50 & 49 & 22 & 22 & 11 & 10 & 4 & 3 & 0 & 0 \\
                 \hline
            \end{tabular}
            \end{center}
            Note that the item $14$ and $14.2$ do not have any copies left. Upon repeating the same procedure as explained in above step, the instance $D''$ will contain $3$ copies of the following instance:
            $\{0.2, 0.22, 0.4,0.42, 0.8,0.82, 1.7,1.7, 3,3.2, 6.5,6.7\}$.
            \item[*] HA1 repeats the above procedure until the number of copies of each item becomes $0$. After that it is not possible to make any addition to the instance $D''$ since we are left with no items.
        \end{itemize}

        Now, HA1 applies \kbp to instance $D''$. To each bin in the resulting solution HA1 adds all the items in $G$ to get the final solution. The minimum supply delivered in this case is $0.18613$kW.

        -- For $g=8$ we get $G=\{0.2,0.22, 0.4,0.42, 0.8,0.82, 1.7,1.7\}$. These are the set of items that are connected all the time. The remaining set of items is $D' = \{3,3.2,6.5,6.7,14,14.2\}$.
        There are $k=3$ copies of item $d_\max = 14.2$. 
        HA1 will have $\frac{14.2 \cdot 3 = 42.6}{3} \simeq 14 $ copies of item $3$. Similarly, HA1 computes the number of copies of other items in the remaining set of items. The below table shows the item along with their number of copies:
        \begin{center}
            \begin{tabular}{|c|c|c|c|c|c|c|}
                \hline
                \text{item} & 3 & 3.2 & 6.5 & 6.7 & 14 & 14.2 \\ \hline
                 \text{\#copies} & 14 & 13 & 7 & 6 & 3 & 3 \\ \hline 
            \end{tabular}
        \end{center}
        HA1 creates the instance $D''$ as follows:
        \begin{itemize}
            \item[*] first it picks the first item in $D'$ whose remaining number of copies are non-zero. It then picks up this item and inserts it into $D''$. In doing so HA1 reduces the number of copies of this item by $1$. Then, it picks the second item in $D'$ with non-zero number of copies and inserts it at the end of $D''$. Again, HA1 reduces by $1$ the numbere of copies of this item. Following this procedure $D''$ contains the consecutive $3$ copies of the following instance:
            $\{3, 3.2, 6.5, 6.7, 14, 14.2\}$. 
            The updated table showing the items and their corresponding number of copies is:
            \begin{center}
                \begin{tabular}{|c|c|c|c|c|c|c|}
                    \hline
                    \text{item} & 3 & 3.2 & 6.5 & 6.7 & 14 & 14.2 \\ \hline
                    \text{\#copies} & 11 & 10 & 4 & 3 & 0 & 0 \\ \hline
                \end{tabular}
            \end{center}

            \item[*] HA1 repeats the above procedure until the remaining number of copies of each item becomes $0$. After which no further insertion of items is possible.
        \end{itemize}
        Now, HA1 applies \kbp to this created instance $D''$, and to each bin in the resulting solution HA1 adds all the items in $G$. The minimum supply delivered in this case is $1.74783$kW. 

        -- HA1 computes different solutions in this way for different values of $g$ (HA1 uses ternary search Algorithm \ref{ternary search} to select only a few values of $g$ for which the solution is to be computed). Finally, it outputs a solution that is leximin preferred over other solutions. In this example, HA1 outputs the solution corresponding to $g=8$ in which items in set $\{0.2,0.22, 0.4,0.42, 0.8,0.82, 1.7,1.7\}$ are connected all the time.

    \end{example}

    \subsection{Heuristic Algorithm 2 (HA2)}
    \label{heuristic algorithm 2}
    Our next heuristic algorithm uses the geometric grouping technique of \cite{karmarkar-efficient-1982}. The idea behind this heuristic is as follows:
    
    First, we determine the set of small items using the same technique described in \Cref{subsec: heuristic algorithm 1} 
    ($G$ is initialized to this set of items; For more details, see \Cref{subsec: heuristic algorithm 1}).
    Let $D'$ be the set of remaining demands and $S'$ be the updated supply.
    We pack the items in $D'$ using the approach described below, and to each bin in this solution, we add the items in $G$.
    
    Recall that the geometric grouping works as follows:
    Let $D'$ be some instance to pack the items. Let $d_{\min}$ and $d_{\max}$ be the minimum and maximum item in $D'$. We define 
    $r:= \lfloor \log_2{\frac{d_{\max}}{d_{\min}}} \rfloor$. 
    Now, we divide the instance into groups $L_i$ such that each group $L_i$ contains all the items whose sizes lie in the interval 
    $(d_{\max} \cdot 2^{-(i+1)}, d_{\max} \cdot 2^{-i}]$ for $i = 0, \ldots, r$. 
    Let $L$ be the set of all groups formed using geometric grouping.
    Now, depending on the number of groups formed using geometric grouping, we do the following computation:
    \begin{itemize}
    	\item[--] If there is only one group $L_1 = D'$, then we compute the ``original'' \kbp solution for this group for some value of $k$. 
        The watts allocation for each agent in $L_1$ then is $\frac{k}{B(L_1)}$ where $B(L_1)$ is the number of bins in the ``original'' \kbp solution to $L_1$.
    	
    	\item[--] \textbf{Warm-up case:} Let us assume that there are only two groups $L_0,L_1$. 
    	Note that, 
    	$ L_1[i] \in L_1 \geq \frac{d_{\max}}{2^2} = \frac{1}{2} \cdot \frac{d_{\max}}{2^1}$,
    	and 
    	$L_0[j] \in L_0 \geq \frac{d_{\max}}{2^1}$
    	Therefore, connecting the agents in group $L_{1}$ twice the time allocated to group $L_0$ will result in a near equal allocation of supply to the agents in group $L_{1}, L_0$.
    	
    	\textbf{General case: }Let $\#L$ be the number of groups resulting from geometric grouping. For each group $L_i \in L$ we compute the ``original'' \kbp solution and let $B_i$ denote both the number of bins and also the set of bins in the ``original'' \kbp solution to group $L_i$. To compute the equal watts allocation we generalize the idea as mentioned in the warm-up case to compute the time that each bin is connected to so as to result in the equal watts allocation.
    	
    	Let us denote by $T_{L_i}$ as the final allocation of time to each bin in group $L_i$. Then,
    	\begin{equation} \label{ha2:eqn1} 
    		\sum_{i=0}^{\#L - 1} T_{L_i} \cdot B_i = 1
    	\end{equation}
    	Let us denote by $\tau_{L_i}$ the final time allocated to each agent in the group $L_i$. It may happen that $k$ is different for different groups. Let $k_{L_i}$ be the $k$ for group $L_i$. Since there are $k_{L_i}$ copies of each agent, therefore the total time allocated to each agent in group $L_i$ is
    	\begin{equation} \label{ha2:eqn2}
    		\tau_{L_i} = T_{L_i} \cdot k_{L_i}
    	\end{equation}
    	for group $L_{i}$ we compute the time allocated to each agent in $L_{i}$ as
    	\begin{equation} \label{ha2:eqn3}
    		\tau_{L_{i}} = 2 \cdot \tau_{L_{i-1}}
    	\end{equation}
    	substituting the value of $\tau_{L_i}$ and $\tau_{L_{i-1}}$ from \Cref{ha2:eqn2} in \Cref{ha2:eqn3} we get that
    	\begin{align} \label{ha2:eqn4}
    		T_{L_i} \cdot k_{L_i} = 2 \cdot T_{L_{i-1}} \cdot k_{L_{i-1}} \nonumber \\
    		T_{L_i} = 2 \cdot T_{L_{i-1}} \cdot \frac{k_{L_{i-1}}}{k_{L_i}}
    	\end{align}
    	
    	from \Cref{ha2:eqn1} and \Cref{ha2:eqn4} we get that
    	\begin{align} \label{ha2:eqn5}
    		T_{L_0} = \frac{1}{\sum\limits_{i=0}^{\#L - 1}{B_i \cdot 2^i \cdot \frac{k_{L_0}}{k_{L_i}}}}
    	\end{align}
    	
    	From this, we can compute the time for each bin in other groups using \Cref{ha2:eqn4} and hence the time for each agent using \Cref{ha2:eqn2}.
    \end{itemize}
    
    Consider the following example:
    \begin{example} \label{ha2:example-1}
    	D = \{0.2, 0.4, 0.8, 1.7, 3, 6.5, 14\}, S = 21
    \end{example}
    In the above example, applying geometric grouping will result in $7$ groups $L_0= \{14\}, L_1= \{6.5\}, L_2= \{3\}, L_3= \{1.7\}, L_4= \{0.8\}, L_5= \{0.4\}, L_6= \{0.2\}$, one item in each group. Following the above procedure for computing the resulting allocation of time will result in nearly $\sim 0.1$kW of supply to each agent, which is very small.
    
    To alleviate the problem highlighted by the above example, we use the same approach of 
    \Cref{subsec: heuristic algorithm 1}
    in which items are grouped together to connect all the time ($G$ is augmented with this set), and for the remaining items, we compute the equal allocation of supply.
    Here, we determine the different groups that can be connected all the time 
    and for the remaining groups we compute the (possible) equal allocation of watts. Following this, \Cref{ha2:example-1} results in always connecting the groups
    $L_2, L_3, L_4, L_5, L_6$, and agents in the group $L_1, L_0$ will get the supply $4.33$ kW and $4.67$ kW, respectively.
    
    This allocation of watts to agents in the group $L_0, L_1$ is as follows:
    
    -- The remaining space in bin to pack the items in group $L_0, L_1$ is $14.9$. 
    
    -- For each group, only one bin is needed to pack the items. 
    
    -- Assume that $k$ is the same for each bin. Using \Cref{ha2:eqn5} and \Cref{ha2:eqn2} we can determine that the time allocated to each item in $L_0$ is $1/3$. 
    
    -- Similarly, using \Cref{ha2:eqn4} and \Cref{ha2:eqn2}, we can determine that the time allocated to each agent in $L_1$ is $2/3$. 
    
    -- Therefore, the agent in group $L_0$ will get $14 \cdot \frac{1}{3} \sim 4.67$kW, and the agent in group $L_1$ will get $6.5 \cdot \frac{2}{3} \sim 4.33$kW of electricity.
    
    In a realistic scenario, it is highly unlikely that each group resulting from the geometric grouping contains a single item as the demand of an agent is very small compared to the supply. 
    There will be less different possible values of $g$ (to determine the groups that are connected all the time).
    Hence, keeping this in mind, we are not using the ternary search here. 
    
    \begin{example} \label{ha2:example2}
        First, we consider \Cref{exm:watts-2}.
        To repeat, in that example there are three agents with demand $1, M-1, M$ and supply $M+1$. 
        The geometric grouping of HA2 forms two groups: $L_0 = \{M-1,M\}$, and $L_l = \{1\}$ where $l$ is such that agent with demand $1$ lies in $(M \cdot 2^{-(l+1)}, M \cdot 2^{-l}]$.
        It then applies ``original'' \kbp solution to pack the items in each of the groups.
        
        Note that two bins are needed to pack the two items in group $L_0$. Also, it holds that $2^l \leq M < 2^{l+1}$.
        
        -- Now, there are two possibilities, either no group is connected all the time, or group $L_l$ is connected all the time.\\
        -- In the first case, HA2 computes the final time allocated to each bin in the packing of items in $L_0$ using \Cref{ha2:eqn5} (from that we can easily determine the time for each agent using \Cref{ha2:eqn2}).

        For the sake of simplicity, we assume that $k=1$.
        In the first case, the computation is as follows: using \Cref{ha2:eqn5}, \Cref{ha2:eqn4} and \Cref{ha2:eqn2}, the time allocated to item $M, M-1$ and $1$ is $\frac{1}{2 + 2^l}, \frac{1}{2 + 2^l}$ and $\frac{2^l}{2 + 2^l}$ respectively. Agents $M, M-1, 1$ are allocated $\frac{M}{2 + 2^l}, \frac{M-1}{2 + 2^l}$ and $\frac{2^l}{2 + 2^l}$kW of electricity respectively. The minimum allocation of watts in this case will always be $\frac{2^l}{2 + 2^l} < 1$.
        
        -- In the second case, HA2 determines only the time allocated to each bin in group $L_0$ (from that we can easily determine the time for each agent using \Cref{ha2:eqn2}). Finally, to each bin in $B_0$, HA2 adds the agents in $L_l$.
        
        Using \Cref{ha2:eqn5} and \Cref{ha2:eqn2}, the time allocated to each agent in $\{M, M-1\}$ is $\frac{1}{2}$. Therefore, the watts allocation for agents $M, M-1, 1$ are $\frac{M}{2}, \frac{M-1}{2}, 1$ respectively. It is easy to observe that the minimum allocation of watts in this case is $1$.
        
        -- Finally, from among the above two solutions, 
        one can observe that the allocation of watts in the second case is always leximin-preferred over the allocation of watts in the first case.
    \end{example}
    
    \begin{example} \label{ha2:example1}
        Applying geometric grouping on the same \Cref{ha1:example1} results in $7$ groups $L_0= \{14, 14.2\}, L_1= \{6.5, 6.7\}, L_2= \{3, 3.2\}, L_3= \{1.7, 1.7\}, L_4= \{0.8, 0.82\}, L_5= \{0.4, 0.42\}, L_6= \{0.2, 0.22\}$, two items in each group.
        The groups $L_6, L_5, L_4, L_3$ are connected all the time. 
        
        The remaining bin capacity is $14.74$. Since \kbp is applied to each group with this remaining bin capacity, one can see that the items in $L_1, L_2$ can be packed in a single bin whereas packing the items in $L_0$ requires two bins.
        The time allocated to 
        each bin in group $L_0$ is $0.125$, and the time allocated to group $L_1$ is $0.25$, and the time allocated to group $L_2$ is $0.5$.
        Therefore, the resulting egalitarian allocation is $1.5$ kW.
    \end{example}

    \subsection{Heuristic Algorithm 3 (HA3)}
    \label{heuristic algorithm 3}
   Our next heuristic uses the alternative geometric grouping as in \cite{karmarkar-efficient-1982}. 
    Among all the groups $V_{\max}$ denotes the maximum group size. The idea behind the alternative geometric grouping adopted for our algorithm is as follows:

    First, the input instance $D$ is sorted in a non-increasing order of item sizes. Then, starting from the largest item, items are grouped such that the sum of the item sizes in the group is sufficient to equal or exceed $\mathfrak{u} \cdot d_{\max}$, for some positive $\mathfrak{u}$. In \cite{karmarkar-efficient-1982}, $\mathfrak{u}$ required to be a positive integer; but in our case, $\mathfrak{u}$ can be a positive real number.

    Note that the last group may have a size less than $\mathfrak{u} \cdot d_{\max}$. In this case, we treat the group as if its (pseudo) group size is $\mathfrak{u} \cdot d_{\max}$,
    {the reason being it may be the case that this last group may contain a very small item and may result in the creation of a large number of copies of this group (creating several  copies of the group will be apparent in the later part of this algorithm).}
    Let $E$ be the set of all the groups and $\#E$ the cardinality of this set. We treat each group as if it is a single agent with demand equal to the group size (and equal to the pseudo group size for the last group or alternatively, we only consider the pseudo group size of each group. The pseudo group size of a group can be defined as the actual group size if it is not the last group. For the last group, it is computed as explained earlier).
    Now, we determine the groups that are connected all the time. For that we use the same approach as discussed in \Cref{subsec: heuristic algorithm 1}.
    
    Now, for each remaining group $E_i \in E$ (that are not connected all the time) we compute the number of copies that need to be created for as equally as possible allocation of watts (to the group) as: $\frac{k \cdot V_{\max}}{V(E_i)}$, where $k$ denotes the number of copies of the group with group size ${V}_{\max}$.

    After that we create an instance $D_{group}$ as follows:
    \begin{enumerate}
	   \item for each group $E_i$ in $E$
       (such that the number of remaining copies of $E_i$ is a positive non-zero integer)
       create a copy of an agent whose demand is $V(E_i)$, and insert it into the instance $D_{group}$. Reduce the number of copies of demand $V(E_i)$ by one.
	   \item repeat the above procedure till the number of copies of each demand becomes $0$.
    \end{enumerate}

    Now that we have an instance to work with, we can apply the $FFk$ or $FFDk$ on $D_{group}$. Let $B$ is the resulting solution. Each item $V(E_i)$ in solution $B$ will be replaced by the corresponding group. Also, to each bin in solution $B$ we add the group(s) that are connected all the time. To speed up the process, we use ternary search, and determine the solution that is leximin-preferred over other computed solutions.

    \begin{example} \label{ha3:example2}
        We illustrate the working of HA3 on the same data of \Cref{exm:watts-2}: three agents with demand $M, M-1, 1$ and supply capacity of $M+1$.
        For simplicity, we take $\mathfrak{u}$ as $1$. The alternative geometric grouping in HA3 groups the agents in $\{M, M-1, 1\}$ such that in each group there are sufficient items to have size, $V(\cdot)$, of the group $\geq \mathfrak{u} \cdot d_{\max} = M$. The alternative geometric grouping then results in $\{M\}, \{M-1, 1\}$ groups.
        Now, we determine the groups that can be connected all the time. Note that in this direction there is no group that can be connected all the time because if we choose any of the group to connect all the time, then the remaining supply will be insufficient to pack other groups (HA3 treats each group as a single agent with demand equal to the aggregate demand of agents in the group).
        HA3 then creates $\frac{k \cdot (V_{\max} = M)}{M}$ copies of the group with group size $V(\{M-1, 1\}) = M$ where $k$ is the number of copies of the group with group size $\max \{\{M\}, \{M-1,1\}\}$.
        After that, the algorithm creates the instance $D_{group}$ that contains $k$ copies of the instance $\{M, M\}$.
        Applying $FFk$ on instance $D_{group}$ will place each item in a separate bin. Overall, $2k$ bins are required to pack the items in $D_{group}$, and each item will appear in $k$ bins. Replacing an item in each bin in the final solution with the corresponding group, the final supply vector is $\{0.5, \frac{M-1}{2}, \frac{M}{2} \}$.
    \end{example}

    \begin{example} \label{ha3:example1}
        On applying the alternative geometric grouping on the same \Cref{ha1:example1} with $\mathfrak{u}=0.25$ results in the following groups:
        $\{14.2\}, \{14\}, \{6.7\}, \{6.5\}, \{3.2, 3\}, \{1.7,1.7,0.82\}, \{0.8, 0.42, 0.4, 0.22, 0.2\}$. Note that the items in each group are sufficient to make group size $V$ equal or exceed $\mathfrak{u} \cdot d_{\max} = 3.55$.
        
        The instance on which algorithm works further is $\{14.2, 14, 6.7, 6.5, 6.2, 4.22, 3.55 \}$. Note that the demand of an item in this instance is equal to the group size of the corresponding group. For example, the demand $6.2$ is equal to the group size of the group $\{3.2,3\}$. Note that the last demand $3.55$ is the pseudo group size of the group $\{0.8, 0.42, 0.4, 0.22, 0.2\}$ because the sum of all the item sizes in that group is less than $\mathfrak{u} \cdot d_{\max} = 3.55$. 
        The maximum group size is $V_{\max} = 14.2$.
        
        After that, the algorithms does further grouping
        and determine the number of copies that need to be created for each item in $\{14.2, 14, 6.7, 6.5, 6.2, 4.22, 3.55 \}$.
        We take $k=3$ as the number of copies for demand $14.2$. For the remaining demands, the algorithm determines the number of copies as explained above in the algorithm.
        
        -- For $g=0$, after applying $FFk$ and replacing an item in each bin in the final solution with the corresponding group, the final supply vector is \sloppy
        $\{0.12632,\allowbreak 0.13895, 0.25263, \allowbreak 0.26526, \allowbreak 0.43158, \allowbreak 0.50526, \allowbreak 0.89474, \allowbreak 0.89474, \allowbreak 1.10526, \allowbreak 1.17894, \allowbreak 2.11579, \allowbreak 2.21046, \allowbreak 2.24204, \allowbreak 2.39473 \}$. 
        Egalitarian allocation of supply in this case is $0.12632$. 
        
        -- For $g=1$, the item $3.55$ is connected all the time (and therefore all the items in the group $\{0.2, 0.22, 0.4, 0.42, 0.8\}$ corresponding to this item are connected all the time in the final solution). The algorithm creates the instance $D_{group}$. After applying $FFk$ on instance $D_{group}$ and replacing an item in each bin in the final solution with the corresponding group, the final supply vector (for all the items that are not connected all the time) is 
        $\{0.5125, 1.0625, 1.0625, 1.3125, 1.4, 2.5125, 2.625, 2.6625, 2.84375\}$, and the egalitarian allocation of watts is $0.5125$kW. To each bin in the final solution, the algorithm adds all the items in the group 
        $\{0.2, 0.22, 0.4, 0.42, 0.8\}$ corresponding to item $3.55$.
        
        -- Finally, the algorithm outputs a solution that is leximin preferred over other solutions. In this example, HA3 outputs the solution corresponding to $g=1$ in which all the items in the group 
        $\{0.2, 0.22, 0.4, 0.42, 0.8\}$
        are connected all the time.
    \end{example}

    \subsection{Heuristic Algorithm 4 (HA4)}
    \label{heuristic algorithm 4}
    Our last heuristic algorithm uses the ``original'' \kbp solutions to compute a leximin-preferred solution for an input instance. The algorithm works as follows:
    Let $d_{\max}$ be the maximum demand item in $D$. Now, we determine the item $d_l \in D$ such that the sum of all the item sizes $\leq d_{l}$ is at most $S - d_{\max}$, and adding the next demand ${d'}_{l} > d_l$ 
    violates the bin capacity constraint, that is, the sum of all the item sizes $\leq {d'}_{l}$ is larger than the bin capacity $S - d_{\max}$. 
    Let $G$ be the set of items that are connected all the time as explained and determined in \Cref{chap:equal supply algorithms-sec:impossibility result}, and each item in $G$ has size $\leq d_l$.

    The updated set of demands is $D' = D - G$, and the updated supply is $S' = S - V(G)$. 
    Now, we connect items in $G$ all the time, and for the remaining items, we compute the ``original'' \kbp solution (using either $FFk$ or $FFDk$) and return the solution which is leximin-preferred over other solutions. To make the algorithm run faster, we use Ternary Search
    (see Algorithm \ref{ternary search}). 
    
    The main difference with \Cref{subsec: heuristic algorithm 1} is that, here the number of copies of each item will remain same whereas in \Cref{subsec: heuristic algorithm 1} for an item $D'[i]$ in the remaining set of items the number of copies are computed as $\frac{k \cdot d_{\max}}{D'[i]}$ (where $k$ is the number of copies of item $d_{\max}$) which need not be the same for all the remaining items.

    \begin{algorithm}[H]
    	\SetAlgoLined
    	\DontPrintSemicolon
    	\KwIn{$alg$:= algorithm to use to compute \kbp solution;\\
    		$D$:= the input set of agents' demands; \\
    		$S$:= supply; \\
    		$k$:= as in \kbp;}
    	\KwOut{leximin-preferred best solution;}
    	
    	$D_{sorted}$:= sorted (in non-decreasing order) instance of $D$ \;
    	$d_{\max} = max(D)$ \;
    	determine $d_{large}$ such that sum of all item sizes $\leq d_{large}$ is at most $S - d_{\max}$ and adding the next largest element after $d_{large}$ violates the bin capacity constraint with bin size $S - d_{\max}$ \;
    	$\mathtt{bestSolution}$:= the solution returned by Ternary Search(see Algorithm \ref{ternary search}) \;
    	\caption{Heuristic Algorithm 4}
    	\label{algorithm: heuristic algorithm 4}
    \end{algorithm}

    \begin{example} \label{ha4:example2}
        We illustrate the working of algorithm HA4 on 
        \Cref{exm:watts-2}.
        That example has three agents with demand $M, M-1, 1$, respectively, and the supply capacity is $M+1$.
        The set of agents that can be connected all the time are: $\{\}, \{1\}$. 
        For the set $\{\}$, HA4 applies ``original'' \kbp to the remaining set of agents $\{M, M-1, 1\}$, and computes the egalitarian allocation of watts.
        Then, HA4 connects the agents in $\{1\}$ all the time and for the remaining set of agents $\{M, M-1\}$ applies the ``original'' \kbp solution, and computes the egalitarian allocation of watts. Also, to each bin in the solution, HA4 adds the agents in $\{1\}$. 
        Finally, HA4 outputs among the above two solutions the one that is leximin-preferred.
    \end{example}
    
    \begin{example} \label{ha4:example1}
        On applying HA4 on the same \Cref{ha1:example1}, HA4 connects the agents $\{0.2, 0.22, 0.4 ,0.42, 0.8, 0.82, 1.7, 1.7 \}$ all the time. For the remaining agents, it uses ``original'' \kbp, and results in an egalitarian allocation of $0.81819$kW of electricity as explained below:
   
        -- For $g = g_{begin} =0$, no item is connected all the time. Applying $FFk$ for $k=3$ results in the following watts allocation vector (in non-decreasing order):
        $0.06667, \allowbreak 0.07333, \allowbreak 0.13333, \allowbreak 0.14, \allowbreak 0.26666, \allowbreak 0.27333, \allowbreak 0.56666, \allowbreak 0.56666, \allowbreak 0.99999, \allowbreak 1.06666, \allowbreak 
        2.16664, \allowbreak 2.23331, \allowbreak 4.66662, \allowbreak 4.73329. $
        Egalitarian allocation of watts in this case is 0.06667 kW.   \newline
    -- Similarly, for $g = g_{end} = 8$, items $\{0.2, 0.22, 0.4 ,0.42, 0.8, 0.82, 1.7, 1.7 \}$ are connected all the time. Applying $FFk$ for $k=3$ on the remaining set of elements results in the following allocation of watts: $\{0.81819, 0.87274, 1.77274, 1.82729, 3.81822, 3.87277\}$. The egalitarian allocation of watts in this case is $0.81819$ kW. Clearly, this solution is better than the previous one in terms of the egalitarian allocation of watts among agents that are not connected all the time. In addition, the final allocation vector (that is, after adding the agents that are connected all the time) is preferred to the solution (when $g=0$). 
    
    -- After updating $g_{begin}$ and $g_{end}$ the algorithm computes other solutions. Finally, among these solutions, the one that is preferred by leximin is returned. In this example, the algorithm returns the solution for $g=8$.
    \end{example}

\section{Experimental Results}
\label{sec: Experimental Results}


    \subsection{Experiment: approximation ratio of \texorpdfstring{$FFk$}{FFk} and \texorpdfstring{$FFDk$}{FFDk} algorithms}
    \label{appendix: approximation ratio of FFk and FFDk algorithms}

    In this section we describe the second experiment to verify the approximation ratio of $FFk$ and $FFDk$ with respect to the optimal bin-packing.
    For this purpose we utilize a synthetic dataset. There are datasets available for \kbp when $k=1$, but we do not know their optimal packing when $k>1$. Next, we describe how the synthetic dataset has been created.

    \subsubsection{Data generation}
    Since computing the optimal bin-packing of a given instance is NP-hard, we constructed instances whose optimal bin-packing is known. Given the bin capacity $S$, we use \Cref{appendix:experimentalResults:generate_ll} to generate a list of item sizes that sum up to the bin capacity $S$. \Cref{appendix:experimentalResults:groupItemsGeneration} uses \Cref{appendix:experimentalResults:generate_ll} to generate an instance whose item sizes sum up to $OPT$ times $S$, and by the construction, we know that the optimal bin-packing for the instance $D$ has exactly $OPT$ bins, and moreover, the optimal bin-packing for $D_k$ has exactly $k\cdot OPT$ bins (all of them full).

    \begin{algorithm}[H]
    	\DontPrintSemicolon
    	\KwIn{A bin capacity ${S}$}
    	\KwOut{A list $L$ of item sizes whose sum is $S$.}
    	\SetKw{KwRet}{return}
	
    	Let $L$ be the list of the randomly generated item sizes. Initially $L$ is empty. \;
    	Let $s_L$ be the sum of all the item sizes in the list. Initially $s_L=0$. \;
	
    	\While{True}
    	{generate a random number $r\in (1,S)$. \; 
    		\eIf{$s_L + r > S$}
    		{Append $S - s_L$ to $L$. \;
    			break \;}
    		{Append $r$ to $L$, and update $s_L$. \;}
    	}
    	return $L$
	
    	\caption{generate\_items}
    	\label{appendix:experimentalResults:generate_ll}
    \end{algorithm}

    Our generated dataset contains multiple files. Each file in the dataset contains a fixed number ${IC}$ of instances. Each instance has several item sizes. These items have an optimal packing into ${OPT}$ bins. 
    Each bin has capacity $S$. For example let $[1,2,3,7,6,1']$ be an instance in some file where for each instance in the file ${OPT}=2$. Bin capacity is $S=10$. As we can see, there are two groups $\{7,2,1\}, \{6,3,1'\}$ which sum to 10 and hence an optimal packing requires two bins to pack the items in this instance. 

    Clearly, this dataset has some limitations in that it uses \Cref{appendix:experimentalResults:generate_ll} to generate a list of items with a sum exactly equal to the capacity of the bin $S$. Therefore, the sum of all the items in an instance is a multiple of $S$. However, in a more general scenario, it is not necessarily true.

    \begin{algorithm}
    	\DontPrintSemicolon
    	\KwIn{A bin capacity ${S}$ and the optimal number ${OPT}$ of bins to pack items in this instance.}
    	\KwOut{A list $D$ of item sizes which sum to ${OPT}\cdot{S}$, such that the optimal number of bins is indeed $OPT(D)=OPT$.}
    	
    	Let $D$ be the instance of item sizes. Initially $D$ is empty. \;
    	Let $s_D$ be the sum of all item sizes in $D$. Initially $s_D=0$. \;
    	\While{True}
    	{$L =$ generate\_items($S$) \;
            Append item sizes in $L$ to $D$. \;
    		$s_D = s_D + s_L$. \;
    		\If{$s_D / S == {OPT}$}
    		{
    			Shuffle item sizes in $D$. \;
    			break \;}
    	}
    	return $D$ \;
    	
    	\caption{generate\_instance\_items}
    	\label{appendix:experimentalResults:groupItemsGeneration}
    \end{algorithm}	

    \subsubsection{Results}
    \label{section:results}
    \label{subsection:ffkffdkresults}

    We ran our experiment for different values of $k > 1$, and for different integer values of $OPT$, $2\le OPT \le 9$.
    We computed the conjectured upper bound on the number of bins required to pack items using $FFk$ (for $k>1$) and $FFDk$ as $1.375 \cdot OPT(D_k)$ and $\frac{11 \cdot OPT(D_k) + 6}{9}$, respectively. For each file, we report the maximum number of bins required to pack items. 
    We can see in \Cref{figure:ffk-ffdk:k2k3}, and \Cref{figure:ffk-ffdk:k4k5}
    that the bins used by $FFk$ (for $k>1$) and $FFDk$ are at most $1.375 \cdot OPT(D_k)$ and  $\frac{11 \cdot OPT(D_k) + 6}{9}$, respectively, which supports our conjectures
    in \Cref{section:approx-algo:subsection:approx_algo:ffk} and \Cref{section:approx-algo:subsection:approx_algo:ffdk}.

    \begin{figure}[p]
    	\begin{subfigure}[0.8\textheight]{\linewidth}
    		\centering
    		\includegraphics[height = 0.45\textheight, width=\textwidth]{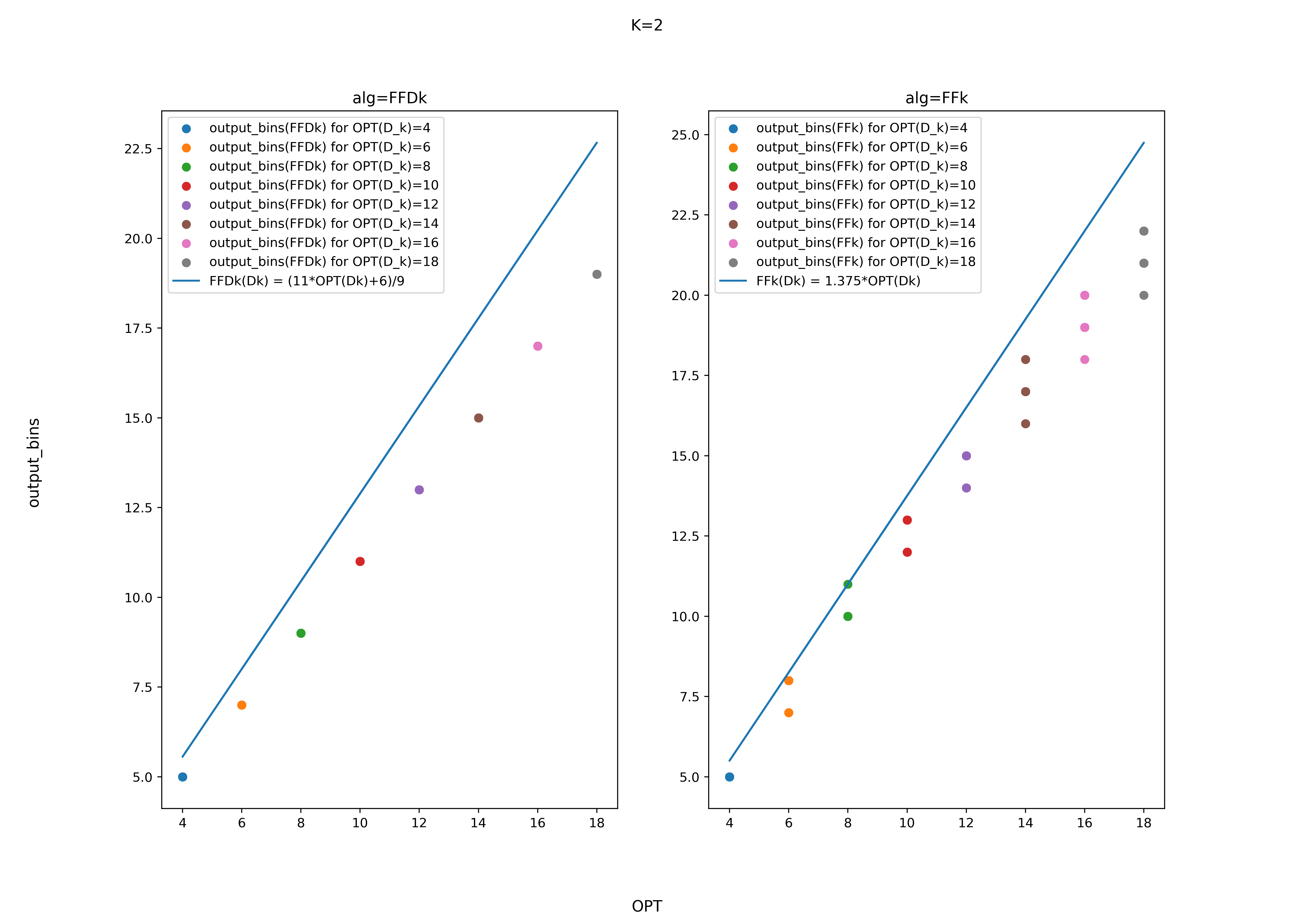}
    	\end{subfigure}
    	
    	\begin{subfigure}[0.8\textheight]{\linewidth}
    		\centering
    		\includegraphics[height = 0.45\textheight, width=\textwidth]{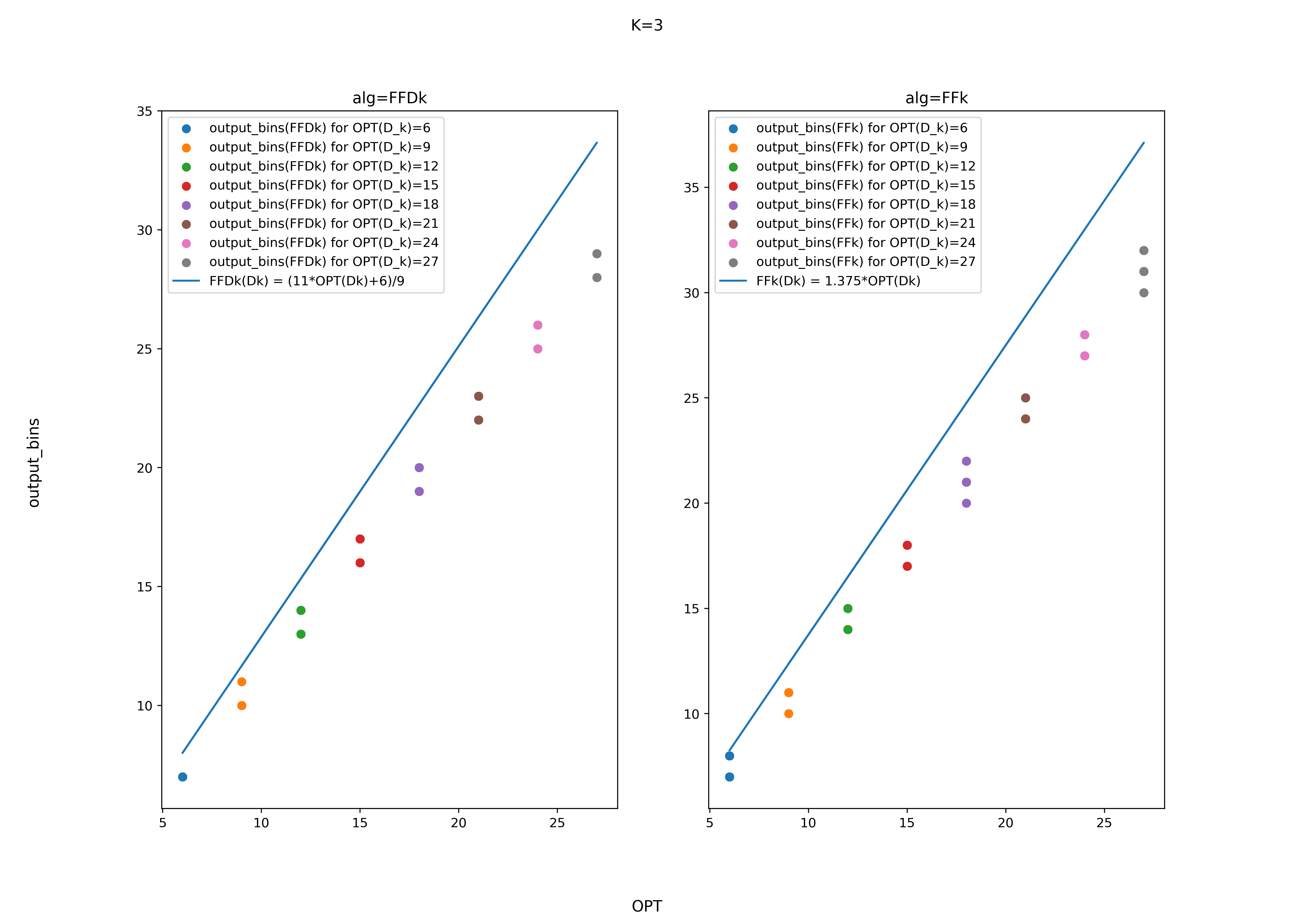}
    	\end{subfigure}		
            \caption{The optimal numbers $OPT(D_k)$ of bins and numbers of bins used by $FFk$ and $FFDk$ algorithms bins are shown at the $x$- and $y$-axis, respectively. The data points in the legend show the upper bound that can be hypothesized for $OPT(D_k) = k \cdot OPT(D)$. Each subplot's data points display the number of different output bins corresponding to each conjectured upper bound. Note that the output bins convey the conjectured upper bounds.}
    	\label{figure:ffk-ffdk:k2k3}		
    \end{figure}

    \begin{figure}[p]
    	\begin{subfigure}[0.8\textheight]{\linewidth}
    		\centering
    		\includegraphics[height = 0.45\textheight, width=\textwidth]{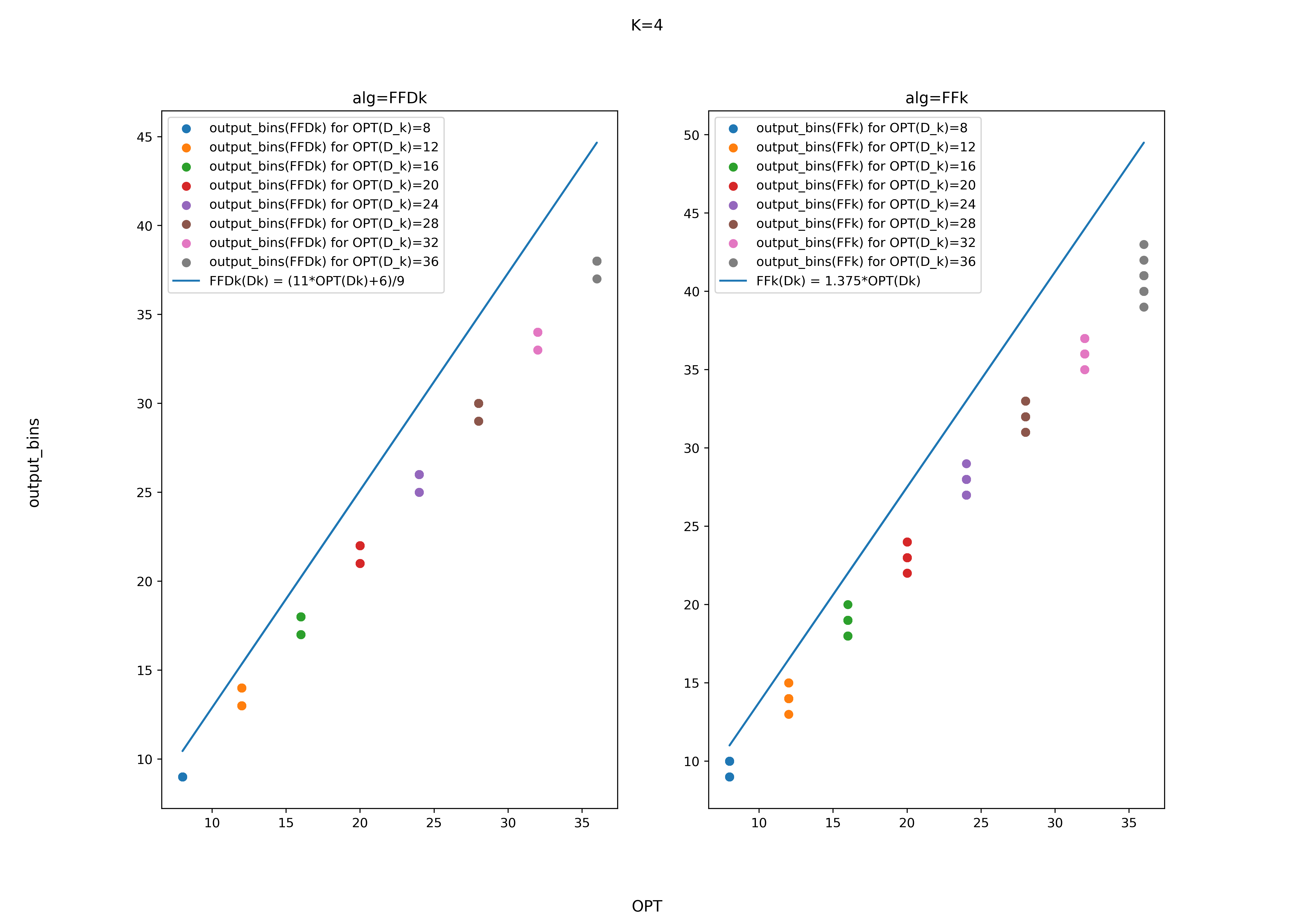}
    	\end{subfigure}
    	
    	\begin{subfigure}[0.8\textheight]{\linewidth}
    		\centering
    		\includegraphics[height = 0.45\textheight, width=\textwidth]{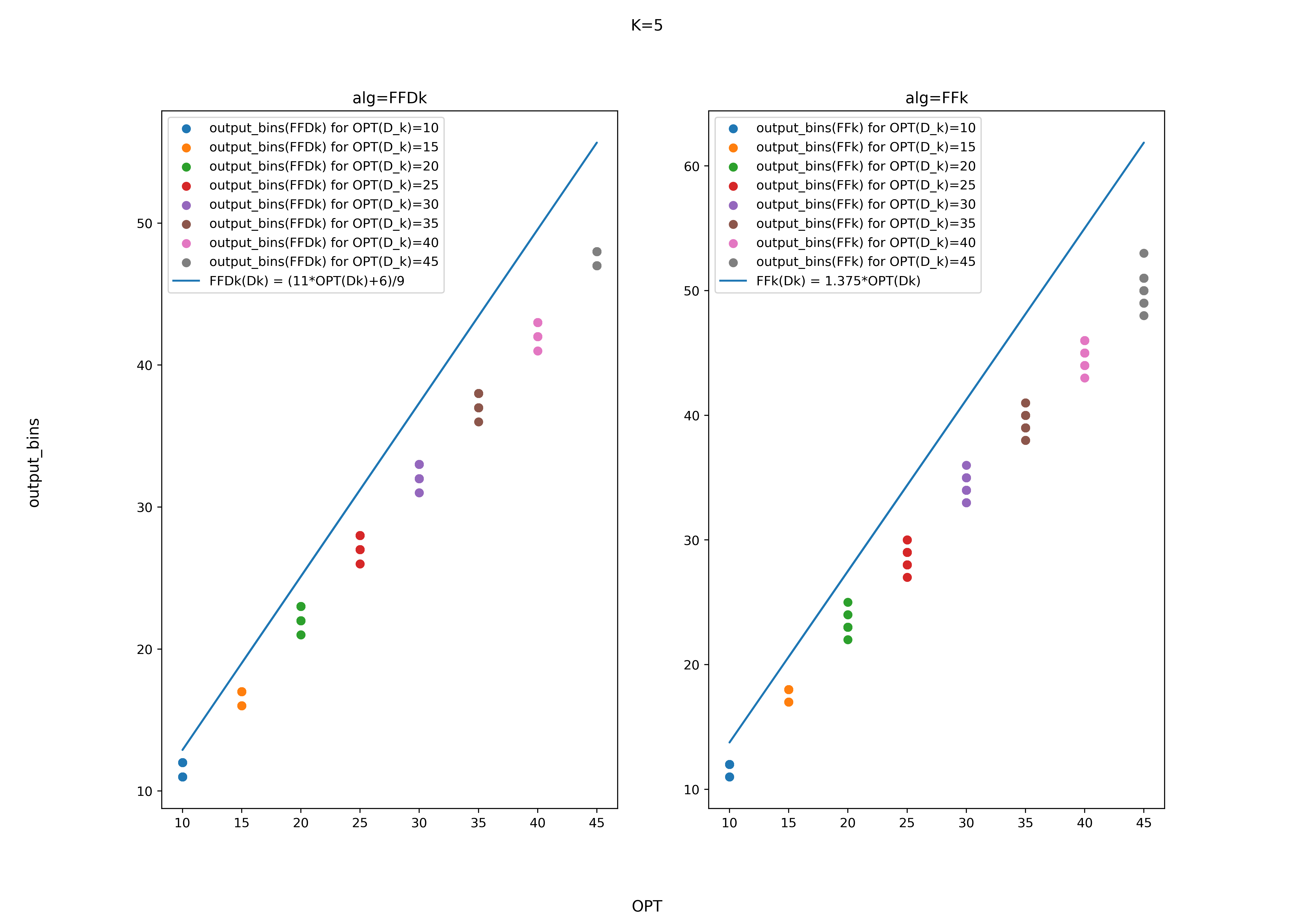}
    	\end{subfigure}		
            \caption{The optimal numbers $OPT(D_k)$ of bins and numbers of bins used by $FFk$ and $FFDk$ algorithms bins are shown at the $x$- and $y$-axis, respectively. The data points in the legend show the upper bound that can be hypothesized for $OPT(D_k) = k \cdot OPT(D)$. Each subplot's data points display the number of different output bins corresponding to each conjectured upper bound. Note that the output bins convey the conjectured upper bounds.}
    	\label{figure:ffk-ffdk:k4k5}		
    \end{figure}    


\subsection{Experiment: $FFk$ and $FFDk$ for egalitarian allocation of connection time}
\label{subsec: results-egal-connection-time}
In this section, we describe experimental results checking the performance of the $k$BP adaptations of $FFk$ and $FFDk$ to our motivating application of fair electricity distribution \footnote{All the codes are available in \url{https://github.com/dinkubag/Electricity-Distribution-Algorithms}. }.
\label{experiment:fairelecdistri}

\subsubsection{Dataset}
\label{experiment:fairelecdistri:dataset}

We use the same dataset of $367$ Nigerian households described in \cite{oluwasuji_solving_2020}.%
\footnote{
	We are grateful to Olabambo Oluwasuji for sharing the dataset with us.
}
This dataset contains the hourly electricity demand for each household for 13 weeks (2184 hours). 
In addition, they estimate for each agent and hour, the \emph{comfort} of that agent, which is an estimation of the utility the agent gets from being connected to electricity at that hour.
For more details about the dataset, readers are encouraged to refer to the papers \cite{oluwasuji_algorithms_2018,oluwasuji_solving_2020}. 
The electricity demand of agents can vary from hour to hour. We execute our algorithms for each hour separately, which gives us essentially 2184 different instances.

As in \cite{oluwasuji_solving_2020}, we use the demand figures in the dataset as mean values; we determine the actual demand of each agent at random from a normal distribution with a standard deviation of $0.05$ (results with a higher standard deviation are presented in \Cref{APPENDIX: electricity-distribution-results}).

As in \cite{oluwasuji_solving_2020}, 
we compute the supply capacity $S$ for each day by averaging the hourly estimates of  agents' demand for that day. 
We run nine independent simulations (with different randomization of agents' demands). Thus, the supply changes in accordance with the average daily demand, but cannot satisfy the maximum hourly demand.

\subsubsection{Experiment:}
\label{experiment:fairelecdistri:experiment}
For each hour, we execute the $FFk$ and $FFDk$ algorithms on the households' demands for that hour. We then use the resulting packing to allocate electricity: if the packing returns $q$ bins, then each bin is connected for $1/q$ of an hour, which means that each agent is connected for $k/q$ of an hour.

The authors of \cite{oluwasuji_solving_2020} measure the efficiency and fairness of the resulting allocation, not only by the total time each agent is connected, but also by more complex measures. In particular, they assume that each agent $i$ has a utility function, denoted $u_i$, that determines the utility that the agent receives from being connected to electricity at a given hour.
They consider three different utility models:
\begin{enumerate}
	\item The simplest model is that $u_i$ equals the amount of time the agent $i$ is connected to electricity (this is the model we mentioned in the introduction).
	\item The value $u_i$ can also be equal to the total amount of electricity that the agent $i$ receives. For each hour, the amount of electricity given to $i$ is the amount of time $i$ is connected, times $i$'s demand at that hour.
	\item They also measure the ``comfort'' of the agent $i$ in time $t$ by averaging their demand over the same hour in the past four weeks, and normalizing it by dividing by the maximum value.
\end{enumerate}

For each utility model, they consider three measures
of efficiency and fairness:
\begin{itemize}
	\item Utilitarian: the sum $
	\sum_{i} u_i(x)$ (or the average) of all agents' utilities $u_i$.
	\item Egalitarian: the minimum utility $
	\min_{i} u_i(x)	$ of a single agent,
	\item The maximum difference $\max_{i,j} \{\lvert u_i(x) - u_j(x)\rvert\}$ of utilities between each pair of agents.
\end{itemize}

\subsubsection{Results:}
\label{subsection:implementationresult}
The authors of \cite{oluwasuji_solving_2020} have proposed two models: comfort model (CM) and the supply model. The objective of the CM and the SM model is to maximize the comfort and supply respectively.
For the comparison with the results from \cite{oluwasuji_solving_2020},
we show our results for $FFk$ and $FFDk$ for $k=100$ \footnote{We have checked smaller values of $k$, and found out that the performance increases with $k$. By the time $k$ reached $100$, the performance increase was very slow, so we kept this value.} along with the results in \cite{oluwasuji_solving_2020} in tables \footnote{In \Cref{results:table:connections-to-supply}-\Cref{results:table:comfort-delivered} CM, SM, GA, CSA1, RSA, CSA2 stands for: The Comfort Model, The Supply Model, Grouper Algorithm, Consumption-Sorter Algorithm, Random-Selector Algorithm, Cost-Sorter Algorithm respectively \cite{oluwasuji_solving_2020,oluwasuji_algorithms_2018}} \Cref{results:table:connections-to-supply},\Cref{results:table:electricity-supplied}, and \Cref{results:table:comfort-delivered} . We highlight the best results in bold.
As can be seen in \Cref{results:table:connections-to-supply}, 
\Cref{results:table:electricity-supplied} and 
\Cref{results:table:comfort-delivered},
$FFk$ and $FFDk$ outperform previous results in terms of the egalitarian allocation of connection time which is the main objective of this paper. Overall, the comparison of $FFk$ and $FFDk$ with their results are as follows:
\begin{itemize}
    \item[--] $FFk$ and $FFDk$ outperform the previous results in terms of egalitarian allocation of connection time. Maximum Utility difference is also better than all previous results. In terms of utilitarian social welfare metric $FFk$ and $FFDk$ outperforms all previous results except CM.

    \item[--] $FFk$ and $FFDk$ outperform the previous results in terms of utilitarian allocation of supply. In terms of egalitarian social welfare metric and maximum utility difference, $FFk$ and $FFDk$ outperforms all previous results except SM.

    \item[--] $FFk$ and $FFDk$ outperform the previous results in terms of utilitarian allocation of comfort except the CM model. In terms of egalitarian allocation of comfort, $FFk$ and $FFDk$ performs better than previous results except the CM and SM model. Their performance is nearly equivalent to the CM model with better standard deviation and maximum utility difference.
\end{itemize}

In \Cref{APPENDIX: electricity-distribution-results}, we have graphs that show how various values of $k$ and varying levels of uncertainty affect the number of connection hours, amount of electricity delivered, and comfort.

\begin{toappendix}
    \section{More Electricity Distribution Results}	\label{APPENDIX: electricity-distribution-results}
    In this section, we discuss the variation in hours of connection, comfort, and electricity supplied with different values of $k$ and with varying levels of uncertainty (standard deviation) in the agent's demand. 

    In \Cref{appendix:more-results:k-vs-comf-util-with-delta}, we can observe that the sum of comfort the algorithm delivers to agents increase with an increasing value of $k$. However, that increase appears to saturate as $k$ increases. Similar phenomena occur for other metrics.

    \Cref{appendix:more-results:k-vs-comf-egal-with-delta} shows that the minimum comfort the algorithm delivers to agents individually increases with an increasing value of $k$. However, that increase appears to saturate as $k$ increases. Similar phenomena occur for other metrics.

    \begin{figure}[h]
    \centering\includegraphics[width=\linewidth, height=0.45\textheight]{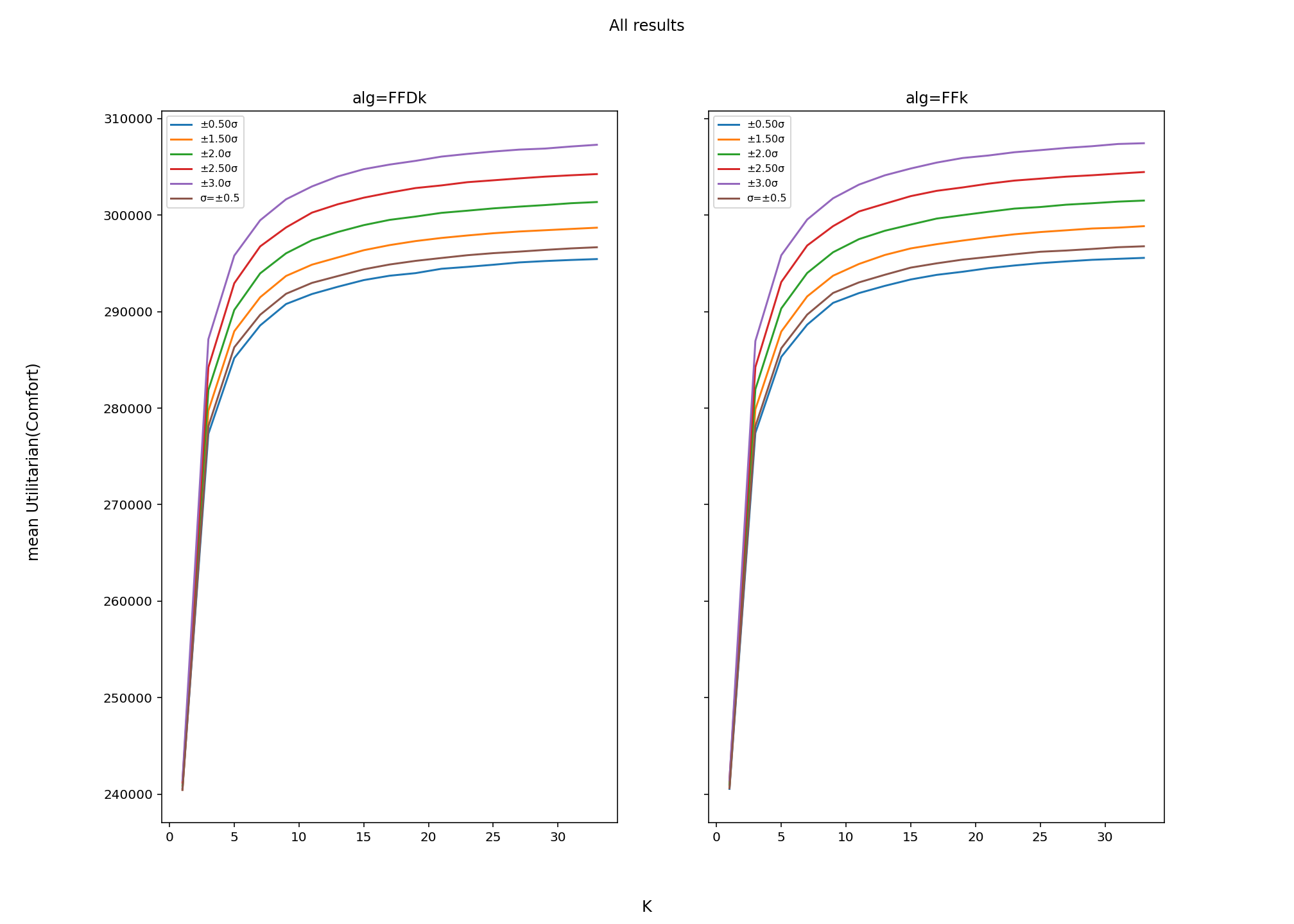}

    \caption{Values of $k$ and mean utilitarian value of comfort are shown at the $x$- and $y$-axis, respectively. Figure shows the increase in sum of comfort delivered by the algorithm to agents with increasing value of $k$ for varying levels of uncertainty $\sigma$. 
    \label{appendix:more-results:k-vs-comf-util-with-delta}
    }

    
    \centering\includegraphics[width=\linewidth, height=0.45\textheight]{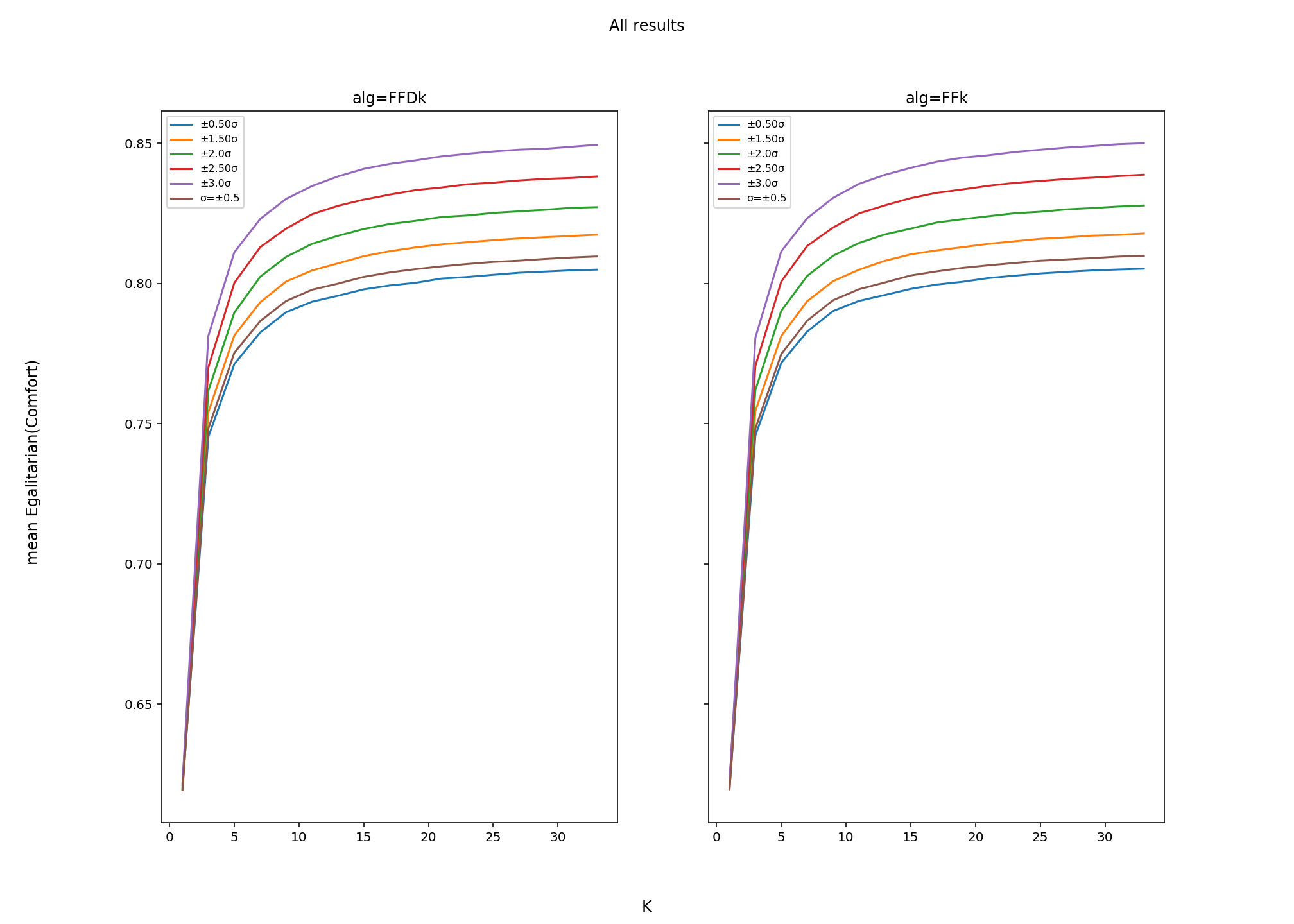}
    \caption{Values of $k$ and minimum egalitarian value of comfort are shown at the $x$- and $y$-axis, respectively. Figure shows the increase in minimum comfort delivered to agents individually with increasing value of $k$ for varying levels of uncertainty $\sigma$. 
    \label{appendix:more-results:k-vs-comf-egal-with-delta}
    }
    \end{figure}

    In \Cref{appendix:more-results:k-vs-comf-mud-with-delta}, the maximum utility difference decreases with an increasing value of $k$. 
    Similar phenomena occurs for electricity-supplied social welfare metrics. However, in the case of hours of connection, the maximum utility difference is 0 because the algorithm is executed for each hour.
    
    \begin{figure}[h]
    \centering\includegraphics[width=\linewidth, height=0.45\textheight]{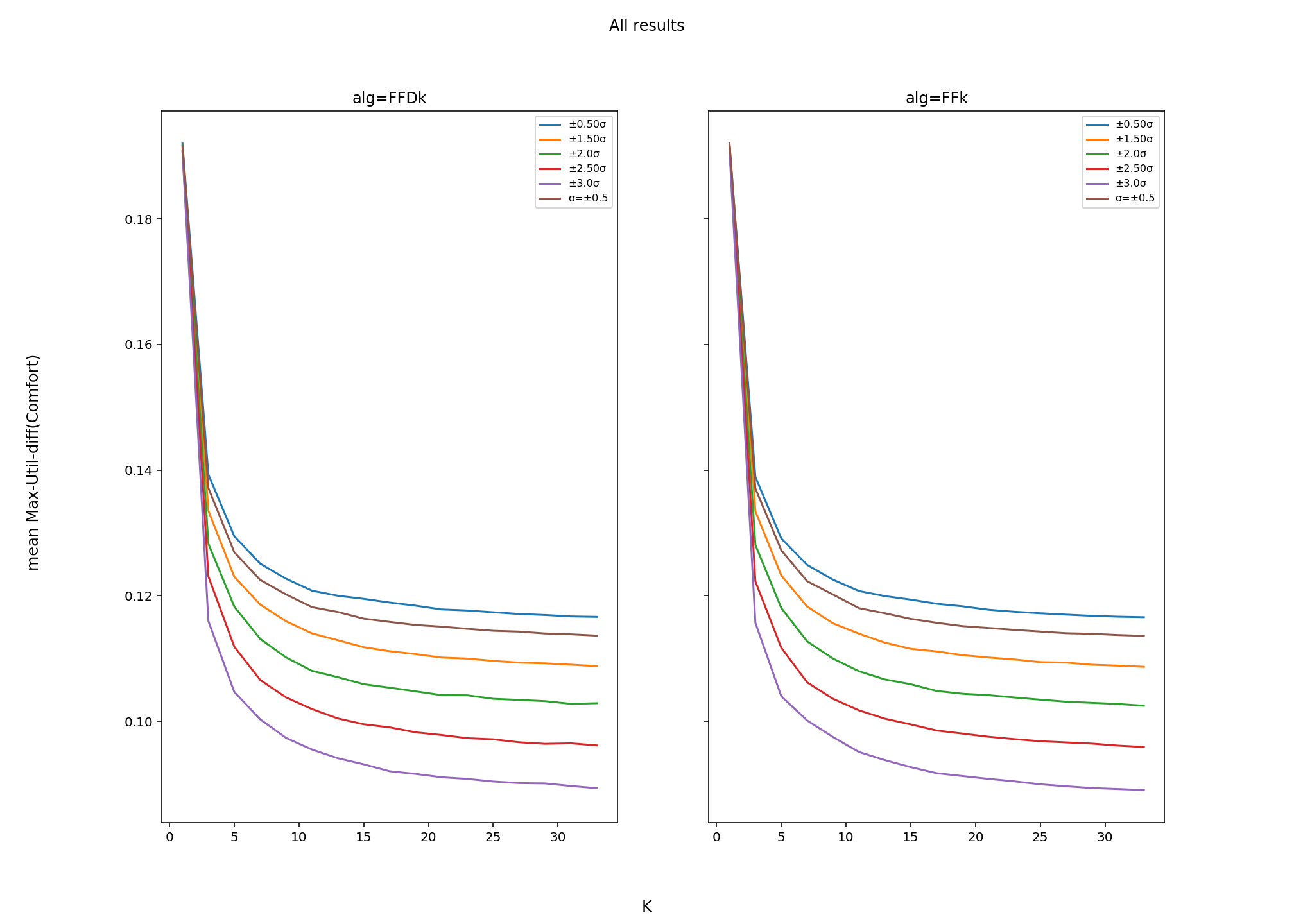}
    \caption{Values of $k$ and maximum utility difference of comfort are shown at the $x$- and $y$-axis, respectively. Figure shows the decrease in maximum utility difference with increasing value of $k$ for varying levels of uncertainty $\sigma$. }
    \label{appendix:more-results:k-vs-comf-mud-with-delta}
    \end{figure}
\end{toappendix}

\begin{table}[!ht]
	\caption{Comparing results of $FFk$ and $FFDk$ for $k=100$ with the results in \cite{oluwasuji_solving_2020} in terms of hours of connection to supply on the average, along with their standard deviation (SD) within parenthesis. In the third column, we have shown the average number of hours an agent is connected to the supply. }
	\label{results:table:connections-to-supply}
	\begin{center}
                    \scalebox{1}{
				\begin{tabular}{|p{0.15\linewidth}|p{0.2\linewidth}|p{0.15\linewidth}|p{0.2\linewidth}|p{0.15\linewidth}|}
					\toprule
					Algorithm & Utilitarian: sum(SD) & Utilitarian: average & Egalitarian(SD) & Maximum Utility Difference \\
					\toprule
					$FFk$ & {716145.3847 (13.9574)} & {1951.3498} & \textbf{1951.3498 (0.0380)} & \textbf{0.0(0.0)} \\
					\toprule
					$FFDk$ & {715891.3137 (13.3774)} & {1950.6575} &  \textbf{1950.6575 (0.0364)} & \textbf{0.0(0.0)} \\
					\toprule
					CM & 717031(3950) & 1953.7629 & 1920(3.24) & 123(2.09) \\
					\toprule
					SM & 709676(3878) & 1933.7221 & 1922(3.41) & 71(2.04) \\
					\toprule
					GA & 629534(4178) & 1715.3515 & 1609(4.69) & 695(3.28) \\
					\toprule
					CSA1 & 647439(3063) & 1764.1389 & 1764(2.27) & 1(0.00) \\
					\toprule
					RSA & 643504(4094) & 1753.4169 & 1753(4.33) & 1(0.00) \\
					\toprule
					CSA2 & 641002(3154) & 1746.5995 & 1746(2.38) & 1(0.00) \\
					\bottomrule
				\end{tabular}}
		\end{center}	
	\end{table}
	
	\begin{table}[!ht]
		\caption{Comparing results of $FFk$ and $FFDk$ for $k=100$ with the results in \cite{oluwasuji_solving_2020} in terms of electricity supplied on the average, along with their standard deviation (SD) within parenthesis.}
		\label{results:table:electricity-supplied}
		\begin{center}
                    \scalebox{1}{
				\begin{tabular}{|p{0.15\linewidth}|p{0.25\linewidth}|p{0.25\linewidth}|p{0.25\linewidth}|}
						\toprule
						Algorithm & Utilitarian(SD) & Egalitarian(SD) & Maximum Utility Difference \\
						\toprule
						$FFk$ & \textbf{1364150.4034 (58.6228)} & {0.8067 (0.0001)} & {0.1183 (0.0001)} \\
						\toprule
						$FFDk$ & \textbf{1363494.0885 (48.1455)} & {0.8063 (0.0001)} & {0.1183 (0.0002)} \\
						\toprule
						CM & 1340015(8299) & 0.78(0.01) & 0.17(0.02) \\
						\toprule
						SM & 1347801(8304) & 0.83(0.01) & 0.11(0.02) \\
						\toprule
						GA & 1297020(11264) & 0.35(0.04) & 0.58(0.03) \\
						\toprule
						CSA1 & 1296939(7564) & 0.66(0.02) & 0.28(0.02) \\
						\toprule
						RSA & 1344945(11284) & 0.68(0.03) & 0.25(0.03) \\
						\toprule
						CSA2 & 1345537(7388) & 0.63(0.03) & 0.30(0.02) \\
						\bottomrule
					\end{tabular}}
			\end{center}	
		\end{table}

		\begin{table}[!ht]
			\caption{Comparing results of $FFk$ and $FFDk$ for $k=100$ with the results in \cite{oluwasuji_solving_2020} in terms of comfort delivered on the average, along with their standard deviation (SD) within parenthesis.}
			\label{results:table:comfort-delivered}
			\begin{center}
                        \scalebox{1}{
						\begin{tabular}{|p{0.15\linewidth}|p{0.25\linewidth}|p{0.25\linewidth}|p{0.25\linewidth}|}
							\toprule
							Algorithm & Utilitarian(SD) & Egalitarian(SD) & Maximum Utility Difference \\
							\toprule
							$FFk$ & {296630.3583 (7.3626)} & {0.8085 (0.00003)} & {0.1155 (0.00004)} \\
							\toprule
							$FFDk$ & {296493.8100 (7.5569)} & {0.8081 (0.00003)} & {0.1156 (0.00003)} \\
							\toprule
							CM & 303217(3447) & 0.81(0.01) & 0.13(0.02) \\
							\toprule
							SM & 292135(3802) & 0.83(0.01) & 0.09(0.02) \\
							\toprule
							GA & 291021(5198) & 0.38(0.04) & 0.56(0.03) \\
							\toprule
							CSA1 & 291909(3201) & 0.67(0.02) & 0.25(0.02) \\
							\toprule
							RSA & 268564(5106) & 0.65(0.04) & 0.28(0.03) \\
							\toprule
							CSA2 & 270262(3112) & 0.64(0.02) & 0.28(0.02) \\
							\bottomrule
						\end{tabular}}
				\end{center}	
			\end{table}	
%
In {\Cref{APPENDIX: electricity-distribution-results}}, we discuss the variation in utilitarian, egalitarian, and maximum-utility difference with different values of $k$ and varying levels of uncertainty (standard deviation). The graphs show that the changes appear to saturate as $k$ increases.

\subsection{Experiment: Heuristic algorithms for egalitarian allocation of watts}
\label{subsec: Experiment: Heuristic algorithms for egalitarian allocation of watts}

In this section, we evaluate the performance of our heuristic algorithms (HA1, HA2, HA3, HA4) developed to distribute watts as equally as possible (see \Cref{sec: Egalitarian allocation of supply}). Before comparing the results, we want to highlight some key points that are relevant in interpreting the results of \Cref{results:table:heuristic algorithms:supply table}:
\begin{itemize}
	\item[--] The computation of supply is dependent on the data. Equal watts allocation computation is necessary for hours during which the aggregate demand is greater than the supply. Therefore, for the hours when aggregate demand is less than supply, we do not consider the allocation of watts in computing the egalitarian allocation of watts and in determining the maximum utility difference. However, they are considered when determining the utilitarian allocation.
    \item[--] 
    For each hour when the aggregate demand is greater than the supply, we compute the egalitarian allocation of watts for the agents that are not connected all the time.
    The final egalitarian allocation is the sum of the egalitarian allocations computed for each such hour.
\end{itemize}
We compare our results in terms of the number of hours an agent is connected to supply, the total supply delivered to an agent, and the total comfort delivered to an agent. We make this comparison in terms of the social welfare metrics of utilitarian, egalitarian, and maximum utility difference.
We do not compare our results with the results in \cite{oluwasuji_algorithms_2018,oluwasuji_solving_2020}, because in their article they have not reported the results in terms of supply allocation.

We highlight the best results in bold. Overall, the comparison of heuristic algorithms (HA1 to HA4) is as follows:
\begin{itemize}
	\item[--] Our main objective in developing heuristics is to allocate watts to agents as equally as possible. \Cref{results:table:heuristic algorithms:supply table} shows that HA1 with $FFDk$ outperforms all other heuristics in supplying more electricity to the worst-off agent. HA1 with $FFk$ outperforms other algorithms in distributing electricity as equally as possible (see the maximum utility difference column).
	On the other hand, HA4 with $FFk$ outperforms other algorithms in terms of total watts delivered.
    More results are discussed in \Cref{subsec: Experiment: Heuristic algorithms for egalitarian allocation of supply}.
	
	\item[--] Additionally, we compare our results in terms of hours of connection to supply (see \Cref{results:table:heuristic algorithms:connection table} in \Cref{subsec: Experiment: Heuristic algorithms for egalitarian allocation of supply}). 
    In this case, HA4 with $FFDk$ connects the worst-off agent to more number of hours than to any other heuristic algorithms. It also has the minimum ``maximum utility difference.''
	On the other hand, algorithm HA4 with $FFK$ outperforms other algorithms in total connection time delivered.
	
	\item[--] When we compare the results in terms of the comfort delivered on average (see \Cref{results:table:heuristic algorithms:comfort table} in \Cref{subsec: Experiment: Heuristic algorithms for egalitarian allocation of supply}), HA4 with $FFDk$ performs better than other algorithms in delivering comfort to the worst-off agent. It also has the minimum ``maximum utility difference,'' which means it delivers comfort more uniformly among the agents. 
	On the other hand, algorithm HA4 with $FFk$ outperforms other algorithms in terms of total comfort delivered.
\end{itemize}

\begin{table}[!ht]
	\caption{Comparing results of heuristic algorithms HA1-HA4 in terms of electricity allocated in watts on average, along with their standard deviation (SD) within parenthesis. }
	\label{results:table:heuristic algorithms:supply table}
	\begin{center}
		\scalebox{0.9}{
			\begin{tabular}{|p{0.2\linewidth}|p{0.26\linewidth}|p{0.22\linewidth}|p{0.25\linewidth}|p{0.18\linewidth}|}
				\toprule
				Algorithm & Utilitarian: sum(SD) &  Egalitarian(SD) & Maximum Utility Difference \\
				\toprule
				HA-1($k=5$, alg=$FFk$) & 1319476.193(202.97) & 2405.297(10.305) & $\mathbf{187.591(0.622)}$ \\
				\toprule
				HA-1($k=5$, alg=$FFDk$) & 1319219.591(27.98) & \textbf{2412.67(4.026)} & 187.611(0.115) \\
				\toprule
				HA-2($k=50$, alg=$FFk$) & 1305321.696(512.344) & 1837.016(4.518) & 1773.596(3.978) \\
				\toprule
				HA-2($k=50$, alg=$FFDk$) & 1305310.397(247.794) & 1833.378(8.039) & 1769.877(5.584) \\
				\toprule
				HA-3($k=5$, alg=$FFk$) & 1333406.016(555.198) & 1088.202(2.844) & 3706.019(26.488) \\
				\toprule
				HA-3($k=5$, alg=$FFDk$) & 1332752.563(336.110) & 1088.203(4.158) & 3718.627(23.409) \\
				\toprule
				HA-4($k=50$, alg=$FFk$) & $\mathbf{1363677.72(41.934)}$ & 1886.345(5.524) & 11734.702(18.849) \\
				\toprule
				HA-4($k=50$, alg=$FFDk$) & 1363650.312(39.603) & 1887.696(3.423) & 11730.615(24.72) \\
				\bottomrule
			\end{tabular}
		}
	\end{center}	
\end{table}

\begin{toappendix}
    \subsection{Experiment: Heuristic algorithms for egalitarian allocation of watts}
    \label{subsec: Experiment: Heuristic algorithms for egalitarian allocation of supply}

    \Cref{fig:max-supply-difference vs k for diff delta} shows that maximum watts difference decreases with increasing value of $k$ and with varying levels of uncertainty (standard deviation) in the agent's demand.
    The below \Cref{fig:max-supply-difference vs k for diff delta} is newly added.
    \begin{figure}
        \centering
        \includegraphics[width=\linewidth, keepaspectratio]{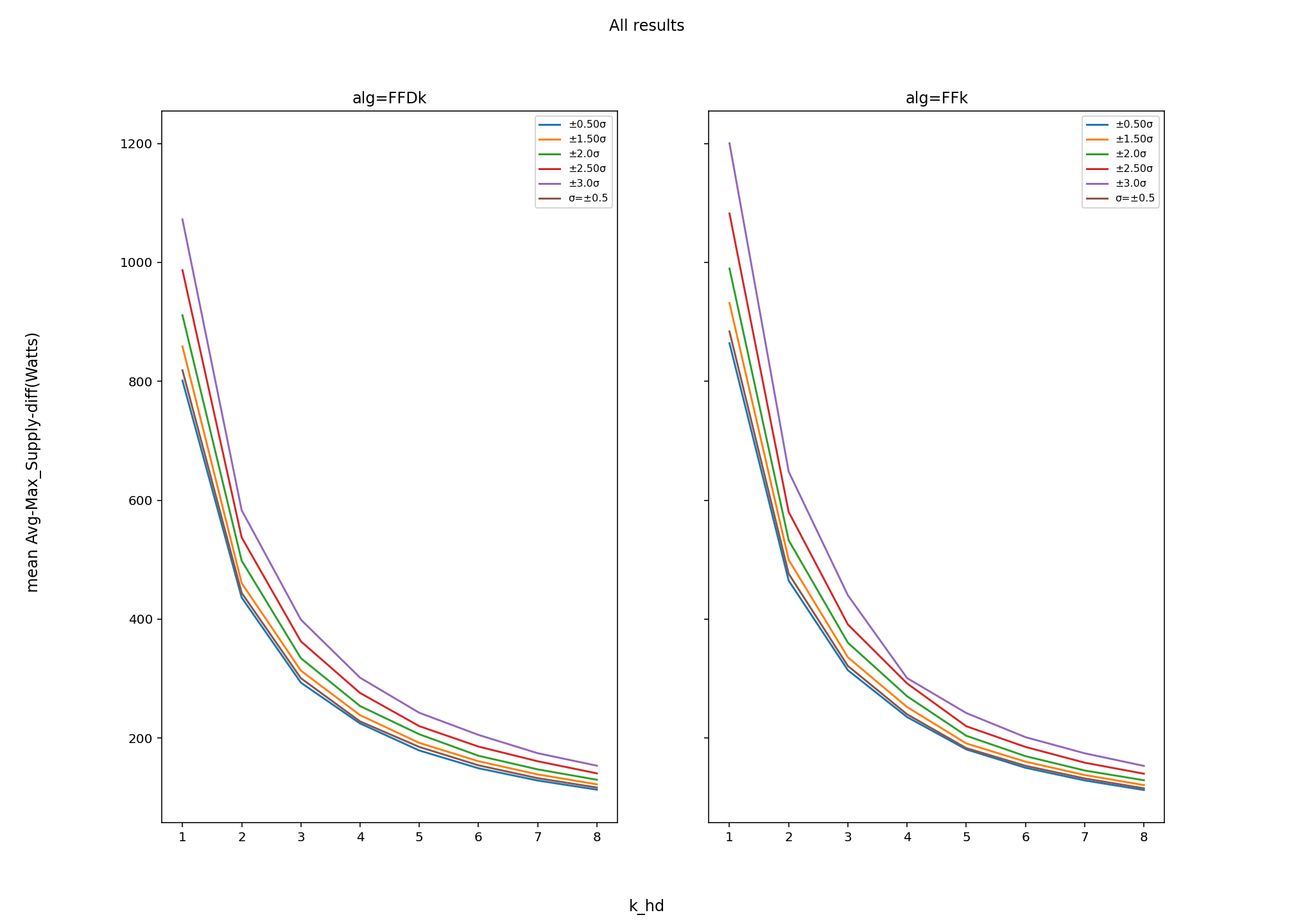}
        \caption{Values of $k$ and maximum watts difference are shown at the $x-$ and $y-$ axis, respectively. Figures shows that the maximum supply difference decreases with increasing value of $k$ for varying levels of uncertainty $\sigma$.}
        \label{fig:max-supply-difference vs k for diff delta}
    \end{figure}

    \begin{table}[!ht]
	\caption{Comparing results of heuristic algorithms HA1-HA4 in terms of hours of connection to supply on the average, along with their standard deviation (SD) within parenthesis. In the third column, we have shown the average number of hours an agent is connected to the supply. }
	\label{results:table:heuristic algorithms:connection table}
	\begin{center}
		\scalebox{0.9}{
			\begin{tabular}{|p{0.2\linewidth}|p{0.26\linewidth}|p{0.2\linewidth}|p{0.25\linewidth}|p{0.18\linewidth}|}
				\toprule
				Algorithm & Utilitarian: sum(SD) & Egalitarian(SD) & Maximum Utility Difference \\
				\toprule
				HA-1($k=5$, alg=$FFk$) & 750517.7(47.39) & 1587.06(0.79) & 596.938(0.796) \\
				\toprule
				HA-1($k=5$, alg=$FFk$) & 750441.267(27.508) & 1587.1(0.575) & 596.903(0.574) \\
				\toprule
				HA-2($k=50$, alg=$FFk$) & 743036.794(124.97) & 16454.95(2.424) & 538.047(2.434) \\
				\toprule
				HA-2($k=50$, alg=$FFDk$) & 743023.081(64.725) & 1647.564(1.116) & 536.436(1.116) \\
				\toprule
				HA-3($k=5$, alg=$FFk$) & 735908.209(234.535) & 1770.334(1.744) & 413.666(1.744) \\
				\toprule
				HA-3($k=5$, alg=$FFDk$) & 735497.902(148.749) & 1771.869(1.415) & 412.131(1.415) \\
				\toprule
				HA-4($k=50$, alg=$FFk$) & $\mathbf{752765.642(41.830)}$ & 1828.394(0.941) & 355.606(0.941) \\
				\toprule
				HA-4($k=50$, alg=$FFDk$) & 752748.122(10.153) & \textbf{1828.61(0.654)} & \textbf{355.39(0.654)} \\
				\bottomrule
			\end{tabular}
		}
	\end{center}	
\end{table}

\begin{table}[!ht]
	\caption{Comparing results of heuristic algorithms HA1-HA4 in terms of comfort delivered on average, along with their standard deviation (SD) within parenthesis.}
	\label{results:table:heuristic algorithms:comfort table}
	\begin{center}
		\scalebox{0.9}{
			\begin{tabular}{|p{0.2\linewidth}|p{0.26\linewidth}|p{0.2\linewidth}|p{0.25\linewidth}|p{0.18\linewidth}|}
				\toprule
				Algorithm & Utilitarian: sum(SD) &  Egalitarian(SD) & Maximum Utility Difference \\
				\toprule
				HA-1($k=5$, alg=$FFk$) & 310997.434(27.382) & 0.629(0.0001) & 0.371(0.0001) \\
				\toprule
				HA-1($k=5$, alg=$FFDk$) & 310942.697(18.246) & 0.629(0.0002) & 0.371(0.0002) \\
				\toprule
				HA-2($k=50$, alg=$FFk$) & 306235.496(83.102) & 0.662(0.001) & 0.338(0.001) \\
				\toprule
				HA-2($k=50$, alg=$FFDk$) & 306225.401(39.784) & 0.663(0.0003) & 0.337(0.0003) \\
				\toprule
				HA-3($k=5$, alg=$FFk$) & 303138.896(145.748) & 0.732(0.0007) & 0.267(0.0007) \\
				\toprule
				HA-3($k=5$, alg=$FFDk$) & 302895.789(82.562) & 0.734(0.0012) & 0.266(0.0012) \\
				\toprule
				HA-4($k=50$, alg=$FFk$) & $\mathbf{313024.588(30.674)}$ & 0.7686(0.0001) & 0.231(0.0001) \\
				\toprule
				HA-4($k=50$, alg=$FFDk$) & 313015.52(9.495) & \textbf{0.7688(0.0001)} & \textbf{0.231(0.0001)} \\
				\bottomrule
			\end{tabular}
		}
	\end{center}	
\end{table}
        
\end{toappendix}

\section{Conclusion and Future Directions} \label{section:conclusion}
We have shown that the existing approximation algorithms, like the First-Fit and the First-Fit Decreasing, can be extended to solve $k$BP. We have proved that, for any $k\geq 1$, the asymptotic approximation ratio for the $FFk$ algorithm is $\left(1.5+\frac{1}{5k}\right) \cdot OPT(D_k) + 3\cdot k $.
We have also proved that the asymptotic approximation ratio for the $NFk$ algorithm is $2$.
We have also demonstrated that the generalization of efficient approximation algorithms like Fernandez de la Vega-Lueker and Karmarkar Karp algorithms solves $k$BP in $(1 + 2 \cdot \epsilon)\cdot OPT(D_k) + k$ and $OPT(D_k) + O(k \cdot \log^2 {OPT(D)})$ bins respectively in polynomial time.
We have also shown the practical efficacy of $FFk$ and $FFDk$ in solving the fair electricity distribution problem. 

Given the usefulness of $k$-times bin-packing to electricity division, an interesting open question is how to determine the optimal value of $k$ --- the $k$ that maximizes the fraction of time each agent is connected --- the fraction $\frac{k}{OPT(D_k)}$. 
Note that this ratio is not necessarily increasing with $k$. For example, consider the demand vector $D = \{11,12,13\}$:
\begin{itemize}
	\item For $k=1$, $OPT(D_k)=2$, so any agent is connected $\frac{1}{2}$ of the time. 
	\item For $k=2$, $OPT(D_k)=3$, so any agent is connected  $\frac{2}{3}$ of the time.
	\item For $k=3$, $OPT(D_k)=5$, so any agent is connected only $\frac{3}{5}<\frac{2}{3}$ of the time. 
\end{itemize}

We have also shown that, for egalitarian allocation of watts, if each bin is allocated the same amount of time, then there is no upper bound on $k$ that depend only on $n$. That motivates the development of four heuristic algorithms (HA1, HA2, HA3, HA4) for egalitarian allocation of supply. 
Among the solutions computed by each of the heuristic algorithms, a leximin preferred solution is returned.

Some other questions left open are
\begin{enumerate}
	\item 	To bridge the gap in the approximation ratio of $FFk$, between the conjectured lower bound $1.375$ and the upper bound $\left(1.5+\frac{1}{5k}\right) \cdot OPT(D_k) + 3\cdot k $, which converges to $1.5$. 
	\item To prove or disprove that the conjectured bound $\frac{11}{9}OPT + \frac{6}{9}$ is tight for $FFDk$.
\end{enumerate}

Another direction of possible research is to extend other heuristic/approximation algorithms to solve \kbp and develop a composite heuristic that randomly selects a heuristic (out of a set of heuristics) for each item and packs accordingly. It then use the $1-$Opt procedure to determine the number of bins as follows:
Determine the most empty bin. For each of the other bins, check if it can be merged with the most empty bin without violating \kbp constraints. One can repeat this $1-$Opt procedure for some arbitrary number of times. Such combination of composite heuristic and $1-$Opt has been discussed in {\cite{Hall_Ghosh_Kankey_Narasimhan_Rhee_1988}}.

Another direction of search could be to generalize the variable neighborhood search schemes (VNS) such as presented in {\cite{Fleszar_Hindi_2002}}. The VNS scheme presented in {\cite{Fleszar_Hindi_2002}} used an heuristic algorithm based on 
minimum bin slack (also called as  MBS, it determines and selects a subset of items (from the remaining set of items) that can fill the remaining space in a bin and leaves minimum remaining space).
Initial solutions generated from heuristic algorithms like $FFD$ are found to be inferior. 
A future research direction could be to extend the VNS scheme to solve \kbp. Also, it will be quite interesting to see the performance of VNS scheme when the initial solution is generated from $FFk$ and $FFDk$ as we have seen that as $k$ increases the solution quality generated from $FFk$ and $FFDk$ also increases (as shown in the experimental results related to the conjectured upper bound on $FFk$ and $FFDk$).

\begin{credits}
\subsubsection{\ackname} We thank the reviewers of SAGT 2024 for their constructive comments. This research is partly funded by the Israel Science Foundation grant no. 712/20.
\end{credits}

\bibliographystyle{splncs04}
\bibliography{Bibliography/kbp-paper}

\end{document}